\documentclass[12pt,notitlepage]{article}%
\usepackage[top=1.5in, bottom=1.5in, left=1.5in, right=1.5in]{geometry}
\usepackage{amsthm}
\usepackage{bbm}
\usepackage{blkarray}
\usepackage{CJK}
\usepackage{nicematrix}
\usepackage{setspace}
\usepackage{dsfont}
\usepackage{eurosym}
\usepackage{subfigure}
\usepackage{xcolor}
\usepackage{amssymb}
\usepackage{url}
\usepackage{supertabular}
\usepackage{graphicx}
\usepackage{float}
\usepackage[colorlinks,linkcolor=links,citecolor=cites,urlcolor=MyDarkBlue]%
{hyperref}
\usepackage{amsfonts}
\usepackage{enumitem}
\usepackage{amsmath}
\usepackage{amstext}
\usepackage{appendix}
\usepackage{tikz}
\usepackage{rotating}
\usepackage{lscape}
\usepackage[longnamesfirst]{natbib}
\usepackage{chngpage}
\usepackage{esint}
\usepackage{bm}
\usepackage{mathrsfs}
\providecommand{\U}[1]{\protect\rule{.1in}{.1in}}
\newtheorem{theorem}{Theorem}

\newtheorem{assumption}{Assumption}

\newtheorem{condition}{Condition}

\newtheorem{corollary}{Corollary}
 
\newtheorem{example}{Example}

\newtheorem{lemma}{Lemma}

\newtheorem{proposition}{Proposition}
\newtheorem{remark}{Remark}

\newtheorem{distributionfamily}{Distribution family}
\newcommand{\RNum}[1]{\uppercase\expandafter{\romannumeral #1\relax}}
\DeclareMathOperator*{\argmax}{argmax}

\DeclareMathOperator*{\argmin}{argmin}

\usepackage{indentfirst}
\definecolor{MyDarkBlue}{rgb}{0,0.08,0.45}
\definecolor{cites}{HTML}{324b13}
\definecolor{links}{HTML}{1a663b}
\definecolor{MyLightMbuyera}{cmyk}{0.1,0.8,0,0.1}
\hypersetup{
colorlinks,citecolor=blue,filecolor=black,linkcolor=blue,urlcolor=blue
}
\usepackage{chngcntr}
\counterwithout{equation}{section}
\usepackage{titling}
\begin{document}
\title{Optimally Selling Heterogeneous Objects by Quantity\thanks{I am indebted to my supervisor Charles Zheng and committee members  Nail Kashaev and Bruno Salcedo   for their invaluable guidance and encouragement.   I also thank Yeon-Koo Che,  Yijia Ding,  Maria Goltsman, Kevin He, Todd Kaplan,  David Lagziel, Robert McCann, Andrew McClellan, Fedor
Sandomirskiy,   Dazhong Wang, Hanzhe Zhang,   and especially Tim Conley and Greg Pavlov, for helpful discussions and comments throughout the development of the project.   All remaining errors are mine. 
}}
\author{{Xun Chen}%
\thanks{Department of Economics, the University of Western Ontario. E-mail: xche225@uwo.ca
}}

\maketitle

 \begin{abstract}
I study the optimality of selling $K$ heterogeneous objects to a buyer with private values through a quantity-based menu: the seller posts  a price for  each number of objects, and the buyer chooses a quantity, pays its price, and selects any objects up  to that quantity. I propose a pricing method that aggregates critical values for second differences of quantity  values---the highest values attainable at each quantity. Critical values can be determined separately when each second difference's conditional inverse-hazard ratio depends only on that difference. For additive values, I show that the resulting menu is optimal among all incentive-compatible and individually rational mechanisms under explicit distributional and support conditions, including ex ante symmetry  across object labels. A separate result applies the same pricing method to obtain an optimal menu for nonadditive bundle values without ex ante symmetry, assuming free disposal, independence between the quantity-value vector and the vector of bundle shortfalls from their respective quantity values, and corresponding distributional and support conditions. For an arbitrary number of objects, applications yield explicit optimal menus  that involve quantity discounts, sparse quantity options, or two-part tariffs. These results differ from those establishing the optimality of pure bundling in \cite{HaphpanahHartline2021}, nested bundling in \cite{10.1257/aer.20230455}, separate sales in \cite{Kwak2026}, and core-peripheral bundling in \cite{Zhong2026}. 

\vspace{0.15cm}%a

\noindent\emph{Keywords}: monopolistic pricing; multidimensional
screening; quantity-based menus; optimal mechanisms; additive values; bundle values \\
\end{abstract}

\section{Introduction}

\begin{spacing}{1.16}

A seller wishes to sell several heterogeneous objects to a buyer whose values are private. The seller only knows the distribution of values and seeks a mechanism that maximizes expected revenue subject to incentive compatibility and individual rationality constraints. With one object, an optimal mechanism posts a take-it-or-leave-it price \citep{RileyZEckhauser1983}. With multiple objects, exact solutions are known for particular settings, including i.i.d. uniform values for up to six objects \citep{giannakopoulos_koutsoupias_2018_sja} and environments in which pure bundling, separate sales, nested bundling, or core-peripheral bundling is optimal \citep{Giannakopoulos2015,HaphpanahHartline2021,Kwak2026,10.1257/aer.20230455,Zhong2026}. The general problem remains difficult. Optimal revenue need not increase when the buyer's value distribution shifts upward \citep{hart2015maximal}, randomization may be necessary \citep{THANASSOULIS2004217,MANELLI20061,MANELLI2007153,Pavlov2011}, and an optimal mechanism may require infinitely many menu options \citep{DaskalakisDeckelbaumTzamos2017,HART2019991}.

This study develops a  pricing method and establishes conditions under which the resulting menu maximizes expected revenue among all feasible incentive-compatible and individually rational mechanisms. The  method is based on second differences of quantity values---the highest values attainable at each quantity. Assuming each second difference's conditional inverse-hazard ratio depends only on that difference, I determine critical values separately by setting  the corresponding virtual-surplus functions to zero and obtain quantity prices through two successive summations.  This pricing method applies to both additive object values and nonadditive bundle values.   
 The resulting prices form a quantity-based menu: the seller sets a price for each number of objects; the buyer chooses a quantity, pays the corresponding price, and selects any objects up to that quantity  \citep[see, e.g.,][]{HittChen2005}. For example, a television provider can charge by the number of channels that the consumer selects from an eligible list.\footnote{See, e.g.,  the Pick Packs offered  by WTC Communications in Canada: \url{https://www.wtccommunications.ca/for-home/television/channel-lineup}.} Quantity-based menus also appear in city attraction passes,\footnote{See, e.g., Go City's New York Explorer Pass, which is valid for one person and permits only one visit to each selected attraction: \url{https://gocity.com/en/new-york/passes/explorer}. For these usage restrictions, see \url{https://gocity.com/en/new-york/terms}.} cinema passes,\footnote{See, e.g., the Joint Film Pass offered by Lighthouse in Poole, United Kingdom, which includes five pairs of tickets for standard screenings of the holder's choice, with at  most one pair redeemable per screening: \url{https://www.lighthousepoole.co.uk/visit-us/film-pass/}.}  and video game bundles.\footnote{See, e.g., Fanatical's build-your-own Retro Revival Bundle: \url{https://www.fanatical.com/en/pick-and-mix/build-your-own-retro-revival-bundle}.}

 %certain financial derivatives,\footnote{See, e.g., ANZ's Flexi Cap, which allows buyers to purchase a specified number of interest-rate protection rights and subsequently choose the periods in which to exercise  them: \url{https://www.anz.com/Documents/FXOnline/2003_12_19_Flexi\%20Cap.pdf}. Exercise decisions occur sequentially over time.}

My first main result establishes optimality of the resulting menu among all feasible incentive-compatible and individually rational mechanisms for additive values (Theorem \ref{EndogenousGapFlowGap} and Proposition \ref{ClosedFormExactOptimality}). Under additivity, the first differences of quantity values are ranked object values, and the second differences are gaps between adjacent ranked values.\footnote{These gaps also appear in \citet[proof of Proposition~2]{BikhchandaniMishra2024}.} Ex ante symmetry (Assumption \ref{ex}) permits the rank-allocation reduction established by \citet{BikhchandaniMishra2024}, while gap rent regularity (Assumption \ref{MightAsWelllWalkingOnTheSun}) allows separate critical pricing. The density and support restrictions ensure that quantities maximizing value net of accumulated critical values also maximize quantity virtual surplus (Assumption \ref{TopRankSignAlignment}) and that information rent becomes zero at the relevant lower boundaries (Assumption \ref{YouDontDeserveIt}). The menu therefore attains the revenue upper bound. The state-dependent extension provides a sufficient test for the optimality of a proposed quantity-based menu under a given value distribution (Proposition \ref{StateDependentExactOptimality}).  The results also extend to multiple units with decreasing private marginal values (Remarks~\ref{MultipleUnitsRevenueEquivalence} and~\ref{HomogeneousUnitsWithoutExchangeability}).

My second main result establishes the optimality of the same pricing construction for nonadditive bundle values without ex ante symmetry (Proposition \ref{NonadditiveQuantityVerification}). For bundle values, the quantity value at $n$ is the largest order statistic among size-$n$ bundle values.  I assume free disposal, the bundle-value counterparts of Assumptions \ref{MightAsWelllWalkingOnTheSun}--\ref{YouDontDeserveIt}, and independence between the vector of highest bundle values across quantities and the vector of within-quantity shortfalls (Assumption \ref{BundleValueIndependence}). I derive the revenue upper bound directly for all feasible incentive-compatible and individually rational mechanisms. Under a quantity-based menu, the buyer's self-selection eliminates the shortfall term, and the corresponding verification conditions ensure that the menu attains the bound. For $K\geq2$, the independence condition excludes full-dimensional additive distributions, so this result does not subsume the additive baseline. An extension retains separate critical pricing provided each active second difference's conditional inverse-hazard ratio is invariant to the other second differences and all shortfalls (Assumption \ref{BundleConditionalInverseHazardInvariance} and Corollary \ref{ConditionalBundleCriticalPricing}).

For additive values, I derive explicit optimal menus for any number of objects in four families of ex ante symmetric distributions. Uniform and Gaussian gap specifications with joint support restrictions yield menus offering only $K-1$ objects or the full bundle (Distribution Family \ref{GaussianGapDensityFamily}). Exponential initial segments in one gap density yield pure bundling, a common price per object, or sparse menus (Distribution Family \ref{SingleActiveGapDensityFamily}). Independent reverse-Beta gaps yield quantity discounts (Distribution Family \ref{FullActiveGapDensityFamily}). Exponential initial segments in the first and terminal gap densities, together with joint support restrictions, yield two-part tariffs (Distribution Family \ref{TwoActiveGapDensityFamily}). The induced object values can be correlated. For two objects with nonadditive values, I also derive optimal quantity prices under a jointly uniform bundle-value distribution on a restricted support (Example~\ref{UniformBundleFreeDisposal}). 

The following three two-object examples specify joint densities directly in the original object-value space. Example~\ref{JointUniformLeadingExample} is the uniform specification of Distribution Family~\ref{GaussianGapDensityFamily}.\footnote{For a related uniform ordered-value example, see \citet[p.~589]{Armstrong2016}.} Example~\ref{UnboundedLeadingExample} illustrates the unbounded extension of Distribution Family~\ref{SingleActiveGapDensityFamily}. Example~\ref{StateDependentNonpurchaseExample} specifies a distribution with full support on the unit square that may describe specialized interests.\footnote{For example, \citet[Study~2]{BellMarshall2003} find that consumers more involved in food preparation and eating show larger differences in liking ratings across variants of the same food.} This example lies outside all four distribution families; optimality follows from the state-dependent verification in Proposition~\ref{StateDependentExactOptimality}.

\par\smallskip\noindent\begin{minipage}{\linewidth}
\setstretch{1.16}
\begin{example}[Jointly uniform object values with support restrictions]\label{JointUniformLeadingExample}
Let $K=2$. The joint density is 
$$
f(\theta)=\frac{2}{17}\mathbf1_\Theta(\theta),\;\;\Theta=\left\{\theta\in[1,6]^2:
\begin{array}{l}
|\theta_1-\theta_2|\leq2,\; \min\{\theta_1,\theta_2\}\leq4,\\
\max\{\theta_1-3,\theta_2-3,\theta_1+\theta_2-5\}\geq0
\end{array}\right\}.
$$ The optimal menu charges $3$ and $5$ for quantities $1$ and $2$, respectively. 
\end{example}

\begin{example}[Exponential object values with bounded differences]\label{UnboundedLeadingExample}
Let $K=2$. Draw two independent $\mathrm{Exp}(1)$ values and condition on the support restriction $|\theta_1-\theta_2|\leq1/2$. The resulting joint density is
$$
f(\theta)=\frac{e^{-(\theta_1+\theta_2)}}{1-e^{-1/2}}\mathbf1_\Theta(\theta),\;\;
\Theta=\{\theta\in\mathbb R_+^2:|\theta_1-\theta_2|\leq1/2\}.
$$ The optimal menu charges $1/2$ and $1$ for quantities $1$ and $2$, respectively.
\end{example}
\begin{figure}[H]
\setstretch{1}
\centering
\includegraphics[width=0.85\textwidth]{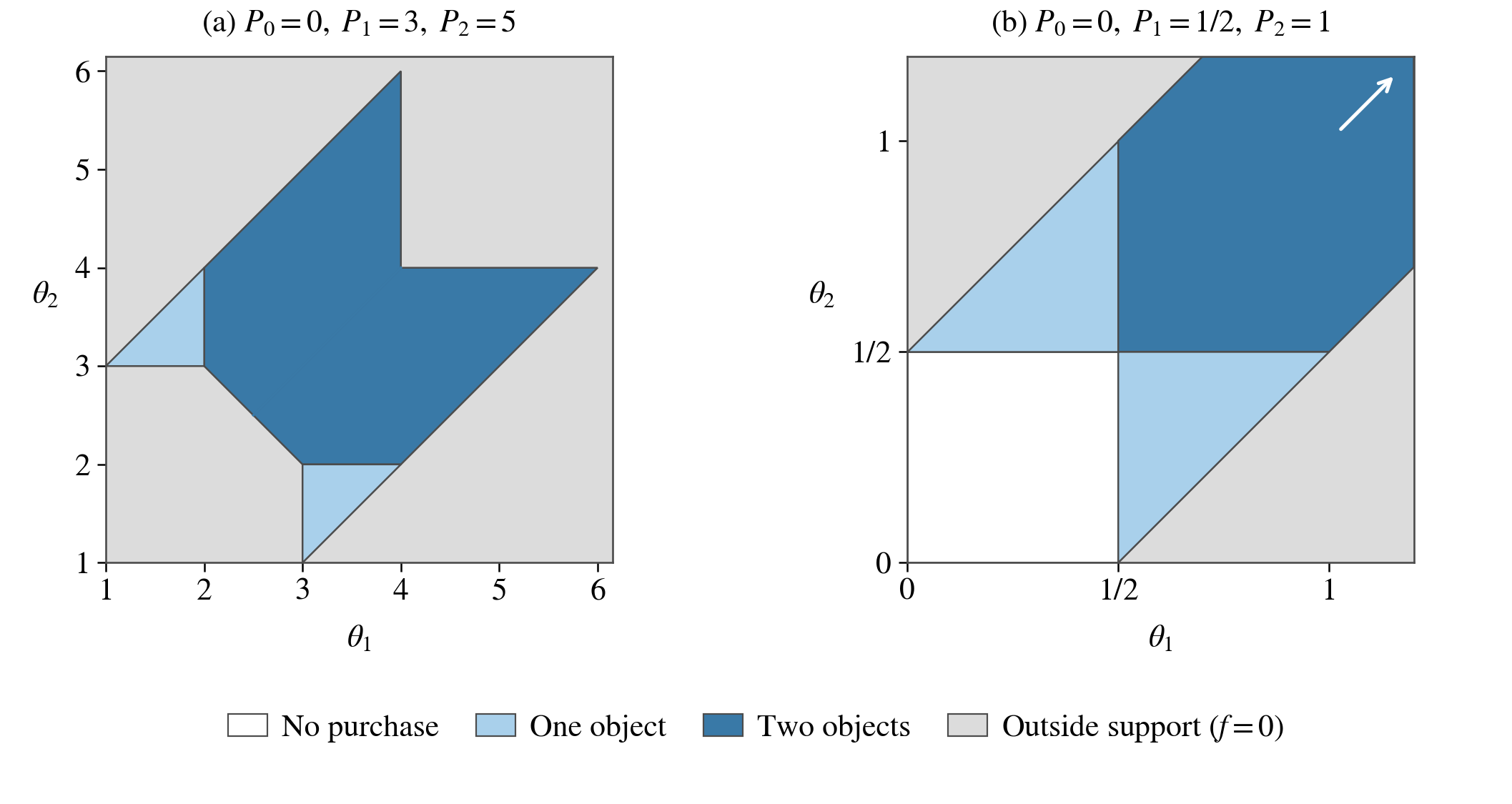}
\caption{Optimal quantity choices in Examples~\ref{JointUniformLeadingExample} and \ref{UnboundedLeadingExample}, shown in panels (a) and (b), respectively. The arrow in panel (b) indicates that the support extends to infinity.}
\label{JointUniformLeadingFigure}
\label{UnboundedLeadingFigure}
\end{figure}
\end{minipage}\par\smallskip
\par\smallskip\noindent\begin{minipage}{\linewidth}
\setstretch{1.16}
\begin{example}[Object values with a  centrally antisymmetric density perturbation]\label{StateDependentNonpurchaseExample}
Let $K=2$ and $\Theta=[0,1]^2$. The joint density is
$$
f(\theta)=\mathbf1_\Theta(\theta)
\begin{cases}
1-|\theta_1-\theta_2|,&\theta_1+\theta_2<1,\\
1+|\theta_1-\theta_2|,&\theta_1+\theta_2\geq1.
\end{cases}
$$
This exchangeable distribution has full support on $\Theta$. For a given total value below $1$, the density decreases with $|\theta_1-\theta_2|$; for a total value at least $1$, it increases with this difference. This distribution can describe values for goods associated with specialized interests: a buyer with little interest may value both goods similarly and modestly, whereas an enthusiast may value them more highly while expressing a stronger preference for one over the other. The optimal menu offers only quantity $2$ at price $1$.
\end{example}

\begin{figure}[H]
\setstretch{1}
\centering
\includegraphics[width=0.85\textwidth]{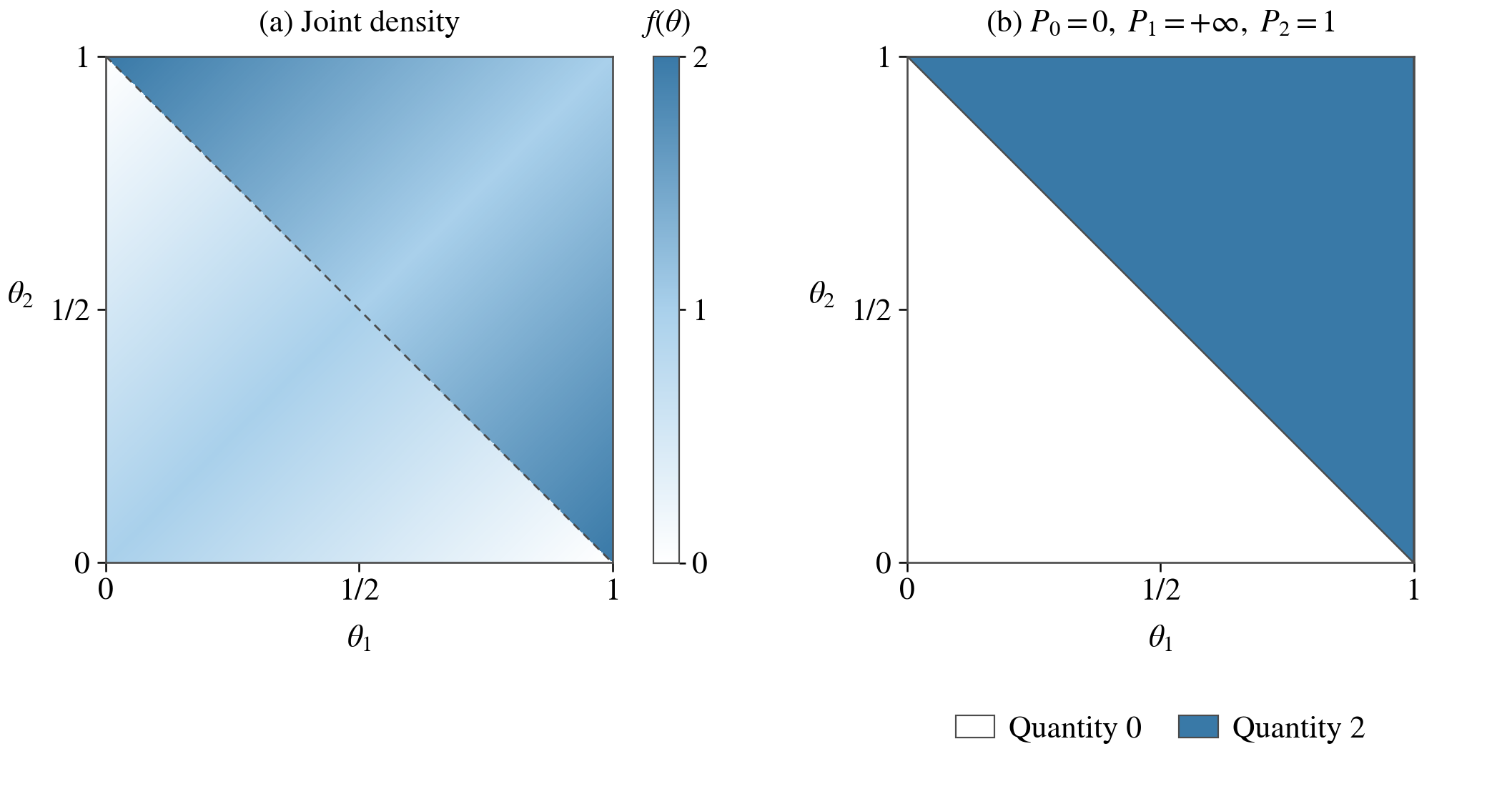}
\caption{Joint density and optimal quantity choices in Example~\ref{StateDependentNonpurchaseExample}. Panel (a) shows the joint density, with darker shades indicating higher density; the dashed line is $\theta_1+\theta_2=1$. Panel (b) shows the optimal purchase regions: the buyer chooses quantity $2$ when $\theta_1+\theta_2>1$ and quantity $0$ otherwise.}
\label{StateDependentDensityHeatmap}
\label{StateDependentAllocationsFigure}
\end{figure}
\end{minipage}\par\smallskip

\noindent\textbf{Related literature.} My analysis contributes to revenue-decomposition methods in multidimensional mechanism design \citep{Armstrong1996,HaphpanahHartline2021,Kwak2026}. It also relates to \cite{Pavlov2011}, who adopts valuation differences to study optimal mechanisms for two objects. My revenue decomposition is based on second differences of quantity values in both the additive and nonadditive environments; under additivity, these coordinates coincide with gaps between ranked object values. 

My study also relates to quantity-based menus, also referred to as bundle-size pricing or cardinality bundling. These menus restrict deterministic mixed bundling to prices that depend only on bundle size.  \cite{HittChen2005} establish conditions under which customized bundling reproduces an optimal deterministic mixed-bundling solution. \cite{Chuetal2011} show that bundle-size pricing often closely approximates mixed-bundling profits. \cite{Wuetal2019} derive optimal cardinality prices under a single-crossing condition, while \cite{Abdallah2021} analyze bundle-size pricing as the number of products grows. For both additive and nonadditive values, I establish conditions under which quantity-based menus maximize expected revenue among all feasible incentive-compatible and individually rational mechanisms. 

Related work also establishes conditions for optimal bundling. \cite{MANELLI20061} provide sufficient conditions for deterministic bundle-price schedules to maximize expected revenue among all incentive-compatible and individually rational mechanisms for a general number of objects, with applications to two and three objects. For pure bundling, \cite{MENICUCCI201533} and \cite{HaphpanahHartline2021} establish sufficient conditions, while \cite{Ghili2023} provides a characterization with nonadditive values under monotonic preferences and single-peaked profits. \cite{BergemannBonattiHauptSmolin2021} characterize optimal upgrade bundles and prices under ordering conditions on a finite type space. \cite{10.1257/aer.20230455} establishes conditions for optimal nested menus, while \cite{feng2025constructive} presents a constructive characterization of optimal bundling. \cite{Zhong2026} establishes optimality results under complementary preferences.  

The analysis also relates to rank-based mechanisms and multi-unit screening. \citet{BikhchandaniMishra2024} establish the revenue equivalence applied in my additive baseline and derive sufficient conditions for revenue monotonicity. \citet{BIKHCHANDANI2022} study two identical objects, while \citet{Pavlov2025selling} studies two units customizable into two versions, with the same private value for both units of each version. \citet{MalakhovVohra2009} and \cite{Devanuratal2020} study private per-unit values and quantity demand. My extensions allow distinct decreasing private marginal values across successive units of each product and derive optimal quantity prices under the corresponding distributional and support conditions. 

My verification result relates to duality-based methods \citep{DaskalakisDeckelbaumTzamos2017,KleinerManelli2019,cai2016duality}. I verify optimality by checking quantity virtual-surplus maximization and zero lower-boundary rent. It also relates to marginal-revenue methods \citep{Bulow1989,ASFHH2013}. Whereas \citet{Chen2025tale} analyze changes in bundle prices, my marginal comparisons concern virtual-surplus differences between consecutive quantities. 

\noindent\textbf{Paper organization.} Sections \ref{PremitivesAchaosIsALadder}--\ref{TheChangeWeNeed} introduce the model, assumptions, and rank-allocation reduction. Section \ref{GapFlowUpperBound} derives the revenue bound, the verification results, and the multiple-unit and state-dependent extensions. Section \ref{ExactOptimalityApplications} presents applications, and Section \ref{NonadditiveQuantityExtension} extends the analysis to bundle values. 

\end{spacing}

\section{Primitives and definitions}\label{PremitivesAchaosIsALadder}
\subsection{Additive environment, ranked values, and rank gaps}
In the additive baseline, a seller wishes to sell $K$ heterogeneous objects, indexed by $\mathcal K:=\{1,\ldots,K\}$, to a buyer. Both parties are risk neutral.  For every object $k\in\mathcal K$, the buyer's private value is $\theta_k\in[\underline\theta,\overline\theta]\subseteq\mathbb R_+$. Define the ambient type space by
$\overline\Theta:=[\underline\theta,\overline\theta]^K.$
The buyer's private values are represented as a type vector  $\theta:=(\theta_1,\dots,\theta_K)\in\Theta\subseteq\overline\Theta$. For every bundle $\widetilde{\mathcal K}\subseteq\mathcal K$, the buyer's value is $\sum_{k\in\widetilde{\mathcal K}}\theta_k$. All vectors are column vectors.

The type $\theta$ is drawn from a distribution $F$ with compact support $\Theta$.\footnote{Lemma \ref{UnboundedSingleActiveExactness} and Examples \ref{UnboundedLeadingExample} and \ref{ConditionalExponentialObjectValues} allow unbounded support on $\mathbb R_+^K$ while maintaining finite expected total value.} The support is permutation invariant, i.e., $\theta\in\Theta$ implies $\rho\theta\in\Theta$ for every $\rho\in\Sigma_{\mathcal K}$, where $\Sigma_{\mathcal K}$ is the set of all permutations of $\mathcal K$. The distribution admits a density $f$ with respect to $K$-dimensional Lebesgue measure on $\overline\Theta$. The density $f$ is positive almost everywhere on $\operatorname{int}\Theta$ and zero almost everywhere on $\overline\Theta\setminus\Theta$. The outside options of the seller and buyer are normalized to zero. Transfers from the buyer to the seller are nonnegative.

For each $\theta$, let $\theta^\downarrow:=(\theta_{(1)},\dots,\theta_{(K)})$ denote its nonincreasing rearrangement, with $\theta_{(r)}$ the $r$th-highest value. Ties occur with probability zero.\footnote{A fixed order over object labels resolves ties in the definition of $\Gamma(\theta)$.} Let $\Gamma(\theta)$ be the associated permutation matrix such that $\theta^\downarrow=\Gamma(\theta)\theta$. For any object-label allocation vector $q(\theta)\in[0,1]^K$, define the co-sorted rank allocation by $x(\theta):=\Gamma(\theta)q(\theta)$. Denote by $\langle\cdot,\cdot\rangle$  the Euclidean inner product.
Let $e_0:=\mathbf{0}\in\mathbb R^K$ and, for $n=1,\dots,K$, define $e_n$ as the vector with ones in its first $n$ entries and zeros elsewhere, i.e.,
$$e_n:=(\underbrace{1,\dots,1}_{n},\underbrace{0,\dots,0}_{K-n})^\top.$$

Next, I express ranked values in rank-gap coordinates. Let $\Theta^\downarrow:=\{\theta^\downarrow:\theta\in\Theta\}$ be the sorted support of $\theta^\downarrow$, and let $G$ and $g$ denote the distribution and density of $\theta^\downarrow$. For every $\theta^\downarrow\in\Theta^\downarrow$, define the rank gaps by
$$\delta_j:=\theta_{(j)}-\theta_{(j+1)},\;\;j=1,\dots,K,$$
with $\theta_{(K+1)}:=0$.\footnote{The terminal gap $\delta_K=\theta_{(K)}$ is measured relative to zero.} Hence, we have $\theta_{(r)}=\sum_{j=r}^K\delta_j$ for $r=1,\dots,K$.
Let $\Omega$ be the support of $\delta:=(\delta_1,\dots,\delta_K)$, and let $\gamma$ be the density of $\delta$. For every $j=1,\dots,K$, let $I_j:=\{\delta_j:\delta\in\Omega\}$ denote the coordinate projection of $\Omega$. 

Define the unit upper-triangular matrix
\begin{equation*}
\mathbf T:=
\begin{pmatrix}
1&1&\cdots&1\\
0&1&\cdots&1\\
\vdots&\vdots&\ddots&\vdots\\
0&0&\cdots&1
\end{pmatrix}.
\end{equation*}
The ranked and original value vectors satisfy $\theta^\downarrow=\mathbf T\delta$ and $\theta=\Gamma(\theta)^\top\mathbf T\delta$, with $\delta=\mathbf T^{-1}\theta^\downarrow$. Since $\det\mathbf T=1$, the transformation has unit Jacobian. Lemma \ref{ChangeWeNeed} in the Appendix shows the corresponding change-of-variables identity, so $\gamma(\delta)=g(\theta^\downarrow(\delta))$ for Lebesgue-almost every $\delta\in\Omega$.

\subsection{Quantity values and virtual surpluses}
For $n=0,\dots,K$, define the quantity value by
$$V_n(\delta):=\sum_{j=1}^K\min\{n,j\}\delta_j.$$
Conditional on the other gaps, a feasible increase of $\delta_j$ by $t$ raises the first $j$ ranked values by $t$ and increases $V_n$ by $\min\{n,j\}t$. The first differences are ranked values, $\theta_{(n)}=V_n(\delta)-V_{n-1}(\delta)$ for $n=1,\dots,K$. The second differences are rank gaps: $\delta_j=2V_j(\delta)-V_{j-1}(\delta)-V_{j+1}(\delta)$ for $j<K$, with $\delta_K=V_K(\delta)-V_{K-1}(\delta)$. Let $\bm V:=(V_n(\delta))_{n=1}^K$ be the column vector of quantity values. Then
\begin{equation*}
\begin{aligned}
\bm V&=\mathbf T^\top\theta^\downarrow
=\mathbf T^\top\mathbf T\delta,\\
\delta&=(\mathbf T^\top\mathbf T)^{-1}
\mathbf T^\top\theta^\downarrow
=(\mathbf T^\top\mathbf T)^{-1}\bm V.
\end{aligned}
\end{equation*}

For each gap $j$, let $Z_j$ denote the remaining gap coordinates. All the almost-everywhere statements about $Z_j$ are with respect to its marginal distribution. When the conditional section is a single nonempty interval, write
\begin{equation}\label{ConditionalGapSection}
\{z:(z,Z_j)\in\Omega\}=[\underline\delta_j(Z_j),\overline\delta_j(Z_j)]
\end{equation}
and define its conditional inverse-hazard ratio by
\begin{equation}\label{ConditionalGapInverseHazard}
R_j(z,Z_j):=\frac{\int_z^{\overline\delta_j(Z_j)}\gamma(s,Z_j)ds}{\gamma(z,Z_j)}.
\end{equation}
This is analogous to the single-object inverse-hazard ratio $(1-F)/f$ in \cite{Myerson81}.
When the conditional inverse-hazard ratio is state independent, write $R_j(z,Z_j)=R_j(z)$. 

For a given one-shot mechanism, information rent is the buyer's expected net utility conditional on his type. To relate this rent to allocations along different rank-gap directions, I introduce an auxiliary weight vector $a$ with $a_j$ corresponding to gap $j$. Denote the set of weight vectors by $\Delta_K:=\{a\in\mathbb R_+^K:\sum_{j=1}^Ka_j=1\}$.\footnote{Section~\ref{GapFlowUpperBound} explains why these weights are nonnegative and sum to one.} For every $a\in\Delta_K$, define the rank-gap virtual surplus by $$\phi_j^a(z):=z-a_jR_j(z),\;\; j=1,\dots,K.$$ 
For every rank $r$, define the rank virtual surplus by
\begin{equation}\label{RankVirtualSurplus}
\Phi_{(r)}^a(\delta):=\sum_{j=r}^K\phi_j^a(\delta_j),\;\;r=1,\dots,K.
\end{equation}
For every quantity $n$, define the cumulative quantity virtual surplus by
\begin{equation}\label{QuantityVirtualSurplus}
\Psi_n^a(\delta):=\sum_{r=1}^n\Phi_{(r)}^a(\delta)=\sum_{j=1}^K\min\{n,j\}\phi_j^a(\delta_j),\;\;n=0,\dots,K,
\end{equation}
with $\Psi_0^a(\delta)=0$. For $n=1,\dots,K$,  the rank virtual surplus is the change in quantity virtual surplus when quantity increases from $n-1$ to $n$, i.e., $\Phi_{(n)}^a(\delta)=\Psi_n^a(\delta)-\Psi_{n-1}^a(\delta)$. 
Equivalently, quantity virtual surplus is the value of the $n$ highest-valued objects minus a weighted rent: 
$$\Psi_n^a(\delta)=\sum_{r=1}^n\theta_{(r)}-\sum_{j=1}^K\min\{n,j\}a_jR_j(\delta_j),$$
where the coefficient $\min\{n,j\}$ counts the allocated ranks affected by increasing gap $j$. 
Likewise, let $\bm\phi^a:=(\phi_j^a(\delta_j))_{j=1}^K$ and $\bm\Psi^a:=(\Psi_n^a(\delta))_{n=1}^K$ be the corresponding column vectors. Then we have
\begin{equation*}
\bm\Psi^a=\mathbf T^\top\mathbf T\bm\phi^a,
\;\;
\bm\phi^a=(\mathbf T^\top\mathbf T)^{-1}\bm\Psi^a.
\end{equation*}

Next, fix $a\in\Delta_K$. For each gap $j$ with $a_j>0$, a critical cutoff (critical value) is a point $c_j\in\operatorname{int}(I_j)$ satisfying
\begin{equation}\label{ZeroCrossingVS}
c_j-a_jR_j(c_j)=0.
\end{equation}
Set $c_j=0$ whenever $a_j=0$, and write $c=(c_1,\dots,c_K)$ for any resulting critical vector. 
Let $\mathcal C_\gamma$ be the set of all critical vectors obtained as $a$ ranges over $\Delta_K$. For each $c\in\mathcal C_\gamma$, denote the corresponding weight vector by $a^c$. This weight vector is unique, and \eqref{ZeroCrossingVS} implies
\begin{equation*}
a_j^c=
\begin{cases}
\dfrac{c_j}{R_j(c_j)},&c_j>0,\\
0,&c_j=0.
\end{cases}
\end{equation*}
 We refer to  $V_n(c)=\sum_{j=1}^K\min\{n,j\}c_j$ as the cumulative critical value for quantity $n$.

For each quantity $n\in\{0,\dots,K\}$, define its critical comparison cell by 
\begin{equation*}
\mathscr C_n(c)
:=
\left\{
\delta\in\Omega:
\sum_{j=1}^K
\big(\min\{n,j\}-\min\{m,j\}\big)(\delta_j-c_j)
\geq0
\;\;\forall m=0,\dots,K
\right\}.
\end{equation*}
The cell $\mathscr C_n(c)$ contains the types for which $n$ maximizes $V_n(\delta)-V_n(c)$ over all quantities.
Define the set of support-relevant quantities, whose critical comparison cells have positive probability, by 
\begin{equation*}
\mathcal N(c)
:=
\left\{
n\in\{0,\dots,K\}:
\int_{\mathscr C_n(c)}\gamma(\delta)d\delta>0
\right\}.
\end{equation*}
The active gap set $\{j:a_j^c>0\}$ identifies the rent identities included in the weighted average. Only these directions contribute lower-boundary  rents. The zero-weighted  gaps remain part of the buyer's private information and continue to enter the quantity values.  

\section{Assumptions and distribution families}\label{AssumptionsAndFamilies}

\subsection{Baseline assumptions}

\begin{assumption}[Exchangeability]\label{ex}
The density $f$, extended by zero outside $\Theta$, is exchangeable, i.e., $f(\theta_1,\dots,\theta_K)=f(\theta_{\rho(1)},\dots,\theta_{\rho(K)})$ for Lebesgue-almost every $\theta\in\overline\Theta$ and every $\rho\in\Sigma_{\mathcal K}$.
\end{assumption}
Assumption \ref{ex} includes the cases of i.i.d. values and allows correlation through a common  factor that affects all objects symmetrically \citep[see, e.g.,][]{BikhchandaniMishra2024}.

Assumption \ref{ex}  is imposed   in the additive heterogeneous-object analysis and implies $g(\theta^\downarrow)=K!f(\theta^\downarrow)$ for almost every strictly ordered ranked type. The bundle-value formulation in Section \ref{NonadditiveQuantityExtension} and the direct homogeneous-unit formulation in Remark \ref{HomogeneousUnitsWithoutExchangeability} do not require this ex ante symmetry. Under Assumption \ref{ex}, the conditional inverse-hazard ratio at the gap level in \eqref{ConditionalGapInverseHazard} can also be expressed in the original object-value coordinates as follows. For almost every $\theta\in\Theta$ with $f(\theta)>0$ and induced gaps $\delta$, 
\begin{equation}\label{RinTheta}
R_j(\delta_j,Z_j)=\frac{1}{f(\theta^\downarrow)}\int_0^{\overline\delta_j(Z_j)-\delta_j}f(\theta^\downarrow+t e_j)\,dt.
\end{equation}
The shift $t e_j$ increases the $j$ highest object values by $t$, leaving the others unchanged. 

\begin{assumption}[Gap rent regularity]\label{MightAsWelllWalkingOnTheSun}
For every gap $j$ and almost every $Z_j$, the conditional section is the single interval in \eqref{ConditionalGapSection}. On the relative interior of every nondegenerate section, the conditional inverse-hazard ratio is well defined and satisfies $R_j(z,Z_j)=R_j(z)$ for every relevant $z$. Moreover, $R_j$ admits a continuous extension to $I_j$.
\end{assumption}

The following conditions on the gap density and its support imply Assumption \ref{MightAsWelllWalkingOnTheSun}.

{
\renewcommand{\theassumption}{2$'$}
\renewcommand{\theHassumption}{2prime}
\begin{assumption}[Endpoint-anchored product density]\label{FlyMeToTheMoon}
\textup{(i)} The support $\Omega$ is nonempty and compact, $\operatorname{int}\Omega$ is path connected, and $\Omega=\operatorname{cl}(\operatorname{int}\Omega)$. For each $j$, $I_j=[\ell_j,\overline\delta_j]$ with $\ell_j<\overline\delta_j$. For almost every $Z_j$, the conditional section has the form \eqref{ConditionalGapSection}, with upper endpoint $\overline\delta_j$ and lower endpoint $\underline\delta_j(Z_j)\geq\ell_j$.

\textup{(ii)} The gap density satisfies
\begin{equation}\label{PRoductDensitty}
\gamma(\delta)=C\mathbf1_\Omega(\delta)\prod_{j=1}^K\psi_j(\delta_j),
\end{equation}
where $C:=\left(\int_\Omega\prod_{j=1}^K\psi_j(s_j)\,ds\right)^{-1}>0$ is the normalizing constant. Each $\psi_j$ is positive, continuous, and integrable on $[\ell_j,\overline\delta_j)$, and its upper mass $S_j(z):=\int_z^{\overline\delta_j}\psi_j(s)ds$ is positive and log-concave on that interval.

\textup{(iii)} The lower endpoints satisfy $\sum_{j=1}^K\ell_j\psi_j(\ell_j)/S_j(\ell_j)<1$.
\end{assumption}
}
\addtocounter{assumption}{-1}
The product form and fixed conditional upper endpoints imply $R_j(z,Z_j)=S_j(z)/\psi_j(z)$, so each conditional inverse-hazard ratio depends only on the corresponding gap. State-dependent lower endpoints may nevertheless leave the gaps statistically dependent. Lemma \ref{GapFactorizationSeparability} establishes that each $R_j$ is continuous, nonincreasing, and becomes zero at its upper endpoint. Write $h_j(z):=\psi_j(z)/S_j(z)$ for the corresponding hazard rate. Lemma \ref{PrimitiveCriticalVectorexistence} establishes the existence of full-active and single-active critical vectors.

Although the standard i.i.d. object-value distributions are exchangeable, they need not satisfy gap rent regularity in Assumption \ref{MightAsWelllWalkingOnTheSun}, as shown by the following remark. 

 \begin{remark}\label{IIDStateDependentRemark}
Let $K\geq2$ and suppose the object values are i.i.d. with distribution $F_0$ and positive density $f_0$ on $[\underline\theta,\overline\theta]$, where $\overline\theta<\infty$. Conditional on $Z_1$, $\theta_{(2)}=\sum_{j=2}^K\delta_j$ and
$$R_1(z,Z_1)=\frac{1-F_0(z+\theta_{(2)})}{f_0(z+\theta_{(2)})}.$$
State independence would require a constant marginal inverse hazard, which is incompatible with a finite upper endpoint. Thus Assumption \ref{MightAsWelllWalkingOnTheSun} fails for bounded i.i.d. object values. If compactness is relaxed, the i.i.d. exponential object values studied by \citet{Giannakopoulos2015} satisfy state independence but violate Assumption \ref{YouDontDeserveIt}  even with a single active direction. Lemma \ref{UnboundedSingleActiveExactness} instead keeps the inactive gaps bounded and does not impose i.i.d. object values.
\end{remark}

\subsection{Sufficient conditions for optimal quantity pricing}

The following assumptions can be verified directly from the primitive  distribution and its support for a specified critical vector $c\in\mathcal C_\gamma$.\footnote{For nonadditive bundle values, \citet[Theorem~2]{Zhong2026} also establishes sufficient conditions that can be verified from the primitive distribution and its support for a given candidate menu.}$^,$\footnote{The verification conditions in Section~\ref{GapFlowUpperBound} instead concern quantity prices and the induced menu.} When imposed jointly, the assumptions here refer to the same $c$; they need not hold for every critical vector. 

\begin{assumption}[Critical-partition virtual-surplus alignment]\label{TopRankSignAlignment}
For the specified $c\in\mathcal C_\gamma$, assume  $\Psi_n^{a^c}(\delta)\geq\Psi_m^{a^c}(\delta)$ for every $n\in\mathcal N(c)$, every $m\in\{0,\dots,K\}$, and almost every $\delta\in\mathscr C_n(c)$.
\end{assumption}
Assumption \ref{TopRankSignAlignment} requires every support-relevant maximizer of $V_n(\delta)-V_n(c)$ to maximize quantity virtual surplus almost surely. 
The following identity compares the centered cumulative values with the corresponding quantity virtual surpluses. For any $c\in\mathcal C_\gamma$, any $n,m$, and almost every $\delta\in\Omega$,
\begin{equation}\label{CriticalComparisonIdentity}
\begin{split}
\Psi_n^{a^c}(\delta)-\Psi_m^{a^c}(\delta)
={}&
\big(V_n(\delta)-V_n(c)\big)
-\big(V_m(\delta)-V_m(c)\big)\\
&+\sum_{\substack{j=1\\a_j^c>0}}^K
\big(\min\{n,j\}-\min\{m,j\}\big)
a_j^c\big(R_j(c_j)-R_j(\delta_j)\big).
\end{split}
\end{equation}

{
\renewcommand{\theassumption}{3$'$}
\renewcommand{\theHassumption}{3prime}
\begin{assumption}[Active local inverse-hazard alignment]\label{ActiveLocalInverseHazardAlignment}
Suppose Assumption \ref{MightAsWelllWalkingOnTheSun} holds. For the specified $c\in\mathcal C_\gamma$, the following conditions hold. 

\textup{(i)} If $0\in\mathcal N(c)$, then $\Psi_m^{a^c}(\delta)\leq0$ for every $m=1,\dots,K$ and almost every $\delta\in\mathscr C_0(c)$.

\textup{(ii)} For every $n\in\mathcal N(c)\setminus\{0\}$, every $j\leq n$ with $a_j^c>0$, and almost every $\delta\in\mathscr C_n(c)$, we have $R_j(\delta_j)\leq R_j(c_j)$.

\textup{(iii)} For every $n\in\mathcal N(c)\setminus\{0\}$, every $j>n$ with $a_j^c>0$, and almost every $\delta\in\mathscr C_n(c)$, we have $R_j(\delta_j)=R_j(c_j)$.
\end{assumption}
}
\addtocounter{assumption}{-1}

 Part \textup{(i)} ensures that quantity zero maximizes cumulative quantity virtual surplus on $\mathscr C_0(c)$ almost surely. On each positive-quantity cell $\mathscr C_n(c)$, parts \textup{(ii)} and \textup{(iii)}  ensure that the last sum in \eqref{CriticalComparisonIdentity} is nonnegative, and thus quantity $n$ also maximizes cumulative quantity virtual surplus $\Psi_n^{a^c}(\delta)$ almost surely.  That is, Assumption \ref{ActiveLocalInverseHazardAlignment} implies Assumption \ref{TopRankSignAlignment} (see Lemma \ref{ActiveLocalInverseHazardAlignmentImpliesAlignment} in the Appendix).

\begin{assumption}[Active lower-face support]\label{YouDontDeserveIt}
For the same $c\in\mathcal C_\gamma$ satisfying Assumption \ref{TopRankSignAlignment}, assume that $V_n(\underline\delta_j(Z_j),Z_j)-V_n(c)\leq0$ for every $j$ with $a_j^c>0$, almost every $Z_j$, and every $n\in\mathcal N(c)\setminus\{0\}$.
\end{assumption}

Assumption \ref{YouDontDeserveIt} requires the buyer to receive no information rent at the smallest possible value of each active gap, conditional on other gaps.  In the original object-value coordinates, this condition is equivalent to
$$
\sum_{r=1}^n\theta_{(r)}-\min\{n,j\}\big(\delta_j-\underline\delta_j(Z_j)\big)\leq V_n(c)
$$
for every active $j$, every $n\in\mathcal N(c)\setminus\{0\}$, and $F$-almost every $\theta$ with induced gaps $\delta$. Assumptions \ref{TopRankSignAlignment} and \ref{YouDontDeserveIt} provide sufficient conditions for the  exact optimality result under additive values. My main result, Theorem \ref{EndogenousGapFlowGap}, also allows other price vectors, as illustrated, for example, by the optimal menu for Distribution Family \ref{GaussianGapDensityFamily} and the two-part-tariff extension in Proposition \ref{LocalMarginalMatchingExactness} in the Appendix.

\subsection{Distribution families} 

The following  distribution families satisfy Assumptions \ref{ex} and \ref{MightAsWelllWalkingOnTheSun}. For each family, let $\gamma$ have the product-density form \eqref{PRoductDensitty} with the specified support and factors, and let the object-value density be the exchangeable symmetrization of the corresponding ranked-value density. Distribution Family \ref{GaussianGapDensityFamily} violates Assumption \ref{YouDontDeserveIt} but satisfies Conditions \ref{FiniteVirtualSurplusImplementation} and \ref{ZeroBoundaryResidual} of Theorem \ref{EndogenousGapFlowGap}. Distribution Families \ref{SingleActiveGapDensityFamily}, \ref{FullActiveGapDensityFamily}, and \ref{TwoActiveGapDensityFamily} satisfy Assumptions \ref{FlyMeToTheMoon}, \ref{TopRankSignAlignment} (or \ref{ActiveLocalInverseHazardAlignment}), and \ref{YouDontDeserveIt}. For each bounded family, choose $\underline\theta=0$ and $\overline\theta$ large enough to contain the induced object-value support. 

\begin{distributionfamily}[$K\geq2$, joint truncation of nonincreasing log-concave factors]\label{GaussianGapDensityFamily}
Let $K\geq2$, set $a_K=1$ and $a_j=0$ for $j<K$, and fix $0<\overline\delta_j<\infty$ for every $j$. Let $\psi_j$ be a positive, continuous, nonincreasing, log-concave density factor on $[0,\overline\delta_j]$. Let $\underline\theta=0$ and $\overline\theta\geq\sum_{j=1}^K\overline\delta_j$. Let $\upsilon\in(0,\overline\delta_K)$ be the unique solution of $\upsilon\psi_K(\upsilon)=S_K(\upsilon)$. Choose $\varpi$ such that $\max\{0,\overline\delta_1-(K-1)\upsilon\}\leq\varpi<\overline\delta_1$, and define $\ell_K=\upsilon+(\varpi-\overline\delta_1)/(K-1)$. Require
\begin{equation}\label{GaussianLastGapBound}
\overline\delta_{K-1}+\ell_K\geq R_K(\ell_K).
\end{equation}
The gap support is
\begin{equation}\label{GaussianGapSupport}
\Omega=\left\{\delta\in\prod_{j=1}^K[0,\overline\delta_j]:
\max_{n\in\{K-1,K\}}
\left\{V_n(\delta)-\sum_{j=2}^{K-1}j\overline\delta_j-\varpi-n\upsilon\right\}
\geq0\right\}.
\end{equation}
The terminal-coordinate projection is $I_K=[\ell_K,\overline\delta_K]$. Assumption \ref{MightAsWelllWalkingOnTheSun} holds, with $R_j(z)=S_j(z)/\psi_j(z)$. Apart from \eqref{GaussianLastGapBound}, the intermediate-coordinate upper bounds are unrestricted.

Constant factors yield jointly uniform object values on the support induced by \eqref{GaussianGapSupport}. Gaussian factors $\psi_j(z)=\exp[-z^2/(2\varsigma_j^2)]$, with $\varsigma_j>0$, yield a diagonal-Gaussian density in gap coordinates, jointly truncated to $\Omega$. This truncation generally induces dependence. Log-concavity is imposed on the individual factors, not on the jointly truncated density. See Subsection~\ref{ApplicationJointTruncation} for the optimal results. Example~\ref{JointUniformLeadingExample} shows a two-object uniform specification.
\end{distributionfamily}

\begin{distributionfamily}[$K\geq2$, an exponential initial segment]\label{SingleActiveGapDensityFamily}
Let $K\geq2$, fix $\iota\in\{1,\dots,K\}$, and let $\Omega=\prod_{j=1}^K[0,\overline\delta_j]$, where $0<\overline\delta_j<\infty$ for every $j$. For each $j$, let $\psi_j$ be a positive, continuous, integrable density factor on $[0,\overline\delta_j)$ with log-concave upper mass $S_j$. Define $\tau:=S_{\iota}(0)/\psi_{\iota}(0)$ and suppose that
\begin{equation}\label{SingleActiveDensitySupportBounds}
0<\tau<\overline\delta_{\iota},\;\; \sum_{j\neq\iota}\overline\delta_j\leq\tau,\;\; \sum_{j\neq\iota}j\overline\delta_j\leq\iota\tau.
\end{equation}
Assume that the active factor has the exponential initial segment, i.e.,
\begin{equation}\label{SingleActiveDensityInitialSegment}
\psi_{\iota}(z)=\psi_{\iota}(0)e^{-z/\tau},\;\; 0\leq z\leq\tau.
\end{equation}
For this bounded support, Assumptions \ref{FlyMeToTheMoon}, \ref{ActiveLocalInverseHazardAlignment}, and \ref{YouDontDeserveIt} hold with
\begin{equation}\label{SingleActiveCriticalCoordinates}
c_{\iota}=\tau,\;\; c_j=0\;(j\neq\iota),\;\; a_{\iota}^c=1,\;\; \text{and}\;\; a_j^c=0\;(j\neq\iota).
\end{equation}
The distribution remains full dimensional.\footnote{Example~\ref{UnboundedLeadingExample} illustrates the unbounded extension in Lemma~\ref{UnboundedSingleActiveExactness}; Example~\ref{ConditionalExponentialObjectValues} shows an exponential case.} See Subsection~\ref{ApplicationSingleActive} for the optimality results. 
\end{distributionfamily}

\begin{distributionfamily}[$K\geq2$, independent reverse-Beta gaps]\label{FullActiveGapDensityFamily}
Let $K\geq2$, fix $\overline\delta>0$, and let $\Omega=[0,\overline\delta]^K$. For every $j=1,\dots,K$, assume that $\psi_j(z)=\frac{\alpha}{\overline\delta^\alpha}(\overline\delta-z)^{\alpha-1},\;\;0\leq z<\overline\delta,$ where $0<\alpha\leq\frac{1}{K\left(\frac{K(K+1)}{2}-1\right)}.$ Then Assumptions \ref{FlyMeToTheMoon}, \ref{TopRankSignAlignment}, and \ref{YouDontDeserveIt} hold with $c_j=\frac{\overline\delta}{1+K\alpha},\;\;a_j^c=\frac1K,\;\;j=1,\dots,K.$ See Subsection~\ref{ApplicationQuantityDiscounts} for the optimal pricing results for this family.
\end{distributionfamily}

\par\smallskip\noindent\begin{minipage}{\linewidth}
\setstretch{1.16}
\setlength{\abovedisplayskip}{6pt}
\setlength{\belowdisplayskip}{6pt}
\setlength{\abovedisplayshortskip}{6pt}
\setlength{\belowdisplayshortskip}{6pt}

\begin{example}[Reverse-Beta gaps in three coordinate systems]\label{ReverseBetaThreeSpaces}
Let $K=2$ and consider independent gaps $\delta_1,\delta_2\sim\mathrm{Beta}(1,1/4)$. Their joint density is
$$
\gamma(\delta)=\frac1{16}\big[(1-\delta_1)(1-\delta_2)\big]^{-3/4},\;\; 0\leq\delta_1,\delta_2<1.
$$
Since $\theta_{(1)}=\delta_1+\delta_2$ and $\theta_{(2)}=\delta_2$, the ranked-value density is
$$
g(\theta^\downarrow)=\frac1{16}\big[(1-\theta_{(1)}+\theta_{(2)})(1-\theta_{(2)})\big]^{-3/4},\;\; 0\leq\theta_{(2)}<1,\; \theta_{(2)}\leq\theta_{(1)}<\theta_{(2)}+1.
$$
By Assumption \ref{ex}, the original object-value density is
$$
f(\theta)=\frac1{32}\big[(1-|\theta_1-\theta_2|)(1-\min\{\theta_1,\theta_2\})\big]^{-3/4},\;\; \theta\in\operatorname{int}\Theta,
$$
where $\Theta=\{\theta\in\mathbb R_+^2:|\theta_1-\theta_2|\leq1,\ \min\{\theta_1,\theta_2\}\leq1\}$. These density formulas hold almost everywhere and vanish outside their respective supports. Figure~\ref{ReverseBetaThreeSpacesFigure} illustrates the three supports; the colors show the optimal quantity choices derived in Subsection~\ref{ApplicationQuantityDiscounts}.
\end{example}
\begin{figure}[H]
\setstretch{1}
\centering
\includegraphics[width=\textwidth]{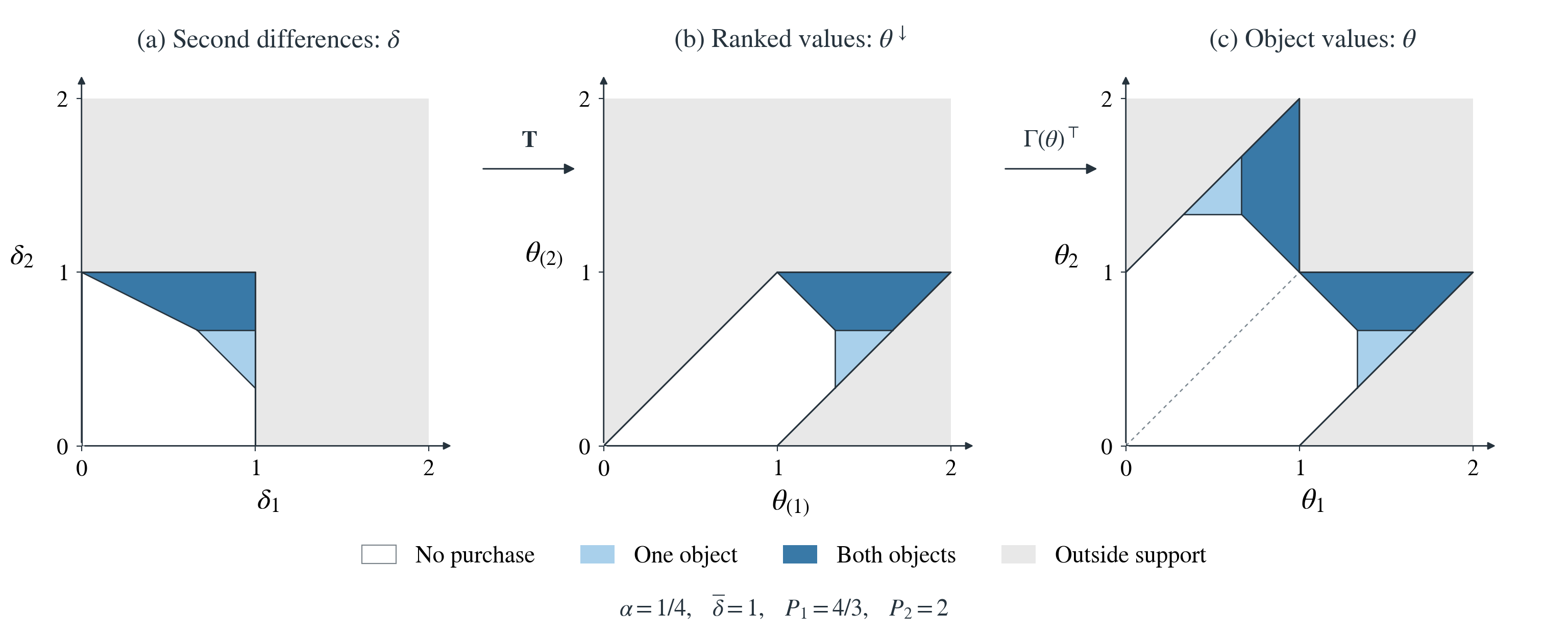}
\caption{The reverse-Beta example in second-difference, ranked-value, and object-value coordinates. White, light blue, and dark blue denote quantities $0$, $1$, and $2$; gray denotes points outside the support. The right panel shows both label assignments.}
\label{ReverseBetaThreeSpacesFigure}
\end{figure}
\end{minipage}\par\smallskip

\begin{distributionfamily}[$K\geq2$, log-concave upper masses with exponential initial segments in the first and terminal factors]\label{TwoActiveGapDensityFamily}
Let $K\geq2$. Fix $0\leq\ell_j<\overline\delta_j<\infty$ for $j=1,\dots,K$, with $\ell_j=0$ for $j=2,\dots,K-1$, such that
\begin{equation}\label{TwoActiveDensitySupportBounds}
\overline\delta_K-\ell_K+\sum_{j=2}^{K-1}(j-1)\overline\delta_j<\overline\delta_1-\ell_1<K(\overline\delta_K-\ell_K)-\sum_{j=2}^{K-1}(K-j)\overline\delta_j.
\end{equation}
Let $\beta_1:=\ell_K+\sum_{j=1}^{K-1}\overline\delta_j$ and $\beta_K:=\ell_1+\sum_{j=2}^Kj\overline\delta_j$. Let the gap support be
\begin{equation}\label{TwoActivePrimitiveSupport}
\Omega=\bigcup_{n=1}^K\left\{\delta\in\prod_{j=1}^K[\ell_j,\overline\delta_j]:V_n(\delta)\geq\frac{(K-n)\beta_1+(n-1)\beta_K}{K-1}\right\}.
\end{equation}

Fix $a\in\Delta_K$ with $a_1,a_K>0$ and $a_j=0$ for $j=2,\dots,K-1$. For every $j$, let $\psi_j$ be a positive, continuous, integrable density factor on $[\ell_j,\overline\delta_j)$ with log-concave upper mass $S_j$. Assume that
$$\begin{aligned}
a_1\frac{S_1(\ell_1)}{\psi_1(\ell_1)}+a_K\frac{S_K(\ell_K)}{\psi_K(\ell_K)}&=\beta_1,\\
a_1\frac{S_1(\ell_1)}{\psi_1(\ell_1)}+Ka_K\frac{S_K(\ell_K)}{\psi_K(\ell_K)}&=\beta_K,
\end{aligned}$$
and, for $i\in\{1,K\}$,
\begin{equation}\label{TwoActiveDensityInitialSegments}
\psi_i(z)=\psi_i(\ell_i)\exp\left\{-\frac{\psi_i(\ell_i)}{S_i(\ell_i)}(z-\ell_i)\right\},\;\;\ell_i\leq z\leq a_i\frac{S_i(\ell_i)}{\psi_i(\ell_i)}.
\end{equation}

Then Assumptions \ref{FlyMeToTheMoon}, \ref{ActiveLocalInverseHazardAlignment}, and \ref{YouDontDeserveIt} hold with $a^c=a$ and
\begin{equation}\label{TwoActiveCriticalCoordinates}
c_i=a_i\frac{S_i(\ell_i)}{\psi_i(\ell_i)},\;\;i\in\{1,K\},\;\; c_j=0,\;\;j=2,\dots,K-1.
\end{equation}
The two positive coordinates are the unique zero crossings of the corresponding rank-gap virtual surpluses. The distribution remains full dimensional. See Subsection~\ref{ApplicationTwoPartTariff} for the optimal pricing results for this family.
\end{distributionfamily}

\section{Mechanisms and the rank-allocation reduction}\label{TheChangeWeNeed}

Throughout this study, mechanism rules and the allocation-payment pairs selected from menus are assumed to be Borel measurable.

\noindent\textbf{General \textit{direct} mechanisms.}   By the revelation  principle \citep{Myerson81}, it is without loss of generality to focus on \textit{direct} mechanisms where the buyer's action space  coincides with his type space. 

Define the feasible allocation set by $\mathcal Q:=[0,1]^K$. A general \textit{direct} mechanism is a  pair $(q,p):\Theta\to\mathcal Q\times\mathbb R_+$, where $q(\theta)=(q_1(\theta),\dots,q_K(\theta))$ specifies the allocation probabilities of the $K$ objects and $p(\theta)$ specifies the buyer's payment to the seller.

A general \textit{direct} mechanism $(q,p)$ is \textit{incentive compatible} (IC) if
\begin{equation*}
\langle q(\theta),\theta\rangle-p(\theta)\geq\langle q(\hat\theta),\theta\rangle-p(\hat\theta),\;\;\forall\theta,\hat\theta\in\Theta.\tag{IC}
\end{equation*}
A general \textit{direct} mechanism $(q,p)$ is \textit{individually rational} (IR) if
\begin{equation*}
\langle q(\theta),\theta\rangle-p(\theta)\geq0,\;\;\forall\theta\in\Theta.\tag{IR}
\end{equation*}

Let $\mathcal M_{\mathcal Q}$ be the set of all  general \textit{direct} mechanisms $(q,p)$ satisfying IC and IR with $q(\theta)\in\mathcal Q$ for every $\theta$. The seller's revenue from $(q,p)$ is defined by $\mathrm{REV}(q,p):=\mathbb E_\theta[p(\theta)]=\int_\Theta p(\theta)\,dF(\theta)$. Further define the optimal revenue over general \textit{direct} mechanisms by $\mathrm{REV}_{\mathcal Q}^{*}:=\sup_{(q,p)\in\mathcal M_{\mathcal Q}}\mathrm{REV}(q,p).$

\noindent\textbf{\textit{Direct} rank-allocation mechanisms.}
Denote the feasible monotone  rank-allocation set by
\begin{equation*}
\mathcal X:=\left\{x\in[0,1]^K\mid 1\geq x_{(1)}\geq x_{(2)}\geq\cdots\geq x_{(K)}\geq0\right\}.
\end{equation*}

A \textit{direct} rank-allocation mechanism is a  pair $(x,p):\Theta^\downarrow\to\mathcal X\times\mathbb R_+$. Given a labeled report $\theta\in\Theta$, the mechanism acts on the sorted report $\theta^\downarrow$ and induces the object-label allocation $q^x(\theta):=\Gamma(\theta)^\top x(\theta^\downarrow)$ and payment $p^x(\theta):=p(\theta^\downarrow)$. 
The mechanism $(x,p)$ is IC and IR if the induced \textit{direct} mechanism $(q^x,p^x)$ is IC and IR. Let $\mathcal M_{\mathcal X}$ be the set of feasible IC and IR \textit{direct} rank-allocation mechanisms, and define $\mathrm{REV}_{\mathcal X}^{*}:=\sup_{(x,p)\in\mathcal M_{\mathcal X}}\mathbb E[p(\theta^\downarrow)].$

The symmetrization and rank-preservation results of \cite[Proposition 1 and Theorem 2]{BikhchandaniMishra2024} establish the revenue equivalence between general mechanisms and rank-allocation mechanisms under exchangeability. The following lemma shows the corresponding result for the permutation-invariant support considered here. 

\begin{lemma}[Rank-allocation reduction]\label{QRankEquivalence}
Under Assumption \ref{ex}, we have $\mathrm{REV}_{\mathcal Q}^{*}=\mathrm{REV}_{\mathcal X}^{*}.$
\end{lemma}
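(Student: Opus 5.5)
The plan is to prove the two inequalities $\mathrm{REV}_{\mathcal X}^{*}\leq\mathrm{REV}_{\mathcal Q}^{*}$ and $\mathrm{REV}_{\mathcal Q}^{*}\leq\mathrm{REV}_{\mathcal X}^{*}$ separately.

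\textbf{The easy inequality.} Take any $(x,p)\in\mathcal M_{\mathcal X}$. By definition, the induced pair $(q^x,p^x)$, with $q^x(\theta)=\Gamma(\theta)^\top x(\theta^\downarrow)\in[0,1]^K$, is IC and IR. Moreover,
\[
\mathbb E[p^x(\theta)]=\mathbb E[p(\theta^\downarrow)].
\]
Hence every revenue attainable in $\mathcal M_{\mathcal X}$ is attainable in $\mathcal M_{\mathcal Q}$, which gives $\mathrm{REV}_{\mathcal X}^{*}\leq\mathrm{REV}_{\mathcal Q}^{*}$.

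\textbf{Step 1: symmetrization.} Fix $(q,p)\in\mathcal M_{\mathcal Q}$. Because $\Theta$ is permutation invariant, for each $\rho\in\Sigma_{\mathcal K}$ the relabeled mechanism $q^\rho(\theta):=\rho^\top q(\rho\theta)$, $p^\rho(\theta):=p(\rho\theta)$ is well defined on $\Theta$. It inherits IC and IR, since $\langle\rho^\top q(\rho\hat\theta),\theta\rangle=\langle q(\rho\hat\theta),\rho\theta\rangle$. By Assumption~\ref{ex}, $\rho\theta$ has the same law as $\theta$, so $\mathbb E[p^\rho]=\mathbb E[p]$. Averaging over all $K!$ permutations gives a symmetric mechanism $(\bar q,\bar p)$ with the same revenue. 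It remains IC and IR, because these constraints are linear in $(q,p)$ and hence preserved under convex combinations. Symmetry means $\bar q(\rho\theta)=\rho\bar q(\theta)$ and $\bar p(\rho\theta)=\bar p(\theta)$. Consequently, $x(\theta^\downarrow):=\Gamma(\theta)\bar q(\theta)$ and $\bar p$ depend only on $\theta^\downarrow$, off the null set of ties.

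\textbf{Step 2: monotonicity of ranks, the main obstacle.} We must show $x(\theta^\downarrow)\in\mathcal X$, that is, the co-sorted allocation is nonincreasing in rank. Take $\theta$ with $\theta_i>\theta_k$ and let $\tau$ be the transposition of $i$ and $k$. Symmetry gives $\bar q(\tau\theta)=\tau\bar q(\theta)$ and equal payments, so IC between $\theta$ and the report $\tau\theta$ yields
\[
\langle\bar q(\theta),\theta\rangle\geq\langle\tau\bar q(\theta),\theta\rangle,
\]
that is,
\[
(\bar q_i-\bar q_k)(\theta_i-\theta_k)\geq0 .
\]
Therefore $\bar q_i\geq\bar q_k$, and $x$ is monotone.

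Two technical points remain in this step. First, ties have measure zero, and on them we can redefine $x$ using the fixed tie-breaking order. Such a redefinition could break IC, so instead I would keep $\bar q$ itself, which is already symmetric, and note that at ties the fixed order is consistent with $\bar q$ after a relabeling among equal coordinates. This relabeling leaves both the buyer's utility and the payment unchanged. Second, measurability of $x$ follows from Borel measurability of $\bar q$ and of $\Gamma$.

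\textbf{Conclusion.} The rank mechanism $(x,\bar p)$ then induces exactly $(q^x,p^x)=(\bar q,\bar p)$, which is IC and IR. Its revenue equals $\mathrm{REV}(q,p)$, so $\mathrm{REV}_{\mathcal Q}^{*}\leq\mathrm{REV}_{\mathcal X}^{*}$. The only non-routine checks are the permutation-invariance of $\Theta$, which keeps the relabeled reports feasible, and the handling of ties in Step 2. This argument follows \citet[Proposition 1 and Theorem 2]{BikhchandaniMishra2024}, adapted here to a support that is permutation invariant rather than a full product.
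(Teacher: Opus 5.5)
Your proposal is correct and follows essentially the same route as the paper: you symmetrize over label permutations, using permutation invariance of $\Theta$ and exchangeability; you derive $(\bar q_i-\bar q_k)(\theta_i-\theta_k)\geq0$ from IC against a transposed report; you restrict to sorted types; and the converse direction is immediate. The tie issue is simpler than you suggest: if $\theta_i=\theta_k$ then $\tau\theta=\theta$, so symmetry gives $\bar q(\theta)=\tau\bar q(\theta)$, i.e., $\bar q_i(\theta)=\bar q_k(\theta)$, so $x$ is well defined on all of $\Theta^\downarrow$ and $q^x=\bar q$ exactly, with no relabeling or redefinition needed.
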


\noindent\textbf{Top-rank quantity representation.} A quantity-based menu specifies prices by quantity rather than object labels, and the buyer selects the highest-valued objects to fill the chosen quantity. Lemma \ref{quantitycompress} in the Appendix formalizes this choice. The following lemma characterizes the monotone rank-allocation simplex and establishes its equivalence to report-independent top-rank menus.

If $x\in\mathcal X$ is represented by a top-rank quantity lottery $\chi\in\Delta(\{0,\dots,K\})$, the buyer's expected value from this lottery is $V_\chi(\theta):=\sum_{n=0}^{K}\chi_nV_n(\theta)$. 
A report-independent top-rank menu is $$M=\{(\chi^\nu,p^\nu):\nu\in\mathcal I\},$$ where $\chi^\nu\in\Delta(\{0,\dots,K\})$, $p^\nu\in\mathbb R_{+}$, and $\mathcal I$ is an index set.    Every type faces the same list of quantity lotteries and prices; the buyer's ranking determines which objects fill a realized quantity. An admissible menu is paired with a payoff-maximizing selection $\nu(\theta^\downarrow)\in\argmax_{\nu\in\mathcal I}\{V_{\chi^\nu}(\theta^\downarrow)-p^\nu\}$. Let $\mathcal M_{\mathrm{menu}}$ be the class of IC and IR self-selection mechanisms generated by such menus, and define  the corresponding optimal revenue by $\mathrm{REV}_{\mathrm{menu}}^{*}:=\sup_{M\in\mathcal M_{\mathrm{menu}}}\mathbb E[p_M(\theta)].$

\begin{lemma}[Top-rank simplex and menu representation]\label{RankMenuEquivalence}
The extreme points of $\mathcal X$ are $e_0,\dots,e_K$. Equivalently, every $x\in\mathcal X$ admits a unique lottery representation $x_{(r)}=\sum_{n=r}^K\chi_n$ for $r=1,\dots,K$, where $\chi_0=1-x_{(1)}$, $\chi_n=x_{(n)}-x_{(n+1)}$ for $n=1,\dots,K-1$, and $\chi_K=x_{(K)}$. Under Assumption \ref{ex},
$$
\mathrm{REV}_{\mathcal Q}^{*}=\mathrm{REV}_{\mathcal X}^{*}=\mathrm{REV}_{\mathrm{menu}}^{*}.
$$
Thus the seller can restrict attention to report-independent top-rank quantity lotteries. The results below establish conditions under which fixed quantities at posted prices are optimal. 
\end{lemma}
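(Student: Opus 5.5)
The proof has three parts. First I characterize the extreme points of $\mathcal X$ and the lottery representation. Then I show that every IC and IR rank-allocation mechanism is generated by a report-independent top-rank menu, and conversely. The revenue equalities then follow from Lemma \ref{QRankEquivalence}.

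\emph{Extreme points and lottery representation.} Given $x\in\mathcal X$, define $\chi_0=1-x_{(1)}$, $\chi_n=x_{(n)}-x_{(n+1)}$ for $1\le n\le K-1$, and $\chi_K=x_{(K)}$. The monotonicity $1\ge x_{(1)}\ge\cdots\ge x_{(K)}\ge0$ gives $\chi_n\ge0$. The sum telescopes to one. Summing from $r$ to $K$ recovers $x_{(r)}=\sum_{n=r}^K\chi_n$, which says exactly that $x=\sum_n\chi_n e_n$.

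Uniqueness holds because the linear map $\chi\mapsto\sum_n\chi_ne_n$, restricted to $\{\sum\chi_n=1\}$, is injective. The vectors $e_1,\dots,e_K$ are linearly independent, being the columns of $\mathbf T^\top$ up to ordering, so $e_0,\dots,e_K$ are affinely independent. Hence $\mathcal X$ is the $K$-simplex $\operatorname{conv}\{e_0,\dots,e_K\}$, and its extreme points are precisely the $e_n$. The same calculation shows that the buyer's value of a rank allocation is linear in the lottery: $\langle x,\theta^\downarrow\rangle=\sum_n\chi_n\langle e_n,\theta^\downarrow\rangle=\sum_n\chi_nV_n$. Here $\langle e_n,\theta^\downarrow\rangle=\sum_{r\le n}\theta_{(r)}=V_n$, using $\bm V=\mathbf T^\top\theta^\downarrow$.

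\emph{From mechanisms to menus.} Take $(x,p)\in\mathcal M_{\mathcal X}$ and let $\mathcal I=\Theta^\downarrow$, with menu items $(\chi(x(s)),p(s))$ for $s\in\mathcal I$. I need to show that truthful selection $\nu(\theta^\downarrow)=\theta^\downarrow$ is payoff maximizing within the menu, i.e., that IC of the induced labeled mechanism $(q^x,p^x)$ implies that no sorted report $s$ beats truth.

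This is the step I expect to be the main obstacle. A labeled misreport $\hat\theta$ yields $q^x(\hat\theta)=\Gamma(\hat\theta)^\top x(\hat\theta^\downarrow)$, which co-sorts with $\hat\theta$, not with $\theta$. So I must show that the buyer can replicate the \emph{top-rank} version of item $s$ with a labeled report. The plan is to pick $\hat\theta:=\Gamma(\theta)^\top s$, the relabeling of $s$ that orders objects the same way as $\theta$. This report lies in $\Theta$ by permutation invariance of the support. Its sort is $\hat\theta^\downarrow=s$, and $\Gamma(\hat\theta)=\Gamma(\theta)$ up to a null set of ties handled by the fixed tie-breaking. Then
\[
\langle q^x(\hat\theta),\theta\rangle=\langle \Gamma(\theta)^\top x(s),\Gamma(\theta)^\top\theta^\downarrow\rangle=\langle x(s),\theta^\downarrow\rangle=V_{\chi(x(s))}(\theta^\downarrow).
\]
So IC against $\hat\theta$ delivers menu optimality of truth. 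IR carries over directly, and the payments coincide, so revenue is preserved; this gives $\mathrm{REV}_{\mathcal X}^*\le\mathrm{REV}_{\mathrm{menu}}^*$. On the tie set, where the two orderings could differ, I would argue that it has measure zero and that value equality holds by continuity of the linear payoff.

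\emph{From menus to mechanisms.} Given an admissible menu with a measurable payoff-maximizing selection, set $x(\theta^\downarrow)=\sum_n\chi^{\nu(\theta^\downarrow)}_ne_n\in\mathcal X$ and $p(\theta^\downarrow)=p^{\nu(\theta^\downarrow)}$. For any labeled report $\hat\theta$, the rearrangement inequality gives $\langle\Gamma(\hat\theta)^\top x(\hat\theta^\downarrow),\theta\rangle\le\langle x(\hat\theta^\downarrow),\theta^\downarrow\rangle$, because $x(\hat\theta^\downarrow)$ is nonincreasing. The right-hand side is the menu value of item $\nu(\hat\theta^\downarrow)$, which is at most the truthful payoff by optimality of the selection. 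Hence $(q^x,p^x)$ is IC, and it is IR because the menu is IR. This gives $\mathrm{REV}_{\mathrm{menu}}^*\le\mathrm{REV}_{\mathcal X}^*$. Combining both directions with Lemma \ref{QRankEquivalence} yields the chain of equalities.
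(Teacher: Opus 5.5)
Your structure matches the paper's proof. You use reverse differences and affine independence for the simplex, the rearrangement inequality for the direction from menus to mechanisms, and an IC constraint against a co-sorted deviation for the direction from mechanisms to menus. There is one small slip: $e_1,\dots,e_K$ are the columns of $\mathbf T$ (the rows of $\mathbf T^\top$), not the columns of $\mathbf T^\top$. Linear independence still follows from $\det\mathbf T=1$.

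The step that does not go through as written is the direction from mechanisms to menus. There you relabel the \emph{report}, $\hat\theta=\Gamma(\theta)^\top s$, and need $\Gamma(\hat\theta)=\Gamma(\theta)$. This fails when the menu index $s$ has ties, because the fixed label order, not $\theta$, then ranks the tied objects.

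For instance, take $K=2$, the natural label order, and a tied item $s=(t,t)\in\Theta^\downarrow$ with $x(s)=(1,0)$, as under a posted price for one object. Let the true type have $\theta_2>\theta_1$. The only report whose sort is $s$ is $(t,t)$ itself, and it delivers object $1$. So IC at $\theta$ yields only $U(\theta)\geq\theta_1-p(s)$, whereas admissibility of the menu requires $U(\theta)\geq\theta_2-p(s)=V_{\chi(s)}(\theta)-p(s)$.

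Your proposed patch does not repair this. The problematic ties are in the menu item, not in the type. Admissibility requires the selected item to beat \emph{every} item in $\mathcal I$, so a null set of items cannot simply be ignored. Continuity does not help either, since $x(\cdot)$ and $p(\cdot)$ need not be continuous in $s$.

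The fix, which is what the paper does, is to relabel the \emph{true type} instead. Truthful utility $\langle x(\theta^\downarrow),\theta^\downarrow\rangle-p(\theta^\downarrow)$ depends only on $\theta^\downarrow$. So apply IC at $\theta':=\Gamma(s)^\top\theta^\downarrow$, which lies in $\Theta$ by permutation invariance, against the report $s$. Then $\langle q^x(s),\theta'\rangle=\langle\Gamma(s)^\top x(s),\Gamma(s)^\top\theta^\downarrow\rangle=\langle x(s),\theta^\downarrow\rangle$ holds exactly by orthogonality of the permutation matrix. This works for every $s$, including tied ones, so no tie-breaking argument is needed.
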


\section{Rank-gap decomposition and exact optimality}\label{GapFlowUpperBound}

By Lemma \ref{RankMenuEquivalence}, it suffices to study report-independent menus over top-rank quantity lotteries. Fix $M=\{(\chi^\nu,p^\nu):\nu\in\mathcal I\}\in\mathcal M_{\mathrm{menu}}$. I work in rank-gap coordinates and write $\chi(\delta)$, $x(\delta)$, and $p_M(\delta)$ for the selected quantity lottery, rank allocation, and payment. Define the cumulative rank allocation by
\begin{equation}\label{CumulativeRankAllocation}
Q_j(\delta):=\sum_{r=1}^j x_{(r)}(\delta)=\sum_{n=0}^K\min\{n,j\}\chi_n(\delta),\;\;j=1,\dots,K.
\end{equation}
Thus $Q_j(\delta)$ is the expected number of allocated objects among the first $j$ ranks. The buyer's indirect utility, or information rent, is expected value minus payment and has the equivalent representations
\begin{equation*}
\begin{split}
U(\delta)&:=\sup_{\nu\in\mathcal I}\left\{\sum_{n=0}^K\chi_n^\nu V_n(\delta)-p^\nu\right\}\\
&=\sum_{n=0}^K\chi_n(\delta)V_n(\delta)-p_M(\delta)\\
&=\sum_{r=1}^Kx_{(r)}(\delta)\theta_{(r)}(\delta)-p_M(\delta)\\
&=\sum_{j=1}^KQ_j(\delta)\delta_j-p_M(\delta).
\end{split}
\end{equation*}
Consequently,
\begin{equation}\label{RankGapExpectedRevenue}
\mathrm{REV}(M)=\mathbb E_\gamma\left[\sum_{j=1}^KQ_j(\delta)\delta_j-U(\delta)\right].
\end{equation}
Each menu payoff is affine in $\delta$, with uniformly bounded slopes. Since prices are nonnegative and the menu is nonempty, its supremum is finite at every finite gap vector. Hence $U$ admits a finite convex Lipschitz extension and is absolutely continuous on every bounded conditional interval. Individual rationality and nonnegative payments also yield $0\leq U(\delta)\leq V_K(\delta)$ and $0\leq p_M(\delta)\leq V_K(\delta)$ on $\Omega$. By Rademacher's theorem, this Lipschitz extension is differentiable Lebesgue-almost everywhere, and hence $\gamma$-almost surely. At such a type, the selected affine payoff touches $U$ from below, so their gradients agree. Consequently,
\begin{equation}\label{Envelope}
\partial_jU(\delta)=Q_j(\delta),\;\;j=1,\dots,K.
\end{equation}
\noindent\textbf{Conditional integration by parts.} Fix a gap $j$. For the following calculation, suppose that the conditional sections have the single-interval form in \eqref{ConditionalGapSection}, the ratios in \eqref{ConditionalGapInverseHazard} are well defined, and the associated terms are integrable. I first retain their full dependence on $Z_j$. For almost every conditioning state with positive finite section mass,
$$\mathbb E_\gamma[U(\delta)\mid Z_j]=\frac{\int_{\underline\delta_j(Z_j)}^{\overline\delta_j(Z_j)}U(z,Z_j)\gamma(z,Z_j)dz}{\int_{\underline\delta_j(Z_j)}^{\overline\delta_j(Z_j)}\gamma(z,Z_j)dz}.$$
By Fubini's theorem, \eqref{Envelope} holds almost everywhere along almost every conditional section. Since $\gamma(z,Z_j)dz=-d_z[\int_z^{\overline\delta_j(Z_j)}\gamma(s,Z_j)ds]$, absolute continuity of $U$ permits integration by parts \citep[see][]{Milgrom2002}:
\begin{equation}\label{ConditionalGapIBP}
\begin{split}
&\int_{\underline\delta_j(Z_j)}^{\overline\delta_j(Z_j)}U(z,Z_j)\gamma(z,Z_j)dz\\
=&-\left[U(z,Z_j)\int_z^{\overline\delta_j(Z_j)}\gamma(s,Z_j)ds\right]_{\underline\delta_j(Z_j)}^{\overline\delta_j(Z_j)}
+\int_{\underline\delta_j(Z_j)}^{\overline\delta_j(Z_j)}\partial_jU(z,Z_j)\left[\int_z^{\overline\delta_j(Z_j)}\gamma(s,Z_j)ds\right]dz.
\end{split}
\end{equation}
At a finite upper endpoint, the upper-tail integral becomes zero. The lower endpoint contributes utility times the section mass. Dividing \eqref{ConditionalGapIBP} by this mass and applying \eqref{ConditionalGapInverseHazard} and \eqref{Envelope} yields 
\begin{equation}\label{StateDependentConditionalRentIdentity}
\mathbb E_\gamma[U(\delta)\mid Z_j]=U(\underline\delta_j(Z_j),Z_j)+\mathbb E_\gamma[Q_j(\delta)R_j(\delta_j,Z_j)\mid Z_j].
\end{equation}
This conditional rent identity does not require state independence in Assumption \ref{MightAsWelllWalkingOnTheSun}. It separates utility at the conditional lower face from the rent term generated by the cumulative allocation along gap $j$.

Under Assumption \ref{MightAsWelllWalkingOnTheSun}, $R_j(\delta_j,Z_j)=R_j(\delta_j)$. By the extreme value theorem, $R_j$ is bounded on the compact projection $I_j$. Together with $0\leq Q_j\leq j$ and $0\leq U\leq V_K\leq K\overline\theta$, this establishes integrability of the rent and lower-boundary terms. Integrating over $Z_j$ in \eqref{StateDependentConditionalRentIdentity} and applying the law of iterated expectations yields 
\begin{equation}\label{ConditionalGapRentIdentity}
\mathbb E_\gamma[U(\delta)]=\mathbb E_\gamma[Q_j(\delta)R_j(\delta_j)]+\mathbb E_{Z_j}[U(\underline\delta_j(Z_j),Z_j)].
\end{equation}
For the fixed menu $M$, IC implies the envelope identity \eqref{Envelope}, and the integration above establishes \eqref{ConditionalGapRentIdentity} along each gap, always for the same expected information rent. Thus $\sum_{j=1}^Ka_j=1$ preserves $\mathbb E_\gamma[U]$ when the identities are averaged. IR ensures that each lower-boundary rent is nonnegative, and $a_j\geq0$ preserves this property in the weighted sum. These observations justify the restrictions on $a$ and yield the following revenue decomposition. 

\begin{proposition}[Gap revenue decomposition]\label{GapRevenueDecomposition}
Under Assumption \ref{MightAsWelllWalkingOnTheSun}, every $a\in\Delta_K$ and every IC and IR top-rank menu $M$ satisfy
\begin{equation}\label{GapRevenueIdentity}
\begin{split}
\mathrm{REV}(M)&=\mathbb E_\gamma\left[\sum_{j=1}^KQ_j(\delta)\phi_j^a(\delta_j)\right]-\mathcal B^a(U)\\
&=\mathbb E_\gamma\left[\sum_{r=1}^Kx_{(r)}(\delta)\Phi_{(r)}^a(\delta)\right]-\mathcal B^a(U)\\
&=\mathbb E_\gamma\left[\sum_{n=0}^K\chi_n(\delta)\Psi_n^a(\delta)\right]-\mathcal B^a(U),
\end{split}
\end{equation}
where
$$\mathcal B^a(U):=\sum_{j=1}^Ka_j\mathbb E_{Z_j}[U(\underline\delta_j(Z_j),Z_j)]\geq0.$$
 Nonnegativity follows from individual rationality and $a_j\geq0$.
\end{proposition}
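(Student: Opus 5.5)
The plan is to combine the revenue representation \eqref{RankGapExpectedRevenue} with the per-gap rent identities \eqref{ConditionalGapRentIdentity}, averaged with the weights $a$, and then change coordinates from gaps to ranks to quantities.

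\emph{Step 1: well-posedness.} Fix a menu $M\in\mathcal M_{\mathrm{menu}}$ and a weight vector $a\in\Delta_K$. Under Assumption \ref{MightAsWelllWalkingOnTheSun}, I first confirm that \eqref{ConditionalGapRentIdentity} holds for every $j$. This uses the following facts, all established in the preceding text:
\begin{itemize}
\item $U$ is convex and Lipschitz, hence absolutely continuous on every conditional section;
\item the envelope identity \eqref{Envelope} holds almost everywhere;
\item $R_j$ is bounded on $I_j$;
\item $0\le Q_j\le j$ and $0\le U\le K\overline\theta$.
\end{itemize}
Together these make every term integrable, so the conditional integration by parts \eqref{ConditionalGapIBP} and the law of iterated expectations apply.

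One technicality needs care: conditioning states with degenerate or empty sections, or with zero section mass. These form a null set of $Z_j$ and contribute nothing to either side. I will handle this by writing the identity in its unnormalized form, $\int U\gamma\,dz=U(\underline\delta_j,Z_j)\int\gamma\,dz+\int Q_jR_j\gamma\,dz$, and then integrating over $Z_j$. This avoids dividing by the section mass.

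\emph{Step 2: average the rent identities.} Multiply \eqref{ConditionalGapRentIdentity} by $a_j$ and sum over $j$. Since $\sum_ja_j=1$, the left side stays $\mathbb E_\gamma[U]$, and the right side becomes
\[
\mathbb E_\gamma\Big[\sum_j a_jQ_jR_j(\delta_j)\Big]+\mathcal B^a(U).
\]
Substituting this into \eqref{RankGapExpectedRevenue} gives
\[
\mathrm{REV}(M)=\mathbb E_\gamma\Big[\sum_jQ_j\big(\delta_j-a_jR_j(\delta_j)\big)\Big]-\mathcal B^a(U),
\]
which is the first line of \eqref{GapRevenueIdentity} by the definition of $\phi^a_j$. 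Linearity of expectation is justified because every summand is integrable. Nonnegativity of $\mathcal B^a(U)$ follows from IR, since $U\ge0$ on $\Omega$ and hence at the lower faces, combined with $a_j\ge0$.

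\emph{Step 3: change to rank and quantity coordinates.} This is pure algebra performed pointwise in $\delta$. Using $Q_j=\sum_{r\le j}x_{(r)}$, exchange the order of summation:
\[
\sum_j Q_j\phi^a_j=\sum_r x_{(r)}\sum_{j\ge r}\phi^a_j=\sum_r x_{(r)}\Phi^a_{(r)}.
\]
Using instead $Q_j=\sum_n\min\{n,j\}\chi_n$ from \eqref{CumulativeRankAllocation}, the same sum equals $\sum_n\chi_n\sum_j\min\{n,j\}\phi^a_j=\sum_n\chi_n\Psi^a_n$ by \eqref{QuantityVirtualSurplus}. This yields the second and third lines.

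\emph{Main obstacle.} The one delicate point is the lower-face term. I need to show that $U(\underline\delta_j(Z_j),Z_j)$ is well defined and measurable in $Z_j$, and that \eqref{Envelope} holds along almost every section. The Lipschitz extension of $U$ makes it continuous, so evaluating at the endpoint is legitimate. Measurability of $\underline\delta_j(\cdot)$ follows from writing it as an infimum over a countable dense set within the Borel support. Fubini's theorem then transfers the almost-everywhere differentiability of $U$ to almost every section.
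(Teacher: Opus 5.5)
Your proposal is correct and follows the paper's proof: average the per-gap rent identities \eqref{ConditionalGapRentIdentity} with weights $a_j$, use $\sum_j a_j=1$ to substitute into \eqref{RankGapExpectedRevenue}, and obtain the rank and quantity forms by exchanging the finite sums via \eqref{CumulativeRankAllocation}. Your extra steps—the unnormalized conditional identity and the measurability of $\underline\delta_j(\cdot)$—fill in details the paper leaves implicit without changing the argument.
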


\begin{proof}
Multiply \eqref{ConditionalGapRentIdentity} by $a_j$ and sum over $j$. Since $\sum_{j=1}^Ka_j=1$, substituting the resulting identity into \eqref{RankGapExpectedRevenue} yields the first representation in \eqref{GapRevenueIdentity}, with the residual defined above. The other two representations follow by rearranging the finite sums in \eqref{CumulativeRankAllocation}, \eqref{RankVirtualSurplus}, and \eqref{QuantityVirtualSurplus}.
\end{proof}

\noindent\textbf{The rank-gap virtual-surplus upper bound.} A quantity lottery averages these virtual surpluses, so pointwise maximization yields
$$B(a):=\mathbb E_\gamma\left[\max_{n=0,\dots,K}\Psi_n^a(\delta)\right].$$

\begin{proposition}[Rank-gap virtual-surplus upper bound]\label{GapVirtualSurplusUpperBound}
Under Assumption \ref{MightAsWelllWalkingOnTheSun}, every IC and IR top-rank menu $M$ satisfies $\mathrm{REV}(M)\leq B(a)$ for every $a\in\Delta_K$.
\end{proposition}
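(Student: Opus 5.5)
The bound follows directly from the third representation in Proposition \ref{GapRevenueDecomposition}. Fix $a\in\Delta_K$ and an IC and IR top-rank menu $M$ with selected lottery $\chi(\delta)$. Under Assumption \ref{MightAsWelllWalkingOnTheSun}, Proposition \ref{GapRevenueDecomposition} gives
$$\mathrm{REV}(M)=\mathbb E_\gamma\Big[\sum_{n=0}^K\chi_n(\delta)\Psi_n^a(\delta)\Big]-\mathcal B^a(U).$$
The plan has two steps. First, drop the lower-boundary residual. Second, bound the lottery average pointwise by the maximum over quantities.

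\emph{Step 1.} Since $\mathcal B^a(U)\geq0$, we obtain $\mathrm{REV}(M)\leq\mathbb E_\gamma[\sum_n\chi_n(\delta)\Psi_n^a(\delta)]$. Nonnegativity holds because IR gives $U\geq0$ on $\Omega$, in particular at the lower faces $(\underline\delta_j(Z_j),Z_j)\in\Omega$, and because $a_j\geq0$.

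\emph{Step 2.} For almost every $\delta$, $\chi(\delta)\in\Delta(\{0,\dots,K\})$, so $\chi_n(\delta)\geq0$ and $\sum_n\chi_n(\delta)=1$. Hence
$$\sum_{n=0}^K\chi_n(\delta)\Psi_n^a(\delta)\leq\max_{n=0,\dots,K}\Psi_n^a(\delta).$$
Taking expectations yields $\mathrm{REV}(M)\leq B(a)$.

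The only delicate point is integrability, which justifies both the expectation of the maximum and the use of the decomposition as an identity between finite numbers. Each $\Psi_n^a(\delta)=\sum_j\min\{n,j\}(\delta_j-a_jR_j(\delta_j))$ is bounded on the compact support. Indeed, $\delta_j\in I_j$ is bounded, and $R_j$ is continuous on the compact $I_j$ by Assumption \ref{MightAsWelllWalkingOnTheSun}, hence bounded by the extreme value theorem. So $\max_n\Psi_n^a$ is a bounded measurable function, being a maximum of finitely many measurable functions, and $B(a)$ is finite. The measurability of $\chi$ is assumed for mechanisms. Because the integrability of the rent and lower-boundary terms was already established before Proposition \ref{GapRevenueDecomposition}, I expect no real obstacle. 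The step needing most care is confirming that the lower-face points lie in $\Omega$, so that IR applies there. This follows from the closed-interval form of the sections in \eqref{ConditionalGapSection}.
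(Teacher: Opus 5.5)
Your proposal is correct and follows the paper's own proof: apply the third representation in Proposition~\ref{GapRevenueDecomposition}, drop the nonnegative residual $\mathcal B^a(U)$, bound the lottery average $\sum_n\chi_n(\delta)\Psi_n^a(\delta)$ pointwise by $\max_n\Psi_n^a(\delta)$, and integrate. Your integrability remarks (boundedness of $R_j$ on the compact $I_j$, and lower-face points lying in $\Omega$ because the sections are closed) agree with the justifications the paper gives just before the decomposition.
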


I next seek quantity prices that implement pointwise virtual-surplus maximization. Fix $a\in\Delta_K$ and a tie-breaking order over the finite quantity set. Let $n^a(\delta)\in\argmax_{m=0,\dots,K}\Psi_m^a(\delta)$ denote the pointwise virtual-surplus-maximizing quantity, with ties resolved by this order. For every $n=0,\dots,K$, the corresponding candidate quantity cell is $D_n^a:=\{\delta\in\Omega:n^a(\delta)=n\}$. By Lemma \ref{VirtualSurplusCellMeasurability} in the Appendix, each candidate cell is Borel measurable.

Only quantities selected on positive-probability cells are relevant for implementation. Define the candidate menu quantity set by
$$\mathcal N^a:=\{0\}\cup\left\{n\in\{1,\dots,K\}:\int_{D_n^a}\gamma(\delta)d\delta>0\right\}.$$
For $n,m\in\mathcal N^a$, let $b_{nm}^a:=\inf_{\delta\in D_n^a}\{V_n(\delta)-V_m(\delta)\}$ denote the infimum of the value advantage of quantity $n$ over quantity $m$ among types assigned to quantity $n$. I set $\inf\emptyset=+\infty$. I impose the following implementation condition.

\begin{condition}[Finite implementation of candidate virtual-surplus cells]\label{FiniteVirtualSurplusImplementation}
For the fixed weight vector $a$, there exist prices $\{p_n^a\}_{n\in\mathcal N^a}$ with $p_0^a=0$ and $p_n^a\geq0$ such that 
$p_n^a-p_m^a\leq b_{nm}^a$  for all $n,m\in\mathcal N^a$.
\end{condition}

Condition \ref{FiniteVirtualSurplusImplementation} is a finite quasilinear revealed-preference condition, related to \cite{Rochet1987} and \cite{Afriat1967}. 

\begin{remark}[Finite revealed-preference interpretation]\label{FiniteRPRemark}
The inequalities ensure that quantity $n$ maximizes payoff on $D_n^a$. The price-difference inequalities are feasible precisely when every finite cycle $n_1,\ldots,n_L$, with $n_{L+1}=n_1$, satisfies $\sum_{t=1}^L b_{n_tn_{t+1}}^a\geq0$. Subtracting the price of quantity zero from all prices preserves these inequalities and sets $p_0^a=0$. If $D_0^a\ne\varnothing$, then $-p_n^a\leq b_{0n}^a\leq0$ implies $p_n^a\geq0$. If $D_0^a=\varnothing$, nonnegativity must be checked separately. 
\end{remark}

Any price vector satisfying Condition \ref{FiniteVirtualSurplusImplementation} induces the candidate deterministic quantity-based menu $M^a:=\{(n,p_n^a):n\in\mathcal N^a\}$ with indirect utility $U^a(\delta):=\max_{n\in\mathcal N^a}\{V_n(\delta)-p_n^a\}$.
Condition \ref{FiniteVirtualSurplusImplementation} ensures that $M^a$ implements the candidate quantity rule $n^a$ almost surely, as shown in Lemma \ref{FiniteVirtualSurplusImplementationLemma} in the Appendix. 

\begin{condition}[Zero lower-face residual]\label{ZeroBoundaryResidual}
For the same fixed weight vector $a$ and the menu $M^a$ induced by the chosen price vector in Condition \ref{FiniteVirtualSurplusImplementation}, assume that $\mathcal B ^a(U^a)=0.$ 
\end{condition}

 Since utility and the weights are nonnegative, this condition requires utility to vanish almost surely on each conditional lower face with positive weight. 

 For the final claim of the next theorem, define $\mathcal A^*:=\argmin_{a\in\Delta_K}B(a)$; this set is nonempty and compact in the compact-support baseline (Lemma \ref{ConvexityOfB}).

\begin{theorem}[Finite implementation, residual gap, and exact optimality]\label{EndogenousGapFlowGap}\label{VWJetta}
Fix $a\in\Delta_K$ and suppose Assumptions \ref{ex} and \ref{MightAsWelllWalkingOnTheSun} hold. Suppose Condition \ref{FiniteVirtualSurplusImplementation} holds for this $a$, and fix one such price vector with induced menu $M^a$. Then
\begin{equation*}
0
\leq
\mathrm{REV}_{\mathcal Q}^{*}
-
\mathrm{REV}(M^a)
\leq
\mathcal B ^a(U^a).
\end{equation*}
If Condition \ref{ZeroBoundaryResidual} also holds for this same $a$ and $M^a$, then $M^a$ is optimal among all feasible general \textit{direct} mechanisms and
\begin{equation*}
\mathrm{REV}_{\mathcal Q}^{*}
=
\mathrm{REV}_{\mathrm{menu}}^{*}
=
\mathrm{REV}(M^a)
=
B(a).
\end{equation*}
In this case, moreover, we obtain the minimum revenue upper bound, i.e., $a\in\mathcal A^*$.
\end{theorem}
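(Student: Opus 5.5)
The plan is to sandwich $\mathrm{REV}(M^a)$ between the virtual-surplus upper bound and a lower expression obtained from the revenue decomposition evaluated at the candidate menu itself.

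\textbf{Upper bound.} Lemma \ref{RankMenuEquivalence} (which uses Assumption \ref{ex}) gives $\mathrm{REV}_{\mathcal Q}^{*}=\mathrm{REV}_{\mathrm{menu}}^{*}$. Proposition \ref{GapVirtualSurplusUpperBound} bounds every IC and IR top-rank menu by $B(a)$. Hence $\mathrm{REV}_{\mathcal Q}^{*}\leq B(a)$.

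\textbf{The candidate menu is admissible.} $M^a$ is a finite, report-independent, deterministic top-rank menu. It has prices $p^a_n\geq0$ and includes $(0,0)$, so it is IR; it is IC by self-selection. Therefore $M^a\in\mathcal M_{\mathrm{menu}}$. This yields the left inequality $0\leq \mathrm{REV}_{\mathcal Q}^{*}-\mathrm{REV}(M^a)$.

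\textbf{Decomposition at the candidate menu.} By Lemma \ref{FiniteVirtualSurplusImplementationLemma}, Condition \ref{FiniteVirtualSurplusImplementation} implies that the buyer's selection under $M^a$ coincides $\gamma$-a.s. with $n^a(\delta)$. On each $D^a_n$ the inequalities $p_n^a-p_m^a\leq b^a_{nm}\leq V_n(\delta)-V_m(\delta)$ make $n$ payoff-maximal, and the null set of ties does not affect expectations. So the selected lottery is $\chi(\delta)=\mathbf 1\{n=n^a(\delta)\}$ a.s. Substituting into the third line of Proposition \ref{GapRevenueDecomposition} gives
\[
\mathrm{REV}(M^a)=\mathbb E_\gamma\big[\Psi^a_{n^a(\delta)}(\delta)\big]-\mathcal B^a(U^a)=B(a)-\mathcal B^a(U^a).
\]
Combining this with the upper bound gives $\mathrm{REV}_{\mathcal Q}^{*}-\mathrm{REV}(M^a)\leq B(a)-\mathrm{REV}(M^a)=\mathcal B^a(U^a)$.

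\textbf{Exact optimality.} Under Condition \ref{ZeroBoundaryResidual}, we have $\mathrm{REV}(M^a)=B(a)\geq\mathrm{REV}_{\mathcal Q}^{*}\geq\mathrm{REV}_{\mathrm{menu}}^*\geq\mathrm{REV}(M^a)$, so all four quantities are equal. For the last claim, take any $a'\in\Delta_K$. Proposition \ref{GapVirtualSurplusUpperBound} applied to $M^a$ gives $B(a')\geq\mathrm{REV}(M^a)=B(a)$, so $a$ minimizes $B$, i.e., $a\in\mathcal A^*$.

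\textbf{Main obstacle.} The delicate step is the exact identity $\mathrm{REV}(M^a)=B(a)-\mathcal B^a(U^a)$. It requires three things:
\begin{itemize}
\item the decomposition in Proposition \ref{GapRevenueDecomposition} to apply to $M^a$, which needs the integrability and absolute-continuity facts established before it, all valid for finite menus under Assumption \ref{MightAsWelllWalkingOnTheSun};
\item the payoff-maximizing selection to agree with the tie-broken virtual-surplus maximizer almost surely;
\item the boundary sets of the cells $D^a_n$, on which these could disagree, to be $\gamma$-null.
\end{itemize}
For the second and third points I would appeal to Lemma \ref{FiniteVirtualSurplusImplementationLemma}, together with the measurability in Lemma \ref{VirtualSurplusCellMeasurability}. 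For quantities outside $\mathcal N^a$, I would note that their cells are null, so they contribute nothing to $B(a)$.
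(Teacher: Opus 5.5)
Your proposal is correct and follows the paper's proof essentially step for step. The paper also obtains $\mathrm{REV}_{\mathcal Q}^{*}=\mathrm{REV}_{\mathrm{menu}}^{*}\leq B(a)$ from Proposition~\ref{GapVirtualSurplusUpperBound} and Lemma~\ref{RankMenuEquivalence}, derives $\mathrm{REV}(M^a)=B(a)-\mathcal B^a(U^a)$ from Lemma~\ref{FiniteVirtualSurplusImplementationLemma} and Proposition~\ref{GapRevenueDecomposition}, uses feasibility of $M^a$ for the lower bound, and concludes $a\in\mathcal A^*$ from $B(\widetilde a)\geq\mathrm{REV}(M^a)=B(a)$; your extra care about ties and the null cells outside $\mathcal N^a$ only makes explicit what the paper leaves implicit.
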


For any feasible top-rank menu $M$ with indirect utility $U$ and selected quantity lottery $\chi$, the revenue decomposition yields
\begin{equation}\label{SuoAiGeShanHai}
B(a)-\mathrm{REV}(M)=\mathbb E_\gamma\left[\max_{n=0,\dots,K}\Psi_n^a(\delta)-\sum_{n=0}^K\chi_n(\delta)\Psi_n^a(\delta)\right]+\mathcal B^a(U).
\end{equation}
Both terms on the right-hand side of \eqref{SuoAiGeShanHai} are nonnegative.  Implementing a virtual-surplus-maximizing quantity rule eliminates the first term, while zero utility on the active conditional lower faces eliminates the second. 

\noindent\textbf{Critical-price verification.} For each $c\in\mathcal C_\gamma$, summing its critical cutoffs from rank $r$ onward yields the rank price $\sum_{j=r}^Kc_j$. Summing these rank prices yields the candidate quantity prices
\begin{equation}\label{CriticalQuantityPrices}
P_n^c:=\sum_{r=1}^n\sum_{j=r}^Kc_j=V_n(c),\;\;n=0,\dots,K.
\end{equation}
At these prices, the payoff from quantity $n$ is $V_n(\delta)-V_n(c)$, so the critical comparison cells identify payoff-maximizing quantities. This payoff need not increase with quantity, since the differences $\delta_j-c_j$ can be negative. The associated menu is $M^c:=\{(0,0)\}\cup\{(n,P_n^c):n\in\mathcal N(c)\setminus\{0\}\}$, with indirect utility $U^c(\delta):=\max_{n\in\{0\}\cup\mathcal N(c)}\{V_n(\delta)-P_n^c\}$.

Lemma \ref{YouDontDeserveItImpliesZeroResidual} in the Appendix shows that Assumption \ref{YouDontDeserveIt} eliminates the lower-face residual of $M^c$.

\begin{proposition}[Closed-form exact optimality]\label{ClosedFormExactOptimality}
Suppose Assumptions \ref{ex} and \ref{MightAsWelllWalkingOnTheSun} hold, and there exists $c\in\mathcal C_\gamma$ satisfying both Assumptions \ref{TopRankSignAlignment} and \ref{YouDontDeserveIt}. For any such $c$, the associated menu $M^c$ is optimal among all feasible general \textit{direct} mechanisms. In particular,
\begin{equation*}
\mathrm{REV}_{\mathcal Q}^{*}
=
\mathrm{REV}_{\mathrm{menu}}^{*}
=
\mathrm{REV}(M^c).
\end{equation*}
Moreover, $a^c\in\mathcal A^*$.
\end{proposition}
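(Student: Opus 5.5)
The plan is to reduce Proposition \ref{ClosedFormExactOptimality} to Theorem \ref{EndogenousGapFlowGap} with $a:=a^c$. I will check that the critical-price menu $M^c$ is an admissible choice of the price vector in Condition \ref{FiniteVirtualSurplusImplementation}, and that Condition \ref{ZeroBoundaryResidual} holds for it. Once that is done, the theorem gives $\mathrm{REV}_{\mathcal Q}^{*}=\mathrm{REV}_{\mathrm{menu}}^{*}=\mathrm{REV}(M^c)=B(a^c)$ and $a^c\in\mathcal A^*$.

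\textbf{Step 1: the quantity rule is almost surely virtual-surplus maximizing.} The cells $\mathscr C_n(c)$ cover $\Omega$, because for every $\delta$ some quantity maximizes $V_n(\delta)-V_n(c)$. Cells with $n\notin\mathcal N(c)$ are null. By Assumption \ref{TopRankSignAlignment}, for $n\in\mathcal N(c)$ and almost every $\delta\in\mathscr C_n(c)$, quantity $n$ maximizes $\Psi^{a^c}_m(\delta)$ over $m$. Hence the quantity selected by $M^c$ attains $\max_m\Psi^{a^c}_m(\delta)$ almost surely.

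Rather than matching the tie-broken rule $n^{a}$ and its cells $D^a_n$ exactly, I would avoid that bookkeeping. Decomposition \eqref{SuoAiGeShanHai} applies to any feasible top-rank menu, so I apply it directly to $M^c$. Its first term vanishes because the selected $\chi$ puts mass on a maximizer almost surely. This yields $B(a^c)-\mathrm{REV}(M^c)=\mathcal B^{a^c}(U^c)$.

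Feasibility of $M^c$ requires two facts.
\begin{itemize}
\item Its prices are nonnegative: $P^c_n=V_n(c)\ge0$ since $c_j\ge0$.
\item It is IC and IR: IC holds as a self-selection menu, and IR holds because $(0,0)$ is offered.
\end{itemize}
Together with Lemma \ref{RankMenuEquivalence}, this places $M^c$ in $\mathcal M_{\mathrm{menu}}$.

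\textbf{Step 2: zero lower-face residual.} I would invoke Lemma \ref{YouDontDeserveItImpliesZeroResidual}. To sketch why it holds: for active $j$ and almost every $Z_j$, $U^c(\underline\delta_j(Z_j),Z_j)=\max\{0,\max_{n\in\mathcal N(c)\setminus\{0\}}(V_n(\underline\delta_j,Z_j)-V_n(c))\}$. Assumption \ref{YouDontDeserveIt} makes each inner term nonpositive, so this maximum is $0$. Weighting by $a^c_j$ and integrating gives $\mathcal B^{a^c}(U^c)=0$, and hence $\mathrm{REV}(M^c)=B(a^c)$.

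\textbf{Step 3: conclusion.} By Proposition \ref{GapVirtualSurplusUpperBound}, every menu has revenue at most $B(a^c)$. So $M^c$ attains $\mathrm{REV}^*_{\mathrm{menu}}$, and Lemma \ref{RankMenuEquivalence} (which uses Assumption \ref{ex}) transfers this to $\mathrm{REV}^*_{\mathcal Q}$. For $a^c\in\mathcal A^*$: for any $a\in\Delta_K$, Proposition \ref{GapVirtualSurplusUpperBound} gives $B(a)\ge\mathrm{REV}(M^c)=B(a^c)$.

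\textbf{Main obstacle.} The delicate part is Step 1's measure-theoretic care. Lower faces $\{\underline\delta_j(Z_j)\}\times Z_j$ may lie in $\Omega\setminus\operatorname{int}\Omega$, where the almost-everywhere statements do not directly apply. This is why Assumption \ref{YouDontDeserveIt} is stated for almost every $Z_j$ and uses $U^c$ evaluated at the boundary point, which is well defined by the continuity of $U^c$. If one insists on using Condition \ref{FiniteVirtualSurplusImplementation} literally, one must also check that $b^{a}_{nm}\ge V_n(c)-V_m(c)$ on the candidate cells, up to null sets. I would handle this by choosing the tie-breaking order so that $n^{a^c}$ agrees with the $M^c$ selection, and by removing null sets from the cells. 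Going through \eqref{SuoAiGeShanHai} directly sidesteps this.
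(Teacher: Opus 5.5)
Your proposal is correct and, in substance, is the paper's own proof. The paper argues in the same order: the critical prices implement the cells $\mathscr C_n(c)$, Assumption~\ref{TopRankSignAlignment} makes the first term of \eqref{SuoAiGeShanHai} vanish, Lemma~\ref{YouDontDeserveItImpliesZeroResidual} removes the residual, and Proposition~\ref{GapVirtualSurplusUpperBound} with Lemma~\ref{RankMenuEquivalence} gives optimality and $a^c\in\mathcal A^*$. Your opening reduction to Theorem~\ref{EndogenousGapFlowGap} through Condition~\ref{FiniteVirtualSurplusImplementation} is not needed, and it would be fussier than you suggest because $b^a_{nm}$ is an infimum over all of $D^a_n$ rather than an essential infimum, so null sets cannot simply be discarded. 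You are right to apply \eqref{SuoAiGeShanHai} directly to $M^c$, as the paper does.
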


\subsection{Extensions}\label{DroppingGapRentRegularityDiscussion} 
The main extension in this subsection allows conditional inverse-hazard ratios to depend on the other gaps. Its application to Example~\ref{StateDependentNonpurchaseExample} appears in Subsection~\ref{ApplicationStateDependent}.  Remarks~\ref{MultipleUnitsRevenueEquivalence} and~\ref{HomogeneousUnitsWithoutExchangeability} extend the framework to multiple units of heterogeneous products and to a homogeneous product, respectively; the latter requires no exchangeability. 

The conditional rent identity \eqref{StateDependentConditionalRentIdentity} permits direct verification of quantity prices when inverse-hazard ratios depend on the other gaps. Retain the interval sections in \eqref{ConditionalGapSection}, assume joint measurability and the integrability required for conditional integration by parts, and set undefined ratios to zero on the remaining $\gamma$-null set. Assumptions \ref{MightAsWelllWalkingOnTheSun}, \ref{TopRankSignAlignment}, and \ref{YouDontDeserveIt} are not imposed.
For $a\in\Delta_K$, define the state-dependent rank-gap virtual surplus by $\phi_j^{a,\mathrm{sd}}(\delta):=\delta_j-a_jR_j(\delta_j,Z_j)$, $j=1,\dots,K$. Define $\Phi_{(r)}^{a,\mathrm{sd}}$ and $\Psi_n^{a,\mathrm{sd}}$ by the same sums as in \eqref{RankVirtualSurplus} and \eqref{QuantityVirtualSurplus}, replacing $\phi_j^a(\delta_j)$ by $\phi_j^{a,\mathrm{sd}}(\delta)$, and set $\Psi_0^{a,\mathrm{sd}}=0$. Let $B^{\mathrm{sd}}(a):=\mathbb E_\gamma[\max_{n=0,\dots,K}\Psi_n^{a,\mathrm{sd}}(\delta)]$.
\begin{proposition}[State-dependent weight upper bound]\label{ConditionalGapFlowBound}
Under the conditional-section regularity maintained in this subsection, every IC and IR top-rank menu $M$ satisfies $\mathrm{REV}(M)\leq B^{\mathrm{sd}}(a)$ for every $a\in\Delta_K$.
\end{proposition}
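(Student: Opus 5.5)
The argument reruns the proof of Propositions \ref{GapRevenueDecomposition} and \ref{GapVirtualSurplusUpperBound}, starting from the state-dependent conditional rent identity \eqref{StateDependentConditionalRentIdentity} rather than its state-independent form \eqref{ConditionalGapRentIdentity}. Fix an IC and IR top-rank menu $M$ with selected lottery $\chi$, cumulative allocation $Q_j$ and indirect utility $U$. The envelope identity \eqref{Envelope} and the absolute continuity of $U$ on bounded conditional sections depend only on the menu structure, so they still hold. Under the maintained interval sections and integrability, \eqref{StateDependentConditionalRentIdentity} holds for each gap $j$ and almost every $Z_j$. Integrating over $Z_j$ with the law of iterated expectations gives
\[
\mathbb E_\gamma[U(\delta)]=\mathbb E_\gamma[Q_j(\delta)R_j(\delta_j,Z_j)]+\mathbb E_{Z_j}[U(\underline\delta_j(Z_j),Z_j)],\quad j=1,\dots,K.
\]
The convention that undefined ratios equal zero on a $\gamma$-null set does not affect these expectations.

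Next I fix $a\in\Delta_K$, multiply the $j$th identity by $a_j$, and sum over $j$. Because $\sum_j a_j=1$, the left side is still $\mathbb E_\gamma[U]$. Substituting into the revenue expression \eqref{RankGapExpectedRevenue} then gives
\[
\mathrm{REV}(M)=\mathbb E_\gamma\Big[\sum_{j=1}^K Q_j(\delta)\phi_j^{a,\mathrm{sd}}(\delta)\Big]-\mathcal B^a(U),
\]
with the same lower-face residual $\mathcal B^a(U)$ as before. This residual is nonnegative by IR and $a_j\ge0$. By \eqref{CumulativeRankAllocation}, $Q_j=\sum_n\min\{n,j\}\chi_n$, so the finite rearrangement used in Proposition \ref{GapRevenueDecomposition} rewrites the integrand as $\sum_{n=0}^K\chi_n(\delta)\Psi_n^{a,\mathrm{sd}}(\delta)$. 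This step uses only the algebraic definitions of $\Phi^{a,\mathrm{sd}}_{(r)}$ and $\Psi^{a,\mathrm{sd}}_n$, which do not depend on state independence.

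Since $\chi(\delta)$ is a probability vector on $\{0,\dots,K\}$, the integrand is pointwise at most $\max_n\Psi_n^{a,\mathrm{sd}}(\delta)$. Dropping the nonnegative term $\mathcal B^a(U)$ then yields $\mathrm{REV}(M)\le B^{\mathrm{sd}}(a)$.

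The main obstacle is integrability once $R_j$ depends on $Z_j$. In the state-independent case, boundedness of $R_j$ on the compact set $I_j$ was used; here it is not available. I would rely on the integrability maintained in this subsection for $\gamma$-integrability of $R_j(\delta_j,Z_j)$, which suffices because $0\le Q_j\le j$ and $0\le U\le V_K$. This gives integrability of each term in the conditional identity and makes the averaging and interchange of sums and expectations legitimate. If $\mathbb E_\gamma[\max_n\Psi_n^{a,\mathrm{sd}}]$ were not finite the bound would hold trivially, so only the finite case needs the argument above.
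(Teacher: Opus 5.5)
Your proposal is correct and follows the paper's proof essentially step for step: you integrate the state-dependent conditional rent identity over $Z_j$, average with weights $a_j$ using $\sum_j a_j=1$, substitute into the revenue expression, regroup via $Q_j=\sum_n\min\{n,j\}\chi_n$ into $\sum_n\chi_n\Psi_n^{a,\mathrm{sd}}$, and then bound pointwise while dropping the nonnegative residual $\mathcal B^a(U)$. You also correctly note that integrability of $R_j(\delta_j,Z_j)$ must now come from the maintained assumptions rather than from compactness of $I_j$; the paper leaves this point implicit.
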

Fix $a^{\mathrm{sd}}\in\Delta_K$ and a strict total order $\prec$ on $\{0,\dots,K\}$. Define
$$
D_n^{\mathrm{sd}}:=\left\{\delta\in\Omega:n=\min_{\prec}\left(\argmax_{m=0,\dots,K}\Psi_m^{a^{\mathrm{sd}},\mathrm{sd}}(\delta)\right)\right\},\;\; n=0,\dots,K,
$$
where $\min_{\prec}A$ denotes the first element of $A$ under $\prec$. Retain the outside option and all positive quantities assigned positive probability, so $\mathcal N^{\mathrm{sd}}:=\{0\}\cup\{n\in\{1,\dots,K\}:\int_{D_n^{\mathrm{sd}}}\gamma(\delta)d\delta>0\}$. For $n,m\in\mathcal N^{\mathrm{sd}}$, define $b_{nm}^{\mathrm{sd}}:=\inf_{\delta\in D_n^{\mathrm{sd}}}\{V_n(\delta)-V_m(\delta)\}$, with $\inf\emptyset=+\infty$. The following conditions extend Conditions \ref{FiniteVirtualSurplusImplementation} and \ref{ZeroBoundaryResidual} to state-dependent ratios.
\begin{condition}[State-dependent finite implementation]\label{StateDependentFiniteImplementation}
For the fixed weight vector $a^{\mathrm{sd}}$, there exist prices $\{p_n^{\mathrm{sd}}\}_{n\in\mathcal N^{\mathrm{sd}}}$ with $p_0^{\mathrm{sd}}=0$ and $p_n^{\mathrm{sd}}\geq0$ such that $p_n^{\mathrm{sd}}-p_m^{\mathrm{sd}}\leq b_{nm}^{\mathrm{sd}}$ for all $n,m\in\mathcal N^{\mathrm{sd}}$.
\end{condition}
Any price vector satisfying Condition \ref{StateDependentFiniteImplementation} induces the state-dependent candidate quantity-based menu $M^{\mathrm{sd}}:=\{(n,p_n^{\mathrm{sd}}):n\in\mathcal N^{\mathrm{sd}}\}$ with indirect utility $U^{\mathrm{sd}}(\delta):=\max_{n\in\mathcal N^{\mathrm{sd}}}\{V_n(\delta)-p_n^{\mathrm{sd}}\}$. All types face the same quantity prices; the conditional inverse-hazard ratios may vary with the other private gaps.
\begin{condition}[State-dependent zero lower-face residual]\label{StateDependentZeroBoundary}
For the same fixed weight vector $a^{\mathrm{sd}}$ and the menu $M^{\mathrm{sd}}$ induced by the chosen price vector in Condition \ref{StateDependentFiniteImplementation}, assume  $\mathcal B^{a^{\mathrm{sd}}}(U^{\mathrm{sd}})=0$.
\end{condition}
\begin{proposition}[Optimal menu under the state-dependent weight test]\label{StateDependentExactOptimality}
Fix $a^{\mathrm{sd}}\in\Delta_K$ and suppose Assumption \ref{ex} holds. Let $M^{\mathrm{sd}}$ be induced by a price vector satisfying Condition \ref{StateDependentFiniteImplementation} for this $a^{\mathrm{sd}}$. If Condition \ref{StateDependentZeroBoundary} holds for the same $a^{\mathrm{sd}}$ and $M^{\mathrm{sd}}$, and $B^{\mathrm{sd}}(a^{\mathrm{sd}})<\infty$, then $M^{\mathrm{sd}}$ is optimal among all feasible general \textit{direct} mechanisms. In particular,
$$
\mathrm{REV}_{\mathcal Q}^{*}=\mathrm{REV}_{\mathrm{menu}}^{*}=\mathrm{REV}(M^{\mathrm{sd}})=B^{\mathrm{sd}}(a^{\mathrm{sd}}).
$$
Moreover, $a^{\mathrm{sd}}\in\argmin_{a\in\Delta_K}B^{\mathrm{sd}}(a)$.
\end{proposition}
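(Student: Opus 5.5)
The argument follows the proof of Theorem \ref{EndogenousGapFlowGap}, with the state-dependent identity \eqref{StateDependentConditionalRentIdentity} used in place of \eqref{ConditionalGapRentIdentity}. The aim is to show that $B^{\mathrm{sd}}(a^{\mathrm{sd}})$ bounds all feasible revenue from above and that $M^{\mathrm{sd}}$ attains it.

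\emph{Reduction to menus.} Lemma \ref{RankMenuEquivalence} uses only Assumption \ref{ex}, so $\mathrm{REV}_{\mathcal Q}^{*}=\mathrm{REV}_{\mathrm{menu}}^{*}$. Proposition \ref{ConditionalGapFlowBound} gives $\mathrm{REV}(M)\le B^{\mathrm{sd}}(a)$ for every IC and IR top-rank menu $M$ and every $a\in\Delta_K$. Hence $\mathrm{REV}_{\mathcal Q}^{*}\le B^{\mathrm{sd}}(a^{\mathrm{sd}})<\infty$.

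\emph{Decomposition of the candidate menu.} I apply the argument behind Proposition \ref{ConditionalGapFlowBound} directly to $M^{\mathrm{sd}}$:
\begin{itemize}
\item Multiply \eqref{StateDependentConditionalRentIdentity} by $a^{\mathrm{sd}}_j$, integrate over $Z_j$, and sum over $j$. Since $\sum_j a_j^{\mathrm{sd}}=1$, this expresses $\mathbb E_\gamma[U^{\mathrm{sd}}]$ through the weighted rent terms.
\item Substitute into \eqref{RankGapExpectedRevenue} and regroup the finite sums as in Proposition \ref{GapRevenueDecomposition}. This gives
\[
\mathrm{REV}(M^{\mathrm{sd}})=\mathbb E_\gamma\Big[\sum_n\chi_n^{\mathrm{sd}}(\delta)\Psi_n^{a^{\mathrm{sd}},\mathrm{sd}}(\delta)\Big]-\mathcal B^{a^{\mathrm{sd}}}(U^{\mathrm{sd}}).
\]
\item The menu is finite and deterministic, so $U^{\mathrm{sd}}$ is a maximum of finitely many affine functions. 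It is therefore convex and Lipschitz, and satisfies \eqref{Envelope}.
\end{itemize}

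\emph{Implementation.} Condition \ref{StateDependentFiniteImplementation} is used as in Lemma \ref{FiniteVirtualSurplusImplementationLemma}. For $\delta\in D_n^{\mathrm{sd}}$ and $m\in\mathcal N^{\mathrm{sd}}$, we have $V_n(\delta)-V_m(\delta)\ge b^{\mathrm{sd}}_{nm}\ge p_n^{\mathrm{sd}}-p_m^{\mathrm{sd}}$, so quantity $n$ is payoff-maximizing on $D_n^{\mathrm{sd}}$. The cells for quantities outside $\mathcal N^{\mathrm{sd}}$ are $\gamma$-null. Ties have measure zero, or are resolved by a selection consistent with $\prec$. Therefore the buyer's choice equals the pointwise argmax of $\Psi^{a^{\mathrm{sd}},\mathrm{sd}}$ almost surely, and the expectation term equals $B^{\mathrm{sd}}(a^{\mathrm{sd}})$.

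\emph{Conclusion.} Condition \ref{StateDependentZeroBoundary} removes the boundary term, so $\mathrm{REV}(M^{\mathrm{sd}})=B^{\mathrm{sd}}(a^{\mathrm{sd}})$. Combined with the upper bound, all the stated equalities follow. For any $a\in\Delta_K$, $B^{\mathrm{sd}}(a)\ge \mathrm{REV}(M^{\mathrm{sd}})=B^{\mathrm{sd}}(a^{\mathrm{sd}})$, which gives the argmin claim.

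\emph{Main obstacle: integrability.} Without Assumption \ref{MightAsWelllWalkingOnTheSun}, $R_j(\cdot,Z_j)$ need not be bounded. The regrouping of $\mathbb E[\sum_j Q_j\phi_j^{\mathrm{sd}}]$ must therefore be justified, and $\mathbb E[Q_jR_j]$ must be shown finite. The plan is as follows:
\begin{itemize}
\item The rent terms are nonnegative, because $Q_j\ge0$ and $R_j\ge0$.
\item Each rent term is bounded by $\mathbb E[U^{\mathrm{sd}}]\le \mathbb E[V_K]<\infty$ through the conditional identity, applied via Tonelli on nonnegative integrands.
\item The hypothesis $B^{\mathrm{sd}}(a^{\mathrm{sd}})<\infty$, together with the maintained integrability of the subsection, ensures that the maximizing virtual surplus is integrable. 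This makes the subtractions legitimate.
\end{itemize}
If needed, I would truncate the sections and pass to the limit by monotone convergence.
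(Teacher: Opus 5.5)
Your proposal is correct and follows the paper's proof essentially step for step: Condition \ref{StateDependentFiniteImplementation} implements the candidate rule as in Lemma \ref{FiniteVirtualSurplusImplementationLemma}, the state-dependent revenue identity with Condition \ref{StateDependentZeroBoundary} gives $\mathrm{REV}(M^{\mathrm{sd}})=B^{\mathrm{sd}}(a^{\mathrm{sd}})$, Proposition \ref{ConditionalGapFlowBound} with Lemma \ref{RankMenuEquivalence} supplies the matching upper bound, and the argmin claim holds because every $B^{\mathrm{sd}}(a)$ bounds the revenue of this same feasible menu. Your integrability discussion only makes explicit what the paper absorbs into the maintained regularity of the subsection. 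One point to tighten: the selection should pick $n$ on $D_n^{\mathrm{sd}}$, as in the paper's lemma, rather than break payoff ties by $\prec$ itself, because a $\prec$-earlier quantity that ties in payoff there has strictly lower state-dependent virtual surplus.
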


\begin{remark}[Multiple units of heterogeneous products]\label{MultipleUnitsRevenueEquivalence}
Partition $\mathcal K$ into nonempty groups $\mathcal K_1,\ldots,\mathcal K_N$, with $|\mathcal K_i|$ units of product $i$. Let $\theta$ follow the baseline distribution $F$ satisfying Assumption \ref{ex}. Within each product, order its coordinates from highest to lowest and interpret them as the buyer's private marginal values for successive units. Total value is the sum across products. Sorting within products preserves the globally ranked values, $\delta$, and $\gamma$. The highest $n$ marginal values form a feasible product-quantity choice: whenever a unit is selected, all earlier units of that product are also selected, with ties resolved in favor of earlier units. This choice attains $V_n(\delta)$. The Appendix proves that optimal expected revenue equals $\mathrm{REV}_{\mathcal Q}^{*}$. Whenever Theorem \ref{EndogenousGapFlowGap} establishes optimality of $M^a$, the same total-quantity prices are therefore optimal among all feasible IC and IR mechanisms, including lotteries over product quantities. 
\end{remark}
\begin{remark}[Homogeneous units without exchangeability]\label{HomogeneousUnitsWithoutExchangeability}
For a single homogeneous product, let $\theta^\downarrow$ denote the buyer's private vector of decreasing marginal values, with density $g$. No exchangeability assumption is required. Define $\delta$ as before, with density $\gamma(\delta)=g(\mathbf T\delta)$, and retain the density and conditional-section regularity assumptions in rank-gap coordinates. For any incentive-compatible mechanism with selected quantity lottery $\chi$ and indirect utility $U$, the envelope identity is $\partial_jU(\delta)=\sum_{n=0}^K\min\{n,j\}\chi_n(\delta)$ almost everywhere. The revenue decomposition therefore applies directly. Under Assumption \ref{MightAsWelllWalkingOnTheSun}, Conditions \ref{FiniteVirtualSurplusImplementation} and \ref{ZeroBoundaryResidual} ensure that $M^a$ is optimal among all feasible incentive-compatible and individually rational mechanisms, including quantity lotteries. Decreasing rank virtual surplus is not required.
\end{remark}

\section{Exact-optimality applications}\label{ExactOptimalityApplications}

This section applies Theorem \ref{EndogenousGapFlowGap} and Proposition \ref{ClosedFormExactOptimality} to the four distribution families. 

\subsection{Joint truncation: optimal menus with two positive quantities}\label{ApplicationJointTruncation}

For Distribution Family \ref{GaussianGapDensityFamily}, the following quantity prices attain the upper bound in Theorem \ref{EndogenousGapFlowGap}.

\begin{proposition}[Exact optimality under jointly truncated nonincreasing log-concave factors]\label{GaussianSparseExactness}
For Distribution Family \ref{GaussianGapDensityFamily}, consider the quantity prices
$$p_0^a=0,\;\; p_n^a=\sum_{j=2}^{K-1}j\overline\delta_j+\varpi+n\upsilon,\;\; n\in\{K-1,K\}.$$
Then $M^a=\{(0,0),(K-1,p_{K-1}^a),(K,p_K^a)\}$ is optimal among all feasible general \textit{direct} mechanisms. It satisfies Conditions \ref{FiniteVirtualSurplusImplementation} and \ref{ZeroBoundaryResidual}, and
$$\mathrm{REV}(M^a)=B(a),\;\; a\in\mathcal A^*.$$
Both positive quantities are selected with positive probability; quantity $0$ is selected with probability zero.
\end{proposition}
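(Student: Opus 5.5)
The plan is to apply Theorem~\ref{EndogenousGapFlowGap} with the single-active weight $a=e_K$, i.e.\ $a_K=1$ and $a_j=0$ for $j<K$. Assumption~\ref{MightAsWelllWalkingOnTheSun} holds for the family with $R_K=S_K/\psi_K$. It then remains to verify Condition~\ref{FiniteVirtualSurplusImplementation} and Condition~\ref{ZeroBoundaryResidual} for the stated prices. The theorem then delivers optimality, $\mathrm{REV}(M^a)=B(a)$ and $a\in\mathcal A^*$.

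\textbf{Step 1 (virtual-surplus cells).} With $a=e_K$ we have $\phi_j^a(\delta_j)=\delta_j$ for $j<K$ and $\phi_K^a(\delta_K)=\delta_K-R_K(\delta_K)$. The rank virtual surpluses are
\[
\Phi_{(n)}^a(\delta)=\sum_{j=n}^{K-1}\delta_j+\delta_K-R_K(\delta_K),
\]
which are nonincreasing in $n$. Hence $\argmax_n\Psi_n^a$ is the largest $n$ with $\Phi_{(n)}^a\ge0$. Because $\psi_K$ is nonincreasing and log-concave, $R_K$ is nonincreasing (Lemma~\ref{GapFactorizationSeparability}). Therefore $\phi_K^a$ is increasing and crosses zero only at $\upsilon$, so $\Phi_{(K)}^a\ge0$ if and only if $\delta_K\ge\upsilon$.

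On $\{\delta_K<\upsilon\}$ I would show that $\Phi_{(K-1)}^a=\delta_{K-1}+\delta_K-R_K(\delta_K)\ge0$ almost surely on $\Omega$. The argument combines the support constraint \eqref{GaussianGapSupport}, the projection $I_K=[\ell_K,\overline\delta_K]$, monotonicity of $R_K$, and \eqref{GaussianLastGapBound}. This is the step I expect to be the main obstacle. The bound \eqref{GaussianLastGapBound} is stated only at $\delta_{K-1}=\overline\delta_{K-1}$ and $\delta_K=\ell_K$. One must therefore argue that the support inequality forces $\delta_{K-1}$ to be large whenever $\delta_K$ is small. I would first try to express the lower endpoint $\underline\delta_K(Z_K)$ through the binding $n=K-1$ constraint. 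Then I would compare $\delta_{K-1}+\delta_K$ with $\overline\delta_{K-1}+\ell_K$ along the face where the constraint binds, using that $R_K$ decreases at least as fast as needed.

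Granted this, the candidate cells are $D_K^a=\{\delta_K\ge\upsilon\}$ and $D_{K-1}^a=\{\delta_K<\upsilon\}$, and $D_0^a$ is null. Both of the first two cells have positive probability because $\upsilon\in(\ell_K,\overline\delta_K)$.

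\textbf{Step 2 (implementation).} Write $p_n:=p_n^a$ and check the inequalities $p_n-p_m\le b_{nm}^a$ of Condition~\ref{FiniteVirtualSurplusImplementation}.
\begin{itemize}
\item Between $K$ and $K-1$: $V_K-V_{K-1}=\delta_K\ge\upsilon=p_K-p_{K-1}$ on $D_K^a$. Symmetrically, $V_{K-1}-V_K=-\delta_K>-\upsilon$ on $D_{K-1}^a$.
\item Against quantity $0$: on $D_K^a$,
\[
V_K-p_K=(V_{K-1}-p_{K-1})+(\delta_K-\upsilon)\ge\max_{n\in\{K-1,K\}}\{V_n-p_n\}\ge0
\]
by \eqref{GaussianGapSupport}. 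The analogous argument gives $V_{K-1}-p_{K-1}\ge0$ on $D_{K-1}^a$.
\end{itemize}
Since $D_0^a$ is null, $b_{0n}^a=+\infty$. The prices are nonnegative by inspection because $\varpi\ge0$, so Condition~\ref{FiniteVirtualSurplusImplementation} holds.

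\textbf{Step 3 (zero residual).} Only the face for $j=K$ carries weight, so I need $U^a(\underline\delta_K(Z_K),Z_K)=0$ for almost every $Z_K$. Note that $U^a=\max_{n\in\{K-1,K\}}\{V_n-p_n\}\vee0$, and the left-hand side of \eqref{GaussianGapSupport} is exactly this max of payoffs. It therefore suffices to show that the conditional lower endpoint is where this max equals zero, rather than the coordinate bound $0$.

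Set $\delta_K=0$. Using $\delta_j\le\overline\delta_j$ and $\varpi\ge\overline\delta_1-(K-1)\upsilon$, the payoff from quantity $K$ is at most
\[
\overline\delta_1-\varpi-K\upsilon\le-\upsilon<0,
\]
and the payoff from $K-1$ is bounded similarly. Hence the constraint fails at $\delta_K=0$, and by continuity and monotonicity in $\delta_K$ the section starts where the max is zero. Thus $\mathcal B^a(U^a)=0$.

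The claim that quantity $0$ has probability zero and both positive quantities have positive probability then follows from Step 1.
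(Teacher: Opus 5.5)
Your overall route is the same as the paper's. You apply Theorem \ref{EndogenousGapFlowGap} with $a_K=1$ and then check Conditions \ref{FiniteVirtualSurplusImplementation} and \ref{ZeroBoundaryResidual} for the stated prices, and your Steps 2 and 3 are essentially the paper's checks. The gap is the step you flag yourself. You never prove that $\Phi^a_{(K-1)}(\delta)=\delta_{K-1}+\delta_K-R_K(\delta_K)\geq0$ on $\Omega\cap\{\delta_K<\upsilon\}$, and everything after ``Granted this'' depends on it. This is the only place where \eqref{GaussianLastGapBound} enters, and the paper shows that this bound is also necessary. So this step is the substance of the proposition, not a routine detail. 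Your sketch also misdiagnoses it. No reduction to a binding face is needed, and no rate condition on $R_K$ is involved; only monotonicity of $R_K$ is used.

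The missing idea is a direct rewriting of the support constraint. The definition of $\ell_K$ gives $\varpi+(K-1)\upsilon=\overline\delta_1+(K-1)\ell_K$, so $p^a_{K-1}=\overline\delta_1+\sum_{j=2}^{K-1}j\overline\delta_j+(K-1)\ell_K$. If $\delta\in\Omega$ and $\delta_K\leq\upsilon$, the quantity-$K$ payoff is at most the quantity-$(K-1)$ payoff. Membership in \eqref{GaussianGapSupport} therefore means $V_{K-1}(\delta)\geq p^a_{K-1}$, that is,
\[
(\overline\delta_1-\delta_1)+\sum_{j=2}^{K-1}j(\overline\delta_j-\delta_j)\leq(K-1)(\delta_K-\ell_K).
\]
Every term on the left is nonnegative, so the left side is at least $(K-1)(\overline\delta_{K-1}-\delta_{K-1})$; for $K=2$ it equals this. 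Hence, pointwise on this region,
\[
\delta_{K-1}+\delta_K\geq\overline\delta_{K-1}+\ell_K\geq R_K(\ell_K)\geq R_K(\delta_K).
\]
The middle step is \eqref{GaussianLastGapBound}. The last step uses only $\delta_K\geq\ell_K$, which the displayed inequality itself implies, and the fact that $R_K$ is nonincreasing. Since $\delta_j\geq0$, you also get $\Phi^a_{(r)}\geq\Phi^a_{(K-1)}\geq0$ for $r<K-1$, which yields your cells.

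Three smaller repairs are needed.

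\begin{itemize}
\item A null cell is not an empty one, so ``$D_0^a$ is null, hence $b^a_{0n}=+\infty$'' is not valid as written. You can fix this by choosing the tie-breaking order that favors larger quantities: the inequality above holds everywhere on $\Omega\cap\{\delta_K\leq\upsilon\}$, so this makes $D_0^a=\varnothing$. Alternatively, check $V_n\leq p^a_n$ directly on $D_0^a$.
\item In Step 3, at $\delta_K=0$ the quantity-$(K-1)$ payoff is only nonpositive, not negative. It equals zero when $\ell_K=0$ and the other gaps sit at their upper endpoints, as in Example \ref{GaussianCommonUpperSpecification}. So the constraint need not fail there. The conclusion $U^a=0$ at the lower endpoint still follows because both positive-quantity payoffs are nonpositive; this is how the paper treats a zero endpoint.
\item Positive probability of the two cells needs an open subset of $\operatorname{int}\Omega$, not just $\upsilon\in(\ell_K,\overline\delta_K)$. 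For example, take the first $K-1$ gaps near their upper endpoints and $\delta_K$ slightly below or slightly above $\upsilon$.
\end{itemize}
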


Under the remaining parameter restrictions, \eqref{GaussianLastGapBound} is necessary and sufficient for this menu and support to implement virtual-surplus maximization almost surely. At $c=(0,\dots,0,\upsilon)$, Assumption \ref{TopRankSignAlignment} holds, but Assumptions \ref{ActiveLocalInverseHazardAlignment} and \ref{YouDontDeserveIt} fail. The critical menu selects the same quantities at prices $(K-1)\upsilon$ and $K\upsilon$. Adding the common fee $\sum_{j=2}^{K-1}j\overline\delta_j+\varpi$ eliminates its positive boundary information rent. 

\noindent\textbf{Example~\ref{JointUniformLeadingExample} (continued).} Constant factors imply $R_2(z)=4-z$. Setting $\overline\delta_1=2$, $\overline\delta_2=4$, and $\varpi=1$ yields $\upsilon=2$ and $\ell_2=1$, with \eqref{GaussianLastGapBound} satisfied at equality. The support has area $17/2$, confirming the density $2/17$. Proposition~\ref{GaussianSparseExactness} establishes the optimality of prices $3$ for one object and $5$ for both. The buyer purchases one object with probability $2/17$ and both with probability $15/17$, so optimal expected revenue is $81/17$. Nonpurchase has probability zero. Figure~\ref{JointUniformLeadingFigure} illustrates these choices.

\begin{example}[A common-upper-bound specification: uniform and Gaussian cases]\label{GaussianCommonUpperSpecification}
Retain the density-factor restrictions of Distribution Family \ref{GaussianGapDensityFamily}. Fix $\overline\delta>0$, choose the terminal factor on $[0,\overline\delta]$, and let $\upsilon=R_K(\upsilon)$. Set $\overline\delta_j=\overline\delta$ for $2\leq j\leq K$, $\overline\delta_1=K\overline\delta-\upsilon$, and $\varpi=K(\overline\delta-\upsilon)$. All parameter restrictions hold, every actual projection is $I_j=[0,\overline\delta_j]$, and Assumption \ref{FlyMeToTheMoon} holds. The optimal menu is
$$
M^a=\left\{(0,0),
\left(K-1,\left[\frac{K(K+1)}2-1\right]\overline\delta-\upsilon\right),
\left(K,\left[\frac{K(K+1)}2-1\right]\overline\delta\right)\right\}.
$$

\noindent\textbf{Uniform factors.}
Constant factors imply $R_j(z)=\overline\delta_j-z$ and $\upsilon=\overline\delta/2$. For $K=3$ and $\overline\delta=1$, the optimal menu is $\{(0,0),(2,9/2),(3,5)\}$.

\noindent\textbf{Gaussian factors.}
For $K=3$ and $\overline\delta=1$, consider $\mathcal N(0,1)$ density factors, with their product jointly truncated to \eqref{GaussianGapSupport}. The cutoff solves $\upsilon=e^{\upsilon^2/2}\int_\upsilon^1e^{-s^2/2}ds$, so $\upsilon\approx0.4584638833$. The optimal menu is $\{(0,0),(2,5-\upsilon),(3,5)\}\approx\{(0,0),(2,4.5415361167),(3,5)\}$.
\end{example}

\subsection{A single active direction: full and sparse menus}\label{ApplicationSingleActive}

For the bounded distribution in Distribution Family \ref{SingleActiveGapDensityFamily}, the quantity prices are
\begin{equation}\label{SingleActiveQuantityPrices}
P_n^c=\min\{n,\iota\}\tau,\;\; n=0,\dots,K.
\end{equation}
Proposition \ref{ClosedFormExactOptimality} establishes exact optimality and $a^c\in\mathcal A^*$ for this bounded family. The support-relevant quantities are $\mathcal N(c)=\{0,1,\dots,\iota-1,K\}$, each selected with positive probability, as verified in the Appendix. When $\iota=K$, every quantity is offered at the common per-unit price $\tau$; when $\iota=1$, the menu reduces to pure bundling at price $\tau$. For $1<\iota<K$, quantities $1,\dots,\iota-1$ are priced at $n\tau$, and the full bundle is priced at $\iota\tau$.

The following proposition specializes Distribution Family \ref{SingleActiveGapDensityFamily} to $\iota=2$ and $\tau=\xi^{-1}$. The inactive factors are uniform, while the second factor has an exponential initial segment followed by a finite-endpoint tail.

\begin{proposition}[Rectangular product distribution with a sparse exact menu]\label{RectangularSparseExactness}
Let $K\geq3$ and $\xi>0$. Fix $\overline\delta_j>0$, $j=1,\dots,K$, such that
\begin{equation}\label{RectangularSparseBounds}
\overline\delta_2>\frac1\xi,\;\;
\sum_{\substack{j=1\\j\neq2}}^K\overline\delta_j
\leq
\frac1\xi,
\;\;\text{and}\;\;
\sum_{\substack{j=1\\j\neq2}}^K j\overline\delta_j
\leq
\frac2\xi.
\end{equation}
Retain the rectangular support of Distribution Family \ref{SingleActiveGapDensityFamily}, with $\iota=2$ and $\tau=\xi^{-1}$. For every $j\neq2$, let $\psi_j(z)=1/\overline\delta_j$ on $[0,\overline\delta_j]$. Choose the normalized second factor so that its upper mass satisfies
\begin{equation*}
S_2(z)=\begin{cases}
e^{-\xi z},&0\leq z\leq\xi^{-1},\\
e^{-1}\left(\dfrac{\overline\delta_2-z}{\overline\delta_2-\xi^{-1}}\right)^{\xi(\overline\delta_2-\xi^{-1})},&\xi^{-1}<z<\overline\delta_2.
\end{cases}
\end{equation*}
The second factor is therefore $\psi_2=-S_2'$, and the product-density form \eqref{PRoductDensitty} has $C=1$. Then Assumptions \ref{FlyMeToTheMoon}, \ref{ActiveLocalInverseHazardAlignment}, and \ref{YouDontDeserveIt} hold with $c=(0,1/\xi,0,\dots,0)$ and $a^c=(0,1,0,\dots,0)$. Moreover, $\mathcal N(c)=\{0,1,K\}$, and the finite top-rank menu
$$M^c=\left\{(0,0),\left(1,\frac1\xi\right),\left(K,\frac2\xi\right)\right\}$$
is exactly optimal among all feasible general \textit{direct} mechanisms.
\end{proposition}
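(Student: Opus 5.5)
Proposition~\ref{RectangularSparseExactness} is a specialization of Distribution Family~\ref{SingleActiveGapDensityFamily} with $\iota=2$ and $\tau=\xi^{-1}$. My plan is to verify the family's hypotheses directly and then invoke Proposition~\ref{ClosedFormExactOptimality}. Concretely, I will check Assumptions~\ref{FlyMeToTheMoon}, \ref{ActiveLocalInverseHazardAlignment} and \ref{YouDontDeserveIt} at $c=(0,\xi^{-1},0,\dots,0)$ and $a^c=(0,1,0,\dots,0)$. Lemma~\ref{ActiveLocalInverseHazardAlignmentImpliesAlignment} then gives Assumption~\ref{TopRankSignAlignment}, and the proposition delivers optimality of $M^c$ together with $a^c\in\mathcal A^*$. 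Exchangeability (Assumption~\ref{ex}) holds by construction, since the object density is the symmetrization of the ranked density.

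\textbf{Step 1: Assumption~\ref{FlyMeToTheMoon} and the critical vector.}
\begin{itemize}
\item The support $\Omega=\prod_j[0,\overline\delta_j]$ is a box, so sections have fixed endpoints and (i) is immediate.
\item For (ii), $S_2(0)=1$. The uniform factors have $S_j(z)=(\overline\delta_j-z)/\overline\delta_j$, which are linear and hence log-concave, and the product integrates to one, so $C=1$.
\item On $[0,\xi^{-1}]$, $\log S_2=-\xi z$ is linear. On the tail, $\log S_2$ is a positive multiple of $\log(\overline\delta_2-z)$, which is concave.
\item At the junction, the left slope is $-\xi$ and the right slope is $-\xi(\overline\delta_2-\xi^{-1})/(\overline\delta_2-\xi^{-1})=-\xi$. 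The derivative is continuous there, so $\psi_2$ is continuous and positive, and $\log S_2$ is globally concave.
\item For (iii), every $\ell_j=0$, so the sum vanishes and is less than $1$.
\item On $[0,\xi^{-1}]$, $R_2(z)=S_2/\psi_2=\xi^{-1}$ is constant. Hence $c_2-R_2(c_2)=0$ at $c_2=\xi^{-1}\in\operatorname{int}I_2$, using $\overline\delta_2>\xi^{-1}$.
\end{itemize}
Thus $c\in\mathcal C_\gamma$ with $a^c_2=1$.

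\textbf{Step 2: the critical cells and $\mathcal N(c)$.} Payoffs are $V_n(\delta)-V_n(c)$, and I track successive differences. Moving from $n-1$ to $n$ adds $\theta_{(n)}-\sum_{j\ge n}c_j$.
\begin{itemize}
\item For $n=1$ this increment is $\delta_1+\delta_2+\sum_{j\ge3}\delta_j-\xi^{-1}$.
\item For $n=2$ it is $\delta_2+\sum_{j\ge3}\delta_j-\xi^{-1}$.
\item For $n\ge3$ it is $\theta_{(n)}\ge0$.
\end{itemize}
Payoff is therefore nondecreasing in $n$ from $2$ to $K$, so quantities $2,\dots,K-1$ are weakly dominated by $K$. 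This leaves the comparison among $0$, $1$ and $K$.

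The bounds \eqref{RectangularSparseBounds} imply $\sum_{j\ne2}\overline\delta_j\le\xi^{-1}$. So on $\mathscr C_1(c)$, where buying $1$ beats $K$, we need $\sum_{j\ge2}(\min\{K,j\}-1)\delta_j<\xi^{-1}$. This forces $\delta_2<\xi^{-1}$, inside the exponential segment. The cells for $0$, $1$ and $K$ each have positive Lebesgue measure, which I will show with explicit small boxes. Hence $\mathcal N(c)=\{0,1,K\}$.

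\textbf{Step 3: Assumption~\ref{ActiveLocalInverseHazardAlignment}.}
\begin{itemize}
\item Part (ii) concerns $n=K$ and $j=2$. It needs $R_2(\delta_2)\le R_2(c_2)$, which holds because $R_2=\xi^{-1}$ on the initial segment and decreases on the tail by log-concavity.
\item Part (iii) concerns $n=1$ and $j=2>1$. It needs $R_2(\delta_2)=\xi^{-1}$ on $\mathscr C_1(c)$, which follows from Step 2's bound $\delta_2<\xi^{-1}$.
\item Part (i) requires $\Psi_m^{a^c}\le0$ on $\mathscr C_0(c)$. Here $\Psi_m=V_m(\delta)-\min\{m,2\}R_2(\delta_2)$. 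On $\mathscr C_0$ we have $V_1(\delta)\le\xi^{-1}$, so $\delta_2<\xi^{-1}$ and $R_2=\xi^{-1}$, giving $\Psi_m=V_m(\delta)-V_m(c)\le0$ by definition of the cell.
\end{itemize}

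\textbf{Step 4: Assumption~\ref{YouDontDeserveIt}.} The only active gap is $j=2$, with $\underline\delta_2=0$. I must show $V_n(0,Z_2)\le V_n(c)$ for $n\in\{1,K\}$.
\begin{itemize}
\item For $n=1$: $\sum_{j\ne2}\delta_j\le\sum_{j\ne2}\overline\delta_j\le\xi^{-1}$.
\item For $n=K$: $\sum_{j\ne2}j\delta_j\le2\xi^{-1}$.
\end{itemize}
These are exactly the support bounds in \eqref{RectangularSparseBounds}.

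The main obstacle I expect is Step 2: pinning down the cell geometry cleanly, showing quantities $2,\dots,K-1$ are irrelevant, and confirming that $\mathscr C_1$ and $\mathscr C_0$ force $\delta_2$ into the exponential segment. I would handle this by working with the successive increments above and using $\sum_{j\ne2}\overline\delta_j\le\xi^{-1}$.
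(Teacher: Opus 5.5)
Your proposal is correct and takes essentially the paper's route: the paper also checks that the factors meet the restrictions of Distribution Family~\ref{SingleActiveGapDensityFamily} with $\iota=2$ and $\tau=\xi^{-1}$ and then applies Proposition~\ref{ClosedFormExactOptimality}, except that it cites that family's appendix verification for Assumptions~\ref{ActiveLocalInverseHazardAlignment} and~\ref{YouDontDeserveIt} and for $\mathcal N(c)$ instead of redoing your Steps~2--4 inline. When you write it up, also cite Lemma~\ref{GapFactorizationSeparability} to pass from Assumption~\ref{FlyMeToTheMoon} to Assumption~\ref{MightAsWelllWalkingOnTheSun}, since both Assumption~\ref{ActiveLocalInverseHazardAlignment} and Proposition~\ref{ClosedFormExactOptimality} presuppose it.
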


The following lemma replaces the bounded active gap with an exponential distribution on $[0,\infty)$ while retaining the support bounds on the inactive gaps.

\begin{lemma}[Unbounded single-active extension]\label{UnboundedSingleActiveExactness}
Retain $K$, $\iota$, $\tau$, the bounded inactive intervals and their density factors from Distribution Family \ref{SingleActiveGapDensityFamily}, together with the two inactive-support inequalities in \eqref{SingleActiveDensitySupportBounds}. Replace the support and active factor with
$$\Omega=\{\delta\in\mathbb R_+^K:\delta_j\leq\overline\delta_j\text{ for every }j\neq\iota\},\;\;\psi_{\iota}(z)=\tau^{-1}e^{-z/\tau},\;z\geq0.$$
Let $\gamma$ have the product form \eqref{PRoductDensitty}, and let the object-value density be its exchangeable symmetrization on $\overline\Theta=\mathbb R_+^K$. The critical vector and weights in \eqref{SingleActiveCriticalCoordinates} generate an exactly optimal menu $M^c$ with prices \eqref{SingleActiveQuantityPrices}, $\mathcal N(c)=\{0,1,\dots,\iota-1,K\}$, and
$$\mathrm{REV}_{\mathcal Q}^*=\mathrm{REV}(M^c)=B(a^c)<\infty,\;\;a^c\in\mathcal A^*.$$
Every listed quantity is selected with positive probability. This extension does not impose the compact-support requirement of Assumption \ref{FlyMeToTheMoon}.
\end{lemma}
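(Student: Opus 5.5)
The plan is to rerun the argument behind Proposition \ref{ClosedFormExactOptimality} directly, since the compact-support premises of Assumption \ref{FlyMeToTheMoon} and Lemma \ref{ConvexityOfB} are no longer available. The one place compactness entered was integrability of the rent and lower-face terms. Everything else (the rank-allocation reduction in Lemmas \ref{QRankEquivalence} and \ref{RankMenuEquivalence}, the envelope identity \eqref{Envelope}, and the conditional identity \eqref{StateDependentConditionalRentIdentity}) uses only exchangeability, Lipschitz continuity of $U$, and interval sections. So I would first check these ingredients for the unbounded product density.

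\textbf{Step 1: inverse-hazard ratios.} The support is the product $\prod_{j\neq\iota}[0,\overline\delta_j]\times[0,\infty)$, so conditional sections are fixed intervals and $R_j(z,Z_j)=S_j(z)/\psi_j(z)$. For the active gap, $R_\iota\equiv\tau$. For an inactive gap $j$, $R_j$ is bounded on $[0,\overline\delta_j]$ because $S_j$ is log-concave, by the argument of Lemma \ref{GapFactorizationSeparability}. Since $a_j^c=0$ for $j\neq\iota$, only the active rent identity \eqref{ConditionalGapRentIdentity} is needed. Its integrability follows from $0\le Q_\iota\le\iota$, $R_\iota=\tau$, and
$$0\le U\le V_K(\delta)\le K\Big(\delta_\iota+\sum_{j\ne\iota}\overline\delta_j\Big),$$
which has finite mean under the exponential factor. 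The upper-boundary term in \eqref{ConditionalGapIBP} vanishes because $U(z,Z_\iota)e^{-z/\tau}\to0$, as $U$ grows at most linearly in $z$. This gives Proposition \ref{GapRevenueDecomposition} and the bound $\mathrm{REV}(M)\le B(a)$ for $a=a^c$. It also gives $B(a^c)<\infty$, since $\Psi_n^{a^c}\le V_K(\delta)$ in absolute value up to the constant $\iota\tau$.

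\textbf{Step 2: alignment and the menu.} With $c_\iota=\tau$ and $\phi_\iota^{a^c}(z)=z-\tau$, identity \eqref{CriticalComparisonIdentity} holds with zero correction term. Hence $\Psi_n^{a^c}(\delta)-\Psi_m^{a^c}(\delta)$ equals the payoff difference at prices $P^c_n=\min\{n,\iota\}\tau$, exactly, for every $\delta$. The critical menu therefore implements pointwise virtual-surplus maximization, with no need for Assumption \ref{ActiveLocalInverseHazardAlignment}. Next I would identify $\mathcal N(c)$. The payoff is
$$V_n(\delta)-V_n(c)=\sum_{j}\min\{n,j\}\delta_j-\min\{n,\iota\}\tau.$$
For $n<\iota$ the marginal gain from $n$ to $n+1$ is $\sum_{j>n}\delta_j-\tau$. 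For $n\ge\iota$ it is $\sum_{j>n}\delta_j\ge0$, so $K$ dominates every $n\in(\iota,K)$. I would then exhibit open sets of positive measure selecting each of $0,1,\dots,\iota-1$ and $K$, mirroring the bounded-case appendix argument. Such sets exist because the inactive bounds $\sum_{j\ne\iota}\overline\delta_j\le\tau$ keep the inactive contributions small while $\delta_\iota$ ranges over all of $\mathbb R_+$.

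\textbf{Step 3: zero lower-face residual.} This is the main obstacle. I need $U^c(0,Z_\iota)=0$, that is, $V_n(\delta)\le V_n(c)$ whenever $\delta_\iota=0$. At $\delta_\iota=0$ we have $V_n(\delta)=\sum_{j\ne\iota}\min\{n,j\}\delta_j$. For $n\le\iota$, bound this by $n\sum_{j\ne\iota}\overline\delta_j\le n\tau$. For $n>\iota$, bound it by $\sum_{j\ne\iota}j\overline\delta_j\le\iota\tau$. These are exactly the two inactive-support inequalities in \eqref{SingleActiveDensitySupportBounds}, so $\mathcal B^{a^c}(U^c)=0$.

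\textbf{Conclusion.} Combining the three steps, the chain
$$\mathrm{REV}(M^c)=B(a^c)\ge\mathrm{REV}(M)\quad\text{for every feasible menu } M$$
together with Lemma \ref{RankMenuEquivalence} gives $\mathrm{REV}_{\mathcal Q}^*=\mathrm{REV}(M^c)=B(a^c)<\infty$. Finally, $a^c\in\mathcal A^*$ because $B(a)\ge\mathrm{REV}(M^c)=B(a^c)$ for all $a$. For weights $a$ that put mass on inactive gaps, $B(a)$ may be infinite, but the inequality holds trivially in that case, and otherwise it follows from the decomposition. A remaining technical point is to justify the rank reduction of Lemma \ref{QRankEquivalence} on the unbounded support, which needs only exchangeability and finite expected total value.
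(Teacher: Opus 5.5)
Your proposal is correct and follows the paper's proof essentially step for step: the constant ratio $R_\iota\equiv\tau$, integration by parts on $[0,t]$ with the boundary term killed by the linear growth of $U$ against $e^{-z/\tau}$, the exact identity $\Psi_n^{a^c}(\delta)=V_n(\delta)-\min\{n,\iota\}\tau$, the two inactive-support inequalities for zero lower-face rent, and the permutation-averaging extension of the rank reduction. One small repair is needed for $a^c\in\mathcal A^*$: $B(a)$ is never infinite here, since $0\le\max_n\Psi_n^a\le V_K$ and $\mathbb E_\gamma[V_K]<\infty$, so you also need the rent identities for the inactive gaps, which follow as in the paper from finite-endpoint integration by parts with $0\le Q_jR_j\le jR_j(0)$ and integrable lower-face utility $U(0,Z_j)\le\sum_{i\ne j}i\delta_i$.
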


\noindent\textbf{Example~\ref{UnboundedLeadingExample} (continued).} This is the case $K=2$, $\iota=2$, and $\tau=1/2$ of Lemma~\ref{UnboundedSingleActiveExactness}, with $\overline\delta_1=1/2$. The inactive-support bounds hold, and $R_2(z)=1/2$ throughout the support. The critical values are therefore $c_1=0$ and $c_2=1/2$, yielding optimal quantity prices $P_1=1/2$ and $P_2=1$. Although the support extends to infinity along the $45^\circ$ direction, the probability of purchasing both objects is only $e^{-1}\approx36.79\%$; the probabilities of purchasing one object and purchasing none are $e^{-1/2}-e^{-1}\approx23.87\%$ and $1-e^{-1/2}\approx39.35\%$, respectively. Figure~\ref{UnboundedLeadingFigure} illustrates these choices.

\begin{example}[Exponential rank gaps]\label{ConditionalExponentialObjectValues}
Fix $K\geq2$, $\lambda>0$, and $\iota\in\{1,\ldots,K\}$. Let the rank gaps be independent, with $\delta_\iota\sim\mathrm{Exp}(\iota\lambda)$ and, for $j\neq\iota$, density $\psi_j(z)=j\lambda e^{-j\lambda z}/(1-e^{-j\lambda\overline\delta_j})$ on $[0,\overline\delta_j]$. Assign the resulting ranked values uniformly to object labels. Set $\tau=1/(\iota\lambda)$ and choose positive finite inactive bounds satisfying the two inactive-support inequalities in \eqref{SingleActiveDensitySupportBounds}. By Lemma \ref{UnboundedSingleActiveExactness}, the menu with prices $P_n=\min\{n,\iota\}/(\iota\lambda)$ is optimal among all feasible IC and IR mechanisms. Example~\ref{UnboundedLeadingExample} is the case $K=2$, $\iota=2$, $\lambda=1$, and $\overline\delta_1=1/2$.
\end{example}

\subsection{Global proportional alignment: quantity discounts}\label{ApplicationQuantityDiscounts}
The following sufficient condition is more general than the full-active family in Distribution Family \ref{FullActiveGapDensityFamily}: it requires proportional alignment only for gaps that enter comparisons among the support-relevant quantities.

\begin{proposition}[Global proportional alignment]\label{ActiveCommonCriticalScaling}
Suppose Assumptions \ref{ex} and \ref{MightAsWelllWalkingOnTheSun} hold. Fix $c\in\mathcal C_\gamma$, $\underline n\in\{0,\dots,K-1\}$, and $\kappa>0$ such that $\mathcal N(c)\subseteq\{\underline n,\dots,K\}$ and $\phi_j^{a^c}(z)=\kappa(z-c_j)$ for every $j>\underline n$ and every relevant $z$. Suppose also that $\Psi_{\underline n}^{a^c}(\delta)\geq\Psi_m^{a^c}(\delta)$ for every $m<\underline n$ and almost every $\delta\in\Omega$. Then Assumption \ref{TopRankSignAlignment} holds for this $c$. If Assumption \ref{YouDontDeserveIt} also holds for the same $c$, $M^c$ is exactly optimal. If $\underline n=0$ and $c\in\operatorname{int}\Omega$, every quantity is support relevant.
\end{proposition}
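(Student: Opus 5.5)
The plan is to check Assumption \ref{TopRankSignAlignment} directly from the definition \eqref{QuantityVirtualSurplus} of $\Psi_n^{a^c}$. The key observation is that comparisons between quantities at or above $\underline n$ only involve gaps $j>\underline n$, and on those gaps the virtual surplus is proportional to the centered gap. The second claim then follows from Proposition \ref{ClosedFormExactOptimality}. The last claim is a separate perturbation argument around $c$.

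\textbf{Step 1 (comparisons among quantities $\ge \underline n$).} Fix $n,m\in\{\underline n,\dots,K\}$. For $j\leq\underline n$ we have $\min\{n,j\}=\min\{m,j\}=j$, so these gaps cancel in $\Psi_n^{a^c}-\Psi_m^{a^c}$. For the remaining gaps $j>\underline n$ I substitute $\phi_j^{a^c}(\delta_j)=\kappa(\delta_j-c_j)$, which holds for almost every $\delta\in\Omega$ since $\delta_j$ lies in $I_j$. This gives
\begin{equation*}
\Psi_n^{a^c}(\delta)-\Psi_m^{a^c}(\delta)=\kappa\sum_{j=1}^K\big(\min\{n,j\}-\min\{m,j\}\big)(\delta_j-c_j)
=\kappa\big[(V_n(\delta)-V_n(c))-(V_m(\delta)-V_m(c))\big].
\end{equation*}
The terms with $j\le\underline n$ can be added back into the middle sum because their coefficients vanish. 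Now take $n\in\mathcal N(c)$; by hypothesis $n\geq\underline n$. On $\mathscr C_n(c)$, quantity $n$ maximizes the centered value $V_m(\delta)-V_m(c)$ over all $m$. Since $\kappa>0$, it follows that $\Psi_n^{a^c}\geq\Psi_m^{a^c}$ for every $m\geq\underline n$.

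\textbf{Step 2 (comparisons with quantities below $\underline n$).} Let $m<\underline n$. Apply Step 1 with $m=\underline n$ to get $\Psi_n^{a^c}\ge\Psi_{\underline n}^{a^c}$ almost everywhere on $\mathscr C_n(c)$. The hypothesis gives $\Psi_{\underline n}^{a^c}\geq\Psi_m^{a^c}$ almost everywhere on $\Omega$. Chaining the two inequalities gives $\Psi_n^{a^c}\ge\Psi_m^{a^c}$, so Assumption \ref{TopRankSignAlignment} holds for this $c$. If Assumption \ref{YouDontDeserveIt} also holds for the same $c$, then Proposition \ref{ClosedFormExactOptimality} applies directly and $M^c$ is exactly optimal.

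\textbf{Step 3 (support relevance when $\underline n=0$ and $c\in\operatorname{int}\Omega$).} This is the only step needing a construction, though it is not deep. Write $u_m(\delta):=V_m(\delta)-V_m(c)=\sum_{r=1}^m s_r$, where $s_r:=\sum_{j\ge r}(\delta_j-c_j)$ is the rank-$r$ tail sum of the centered gaps.

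Fix $n\in\{0,\dots,K\}$. Choose a direction $d$ whose tail sums satisfy $s_r(d)=1$ for $r\le n$ and $s_r(d)=-1$ for $r>n$, with $s_{K+1}=0$. Explicitly, $d_r=s_r-s_{r+1}$. Since $u_n-u_m$ is a sum of $s_r$ over the ranks strictly between $m$ and $n$ (inclusive of one end), this choice makes $u_n(c+\varepsilon d)-u_m(c+\varepsilon d)>0$ for all $m\ne n$ and all $\varepsilon>0$.

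These strict inequalities define an open set whose closure contains $c$. Because $c\in\operatorname{int}\Omega$, a small ball around $c+\varepsilon d$ lies both inside $\operatorname{int}\Omega$ and inside $\mathscr C_n(c)$. It remains to check that $\gamma>0$ almost everywhere on $\operatorname{int}\Omega$. This follows from positivity of $f$ almost everywhere on $\operatorname{int}\Theta$, via the exchangeability relation $g=K!f$ on ranked types and the unit-Jacobian identity $\gamma=g\circ\mathbf T$ (Lemma \ref{ChangeWeNeed}). Hence $\mathscr C_n(c)$ has positive probability and $n\in\mathcal N(c)$ for every $n$.

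The main point requiring care is in Steps 1–2: the proportionality $\phi_j^{a^c}(z)=\kappa(z-c_j)$ may only be invoked for $j>\underline n$, and $\mathcal N(c)\subseteq\{\underline n,\dots,K\}$ is exactly what makes this sufficient.
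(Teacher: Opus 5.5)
Your proposal is correct and follows the paper's proof essentially step for step. It cancels the gaps $j\le\underline n$, uses proportionality on the remaining gaps, chains through $\Psi_{\underline n}^{a^c}$ for $m<\underline n$, invokes Proposition~\ref{ClosedFormExactOptimality}, and perturbs the invertible tail sums (the paper's adjacent critical margins) around $c$ for support relevance. Your explicit direction $d$ and your derivation of $\gamma>0$ almost everywhere on $\operatorname{int}\Omega$ via Lemma~\ref{ChangeWeNeed} only spell out details that the paper states without elaboration.
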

For $n,m\geq\underline n$, gaps $j\leq\underline n$ cancel from the comparison because $\min\{n,j\}=\min\{m,j\}=j$. Comparisons with smaller quantities follow from the assumed virtual dominance.

\begin{proposition}[Optimal quantity discounts]\label{CommonScalingExactOptimality}
For the distribution in Distribution Family \ref{FullActiveGapDensityFamily}, the menu $M^c$ is exactly optimal with
$$P_n^c=\frac{\overline\delta}{1+K\alpha}\frac{n(2K-n+1)}2,\;\;n=0,\dots,K.$$
Every quantity is selected with positive probability, and $a^c\in\mathcal A^*$.
\end{proposition}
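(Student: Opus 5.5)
The plan is to verify the hypotheses of Proposition \ref{ActiveCommonCriticalScaling} with $\underline n=0$, then Assumption \ref{YouDontDeserveIt}, and finally invoke Proposition \ref{ClosedFormExactOptimality}.

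\textbf{Step 1: inverse-hazard ratios and the critical vector.} For the reverse-Beta factor, the upper mass is $S_j(z)=\int_z^{\overline\delta}\psi_j(s)\,ds=(\overline\delta-z)^\alpha/\overline\delta^\alpha$. This is positive and log-concave on $[0,\overline\delta)$. The support is the rectangle $[0,\overline\delta]^K$, so every conditional section is $[0,\overline\delta]$, with fixed endpoints and $\ell_j=0$. Hence Assumption \ref{FlyMeToTheMoon}(iii) holds trivially, and so do Assumption \ref{FlyMeToTheMoon} and therefore Assumption \ref{MightAsWelllWalkingOnTheSun}, with
$$R_j(z)=\frac{S_j(z)}{\psi_j(z)}=\frac{\overline\delta-z}{\alpha}.$$
Exchangeability (Assumption \ref{ex}) holds by construction, since the object-value density is the symmetrization.

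With $a_j=1/K$, the gap virtual surplus is
$$\phi_j^a(z)=z-\frac{\overline\delta-z}{K\alpha}=\frac{1+K\alpha}{K\alpha}\Big(z-\frac{\overline\delta}{1+K\alpha}\Big).$$
Its unique zero is $c_j=\overline\delta/(1+K\alpha)\in(0,\overline\delta)=\operatorname{int}I_j$. So $c\in\mathcal C_\gamma$, the weights are $a^c_j=c_j/R_j(c_j)=1/K$, and $\phi_j^{a^c}(z)=\kappa(z-c_j)$ with $\kappa=(1+K\alpha)/(K\alpha)$ for every $j$.

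\textbf{Step 2: alignment via Proposition \ref{ActiveCommonCriticalScaling}.} Take $\underline n=0$. The dominance requirement over $m<\underline n$ is then vacuous, and $\mathcal N(c)\subseteq\{0,\dots,K\}$ holds trivially. The proposition gives Assumption \ref{TopRankSignAlignment}. Because $c\in\operatorname{int}\Omega$, its last clause also shows that every quantity is support relevant, so each is selected with positive probability.

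\textbf{Step 3: the price formula.} The prices are $P_n^c=V_n(c)=c_1\sum_{j=1}^K\min\{n,j\}$. Splitting the sum at $j=n$ gives
$$\sum_{j=1}^K\min\{n,j\}=\frac{n(n+1)}2+n(K-n)=\frac{n(2K-n+1)}2=:A_n.$$
This yields the stated $P_n^c$.

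\textbf{Step 4: Assumption \ref{YouDontDeserveIt} (the main obstacle).} This is the only place where the bound on $\alpha$ should enter. Every gap is active and every lower face is $\delta_j=0$. So for each $j$, almost every $Z_j$, and each $n\geq1$, I need
$$V_n(0,Z_j)\leq V_n(c)=c_1A_n.$$
The left side is maximized by setting all other gaps to $\overline\delta$, giving $\overline\delta\,(A_n-\min\{n,j\})$. The worst case is $j=1$, where the value is $\overline\delta(A_n-1)$. The required inequality $\overline\delta(A_n-1)\leq \overline\delta A_n/(1+K\alpha)$ is equivalent to
$$K\alpha(A_n-1)\leq1.$$
Since $A_n$ is increasing in $n$ with $A_K=K(K+1)/2$, this is exactly $\alpha\leq 1/\big(K(K(K+1)/2-1)\big)$.

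\textbf{Step 5: conclusion.} Proposition \ref{ClosedFormExactOptimality} now yields exact optimality of $M^c$ among all feasible general direct mechanisms, together with $a^c\in\mathcal A^*$.

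I expect Step 4 to be the only nontrivial check. The care needed is to confirm that the supremum over $Z_j$ is attained at the corner where all other gaps equal $\overline\delta$, and that $j=1$, $n=K$ is the binding case.
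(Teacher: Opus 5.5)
Your proposal is correct and follows essentially the same route as the paper. You verify Assumption \ref{FlyMeToTheMoon} with $R_j(z)=(\overline\delta-z)/\alpha$, use the proportional scaling $\phi_j^{a^c}(z)=\frac{1+K\alpha}{K\alpha}(z-c_j)$ (the $\underline n=0$ case of Proposition \ref{ActiveCommonCriticalScaling}) for alignment and support relevance, check Assumption \ref{YouDontDeserveIt} on the faces $\delta_j=0$ by setting the other gaps to $\overline\delta$, and conclude with Proposition \ref{ClosedFormExactOptimality}. The only difference is minor: the paper uses the cruder chain $K\alpha\{\sum_{i=1}^K\min\{n,i\}-\min\{n,j\}\}\leq1\leq\min\{n,j\}$, whereas you identify $j=1$, $n=K$ as the binding case, which also shows the bound on $\alpha$ is exactly what Assumption \ref{YouDontDeserveIt} needs at this $c$.
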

These prices follow from \eqref{CriticalQuantityPrices} and Proposition \ref{ClosedFormExactOptimality}. Their successive differences are $\overline\delta(K-n+1)/(1+K\alpha)$, so the optimal prices exhibit strict quantity discounts.

\noindent\textbf{Example~\ref{ReverseBetaThreeSpaces} (continued).} Here $K=2$, $\overline\delta=1$, and $\alpha=1/4$ in Distribution Family~\ref{FullActiveGapDensityFamily}. Proposition~\ref{CommonScalingExactOptimality} establishes optimal quantity prices $P_1=4/3$ and $P_2=2$. Figure~\ref{ReverseBetaThreeSpacesFigure} illustrates the resulting quantity choices.

\subsection{Local inverse-hazard alignment: the optimal two-part tariff}\label{ApplicationTwoPartTariff}
Distribution Family \ref{TwoActiveGapDensityFamily} yields an exactly optimal two-part tariff \citep[see, e.g.,][]{Oi1971} for every finite $K$. Only two rent directions are active, although the buyer's private type remains full dimensional.

\begin{proposition}[Optimal two-part tariff for the two-active family]\label{TwoActiveCriticalTariff}
For the distribution in Distribution Family \ref{TwoActiveGapDensityFamily}, the menu $M^c$ is exactly optimal with
$$P_0^c=0,\;\;P_n^c=c_1+nc_K,\;\;n=1,\dots,K,$$
where $c$ is given by \eqref{TwoActiveCriticalCoordinates}. Every positive quantity is selected with positive probability, and $a^c=a\in\mathcal A^*$.
\end{proposition}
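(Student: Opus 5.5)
The plan is to read Proposition \ref{TwoActiveCriticalTariff} off Proposition \ref{ClosedFormExactOptimality}. Distribution Family \ref{TwoActiveGapDensityFamily} already states that Assumptions \ref{FlyMeToTheMoon}, \ref{ActiveLocalInverseHazardAlignment}, and \ref{YouDontDeserveIt} hold with $a^c=a$ and $c$ given by \eqref{TwoActiveCriticalCoordinates}. If I want to verify this rather than cite it, I will use \eqref{TwoActiveDensityInitialSegments}. On the initial segment $[\ell_i,c_i]$, the factor is exponential, so $S_i/\psi_i$ is constant and equal to $S_i(\ell_i)/\psi_i(\ell_i)$. The zero-crossing equation $c_i=a_iR_i(c_i)$ then gives $c_i=a_iS_i(\ell_i)/\psi_i(\ell_i)$. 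Uniqueness of the zero crossing follows because $R_i$ is nonincreasing by Lemma \ref{GapFactorizationSeparability}, so $z-a_iR_i(z)$ is strictly increasing. The next checks are as follows.
\begin{itemize}
\item Assumption \ref{FlyMeToTheMoon} implies Assumption \ref{MightAsWelllWalkingOnTheSun}.
\item Lemma \ref{ActiveLocalInverseHazardAlignmentImpliesAlignment} turns Assumption \ref{ActiveLocalInverseHazardAlignment} into Assumption \ref{TopRankSignAlignment}.
\item Assumption \ref{ex} holds by the exchangeable symmetrization.
\end{itemize}
Proposition \ref{ClosedFormExactOptimality} then gives exact optimality of $M^c$ and $a^c\in\mathcal A^*$. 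The prices come from \eqref{CriticalQuantityPrices}. Since $c_j=0$ for $1<j<K$, we get $P_n^c=V_n(c)=\min\{n,1\}c_1+\min\{n,K\}c_K=c_1+nc_K$ for $n\ge1$, which is a two-part tariff.

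Verifying Assumption \ref{ActiveLocalInverseHazardAlignment} is the substantive part. The two defining equations make $\beta_1=c_1+c_K=P_1^c$ and $\beta_K=c_1+Kc_K=P_K^c$. Since $P_n^c$ is affine in $n$, the threshold in \eqref{TwoActivePrimitiveSupport} equals $((K-n)\beta_1+(n-1)\beta_K)/(K-1)=c_1+nc_K=P_n^c$. Hence $\Omega$ is exactly the set of types in the box for which some positive quantity yields nonnegative payoff $V_n(\delta)-P_n^c$. It follows that $\mathscr C_0(c)$ is null, so $0\notin\mathcal N(c)$ and part (i) is vacuous. It also follows that $U^c$ vanishes on the boundary of $\Omega$ facing the origin, which is what I will use for Assumption \ref{YouDontDeserveIt}.

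For parts (ii) and (iii), I need every $\delta$ in a positive-quantity cell to have $\delta_1\ge c_1$ whenever $1\le n$ (part (ii) with $j=1$), to have $\delta_K\ge c_K$ or lie in the constant-hazard segment (part (ii) with $j=K$ when $n=K$), and to have $\delta_K\le c_K$ when $n<K$ (part (iii)). On $[\ell_i,c_i]$, $R_i$ equals $R_i(c_i)$; beyond $c_i$ it can only fall. So it suffices to show three things: (a) on every cell $\mathscr C_n(c)$ with $n\ge1$, $\delta_1\ge\ell_1$ holds trivially and $R_1(\delta_1)\le R_1(c_1)$; (b) on cells with $n<K$, $\delta_K\in[\ell_K,c_K]$; and (c) on $\mathscr C_K(c)$, $R_K(\delta_K)\le R_K(c_K)$. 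Claim (c) is automatic because $R_K$ is nonincreasing and constant below $c_K$. Claim (a) holds the same way. Claim (b) is the real content: choosing $n<K$ over $K$ requires $\sum_{j>n}(j-n)(\delta_j-c_j)\le0$, which gives $\delta_K\le c_K$ only after the intermediate gaps are bounded using the inequalities \eqref{TwoActiveDensitySupportBounds}. I expect this inequality bookkeeping to be the main obstacle.

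For Assumption \ref{YouDontDeserveIt}, fix active $j\in\{1,K\}$ and $Z_j$. The conditional lower face $\underline\delta_j(Z_j)$ is either the box floor $\ell_j$ or the point where the support constraint binds. In the binding case, $\max_n V_n-P_n^c=0$ there. In the floor case, I use \eqref{TwoActiveDensitySupportBounds} to show $V_n(\ell_j,Z_j)\le P_n^c$ for all $n$; the two strict inequalities are exactly what make these bounds hold at the corners.

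Finally, for positive probability of each positive quantity, I exhibit an interior point of $\Omega$ lying strictly inside $\mathscr C_n(c)$ for each $n$: take $\delta_1>c_1$, $\delta_K<c_K$ adjusted for $n$, with intermediate gaps chosen to make $n$ the strict payoff maximizer. The strict inequalities in \eqref{TwoActiveDensitySupportBounds} keep such points inside the box, and positivity of the density on $\operatorname{int}\Omega$ finishes the argument.
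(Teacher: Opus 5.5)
Your proposal is correct and takes the paper's route: the proposition is read off Proposition~\ref{ClosedFormExactOptimality}. Your verification of the family matches the appendix verification, namely the thresholds $V_n(c)=c_1+nc_K$, the null exclusion cell, $R_i$ constant on $[\ell_i,c_i]$ and nonincreasing beyond it, the lower-face corner checks, and the interior points for each positive quantity. The step you flag as the main obstacle, claim (b), is immediate and needs nothing from \eqref{TwoActiveDensitySupportBounds}: since $c_j=0\le\delta_j$ for $2\le j\le K-1$, the comparison with quantity $K$ gives $(K-n)(\delta_K-c_K)\le-\sum_{j=n+1}^{K-1}(j-n)\delta_j\le0$. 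The support bounds are needed elsewhere, to deliver $\ell_1<c_1<\overline\delta_1$ and $\ell_K+\sum_{j=2}^{K-1}\overline\delta_j<c_K<\overline\delta_K$, which you use for $c\in\mathcal C_\gamma$ and for the lower-face checks.
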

Optimality follows from Proposition~\ref{ClosedFormExactOptimality}. The fixed fee is $c_1$ and the per-unit price is $c_K$. 
Proposition \ref{LocalMarginalMatchingExactness} in the Appendix provides an extension in which the first-direction rent need only be bounded by a fixed fee $T$. That result allows a first factor without an exponential initial segment and applies Theorem \ref{EndogenousGapFlowGap} directly, without requiring $T$ to be a critical coordinate. The finite-dimensional exact result complements the large-product approximation for two-part tariffs by \cite{Armstrong1999}.

\subsection{State-dependent verification: a full-support example}\label{ApplicationStateDependent}

\noindent\textbf{Example~\ref{StateDependentNonpurchaseExample} (continued).} The optimal mechanism offers only the full bundle at price $1$. The gap support is $\Omega=\{\delta\in\mathbb R_+^2:\delta_1+\delta_2\leq1\}$. For $a^{\mathrm{sd}}=(0,1)$, the active conditional inverse-hazard ratio is
$$
R_2(z,\delta_1)=
\begin{cases}
1-z,&\delta_1+2z<1,\\
1-\delta_1-z,&\delta_1+2z\geq1.
\end{cases}
$$
The conditional sections are intervals and $0\leq R_2\leq1$. Below $\delta_1+2\delta_2=1$, both positive quantities have negative virtual surplus; above it, quantity two uniquely maximizes virtual surplus. With the tie-breaking order $0\prec2\prec1$, the retained quantities are $\mathcal N^{\mathrm{sd}}=\{0,2\}$, with $b_{20}^{\mathrm{sd}}=1$ and $b_{02}^{\mathrm{sd}}=-1$. Thus $p_0^{\mathrm{sd}}=0$ and $p_2^{\mathrm{sd}}=1$ satisfy Condition~\ref{StateDependentFiniteImplementation}. At the active lower boundary $\delta_2=0$, utility is $\max\{0,\delta_1-1\}=0$, satisfying Condition~\ref{StateDependentZeroBoundary}. Since $B^{\mathrm{sd}}(a^{\mathrm{sd}})\leq\mathbb E[V_2]\leq2$, Proposition~\ref{StateDependentExactOptimality} establishes optimality among all feasible incentive-compatible and individually rational mechanisms. Nonpurchase and purchase occur with probabilities $1/3$ and $2/3$, respectively, yielding optimal expected revenue $2/3$. Figure~\ref{StateDependentAllocationsFigure} illustrates these choices.

\section{Bundle values}\label{NonadditiveQuantityExtension}

This section derives a revenue decomposition for arbitrary lotteries over bundles without imposing ex ante symmetry. I first establish a general verification result and then provide conditions under which the critical-value construction yields optimal quantity prices.

\subsection{Bundle values and second differences}

Retain the object set $\mathcal K$. The buyer privately observes $v=(v_S)_{\varnothing\ne S\subseteq\mathcal K}$, where $v_S$ is his value for receiving bundle $S$. Let $\widetilde f$ denote the joint density of these $2^K-1$ values, with compact support $\operatorname{supp}(v)$. Assume nonnegative bundle values and normalize $v_\varnothing=0$. Impose free disposal: adding objects to a bundle cannot lower its value, so $v_S\leq v_T$ whenever $S\subseteq T\subseteq\mathcal K$. Consequently, the highest attainable value is weakly increasing in quantity.\footnote{Free disposal is imposed for convenience; Remark \ref{BundleFreeDisposalRemark} explains how the analysis extends without it.}

For $n=1,\ldots,K$, define the quantity value as the highest value among size-$n$ bundles, $\widetilde V_n:=\max_{|S|=n}v_S$, and set $\widetilde V_0:=0$. Define the first differences by $\widetilde\theta_r:=\widetilde V_r-\widetilde V_{r-1}$ for $r=1,\ldots,K$. For $K\geq2$, define the second-difference coordinates by
$$
\begin{aligned}
\widetilde\delta_1&:=2\widetilde V_1-\widetilde V_2,\\
\widetilde\delta_j&:=2\widetilde V_j-\widetilde V_{j-1}-\widetilde V_{j+1},\;\; j=2,\ldots,K-1,\\
\widetilde\delta_K&:=\widetilde V_K-\widetilde V_{K-1}.
\end{aligned}
$$
For $K=1$, set $\widetilde\delta_1:=\widetilde V_1$. In general, $\widetilde\delta_j=\widetilde\theta_j-\widetilde\theta_{j+1}$ for $j<K$ and $\widetilde\delta_K=\widetilde\theta_K$. Free disposal ensures $\widetilde\theta_r\geq0$. It does not require these increments to decrease with quantity, so $\widetilde\delta_j$ may be negative for $j<K$.

Let $\widetilde\delta:=(\widetilde\delta_1,\ldots,\widetilde\delta_K)$. Successive summation yields
\begin{equation}\label{QuanitityValueEspression}
    \widetilde\theta_r=\sum_{j=r}^K\widetilde\delta_j,\;\; \widetilde V_n(\widetilde\delta)=\sum_{r=1}^n\widetilde\theta_r=\sum_{j=1}^K\min\{n,j\}\widetilde\delta_j.
\end{equation}
The difference definitions and \eqref{QuanitityValueEspression} establish an invertible linear map between the quantity-value vector and $\widetilde\delta$. The maximum in $\widetilde V_n$ compares bundles of size $n$; the differences compare values across quantities. Conditional on the other coordinates, a feasible increase of $\widetilde\delta_j$ by $t$ increases $\widetilde V_n$ by $\min\{n,j\}t$, as in the additive case. These coefficients enter the envelope derivative and the quantity-price formula below.

For each nonempty bundle $S$, define its shortfall $\omega_S:=\widetilde V_{|S|}-v_S\geq0$ relative to the highest-valued bundle of the same size. Collect these differences in $\omega:=(\omega_S)_{\varnothing\ne S\subseteq\mathcal K}$. Equality between the values of two distinct bundles defines a hyperplane, which has probability zero under the joint density. Thus the maximizing size-$n$ bundle $S_n^\star$ is unique almost surely and satisfies $\omega_{S_n^\star}=0$. Rearranging the shortfall definition and substituting \eqref{QuanitityValueEspression} yields
\begin{equation}\label{BundleValueRecovery}
v_S=\widetilde V_{|S|}(\widetilde\delta)-\omega_S=\sum_{j=1}^K\min\{|S|,j\}\widetilde\delta_j-\omega_S,\;\; \varnothing\ne S\subseteq\mathcal K.
\end{equation}
On each region with maximizing bundles $(S_1^\star,\ldots,S_K^\star)$, the $K$ second differences and the $2^K-1-K$ remaining shortfalls form an invertible linear transformation of $v$, with inverse \eqref{BundleValueRecovery}. The zero shortfalls identify the maximizing bundles almost surely. These bundles need not be nested across quantities. 

Let $\Delta(2^{\mathcal K})$ denote the probability simplex over all subsets of $\mathcal K$, including the empty bundle. A feasible \textit{direct} mechanism $(\pi,p)$ consists of measurable maps $\pi:\operatorname{supp}(v)\to\Delta(2^{\mathcal K})$ and $p:\operatorname{supp}(v)\to\mathbb R_+$. After report $v$, the buyer receives bundle $S$ with probability $\pi_S(v)$ and pays $p(v)\geq0$. Allocation of the empty bundle means nonallocation. The buyer's information rent, or expected net utility from truthful reporting, is
$$
U(v):=\sum_{S\subseteq\mathcal K}\pi_S(v)v_S-p(v).
$$
The mechanism is \textit{incentive compatible} (IC) if
$$
U(v)\geq\sum_{S\subseteq\mathcal K}\pi_S(\hat v)v_S-p(\hat v)
$$
for all $v,\hat v\in\operatorname{supp}(v)$. With the outside option normalized to zero, it is \textit{individually rational} (IR) if $U(v)\geq0$ for all $v\in\operatorname{supp}(v)$. The seller's expected revenue is $\mathrm{REV}(\pi,p):=\mathbb E[p(v)]$.

Equation \eqref{BundleValueRecovery} inverts $v\mapsto(\widetilde\delta,\omega)$ on its image, so I write allocation and payment rules in these coordinates. Each report retains its allocation, payment, and payoff, preserving feasibility, IC, IR, and expected revenue. Expectations and conditional distributions are induced by $v$. Conditional distributions involving $\omega$ are regular conditional distributions; $\omega$ need not have a joint density. 

\subsection{Revenue decomposition and verification}

Fix $a\in\Delta_K$ and weight the rent identity for coordinate $j$ by $a_j$. A coordinate is active if $a_j>0$. Let $\widetilde\delta_{-j}$ collect the other second differences. For each active coordinate, assume that its support conditional on $(\widetilde\delta_{-j},\omega)$ is a nondegenerate interval with measurable endpoints and a positive conditional density. Denote its lower endpoint by $\widetilde\ell_j(\widetilde\delta_{-j},\omega)$ and its conditional density by $\gamma_j(\cdot\mid\widetilde\delta_{-j},\omega)$, extended by zero outside the interval. Define the conditional inverse-hazard ratio by
$$
R_j(z\mid\widetilde\delta_{-j},\omega):=\frac{\int_z^\infty\gamma_j(s\mid\widetilde\delta_{-j},\omega)\,ds}{\gamma_j(z\mid\widetilde\delta_{-j},\omega)}.
$$
The numerator is the conditional probability of values above $z$, and the denominator is the conditional density there. Utility at an endpoint is its limit from inside the conditional interval. 

For any feasible IC and IR mechanism, let $\chi_n:=\sum_{|S|=n}\pi_S$ denote the probability of allocating quantity $n$, including nonallocation when $n=0$. Conditional on $(\widetilde\delta_{-j},\omega)$, a feasible increase of $\widetilde\delta_j$ by $t$ raises every size-$n$ bundle value by $\min\{n,j\}t$. The envelope identity therefore yields
\begin{equation}\label{BundleEnvelope}
U_{\widetilde\delta_j}=\sum_{n=0}^K\min\{n,j\}\chi_n\;\;\text{almost surely}.
\end{equation}
This identity applies directly to lotteries over bundles.

For each quantity $n$, define quantity virtual surplus by
$$
\Psi_n^a(\widetilde\delta,\omega):=\widetilde V_n(\widetilde\delta)-\sum_{j:a_j>0}a_j\min\{n,j\}R_j(\widetilde\delta_j\mid\widetilde\delta_{-j},\omega),\;\; n=0,\ldots,K,
$$
with $\Psi_0^a=0$. Define the lower-boundary rent term as the weighted expected information rent at the active conditional lower endpoints:
$$
\mathcal B^a(U):=\sum_{j:a_j>0}a_j\mathbb E\!\left[U\bigl(\widetilde\ell_j(\widetilde\delta_{-j},\omega),\widetilde\delta_{-j},\omega\bigr)\right].
$$
Individual rationality implies $\mathcal B^a(U)\geq0$.

\begin{theorem}[Revenue decomposition and optimal quantity pricing]
\label{ConditionalBundleQuantityVerification}
Under the conditional-section regularity stated in this subsection, every feasible IC and IR mechanism satisfies
\begin{equation}\label{BundleValueRevenueDecomposition}
\mathrm{REV}(\pi,p)=\mathbb E\!\left[\sum_{n=0}^K\chi_n\Psi_n^a(\widetilde\delta,\omega)\right]-\mathbb E\!\left[\sum_{S\ne\varnothing}\pi_S\omega_S\right]-\mathcal B^a(U).
\end{equation}
Consequently,
$$
\mathrm{REV}(\pi,p)\leq\mathbb E\!\left[\max_{0\leq n\leq K}\Psi_n^a(\widetilde\delta,\omega)\right].
$$
Fix quantity prices $P_0=0$ and $P_n\geq0$, where every size-$n$ bundle has price $P_n$. The buyer's indirect utility under this menu, including the option of nonpurchase, is $U^P(\widetilde\delta):=\max_{0\leq n\leq K}\{\widetilde V_n(\widetilde\delta)-P_n\}$. Suppose that:

\textup{(i)} Almost surely,
$$
\argmax_{0\leq n\leq K}\{\widetilde V_n(\widetilde\delta)-P_n\}\subseteq\argmax_{0\leq n\leq K}\Psi_n^a(\widetilde\delta,\omega).
$$

\textup{(ii)} For each $j$ with $a_j>0$ and almost every conditioning state $(\widetilde\delta_{-j},\omega)$,
$$
U^P\bigl(\widetilde\ell_j(\widetilde\delta_{-j},\omega),\widetilde\delta_{-j}\bigr)=0.
$$

Then offering each bundle $S$ at price $P_{|S|}$ attains the revenue upper bound and is optimal among all feasible IC and IR mechanisms, including arbitrary lotteries over bundles.
\end{theorem}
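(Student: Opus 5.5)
The proof follows the additive template: write revenue as value minus rent, apply a conditional integration by parts along each active second-difference coordinate, and average these identities with weights $a$.

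\emph{Step 1 (value and envelope).} By \eqref{BundleValueRecovery}, the buyer's expected value is $\sum_S\pi_S v_S=\sum_{n}\chi_n\widetilde V_n(\widetilde\delta)-\sum_{S\ne\varnothing}\pi_S\omega_S$. Since payment equals value minus rent,
$$\mathrm{REV}(\pi,p)=\mathbb E\Big[\sum_n\chi_n\widetilde V_n-\sum_{S\ne\varnothing}\pi_S\omega_S-U\Big].$$
Fix $(\widetilde\delta_{-j},\omega)$. Along the coordinate $\widetilde\delta_j$, each report yields a payoff that is affine in $\widetilde\delta_j$ with slope $\sum_n\min\{n,j\}\chi_n(\hat v)\in[0,K]$. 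Hence $U$ is a supremum of affine functions with bounded slopes, so it is convex and Lipschitz on each conditional interval. This gives \eqref{BundleEnvelope} almost everywhere on almost every section, by Fubini.

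\emph{Step 2 (rent identity per coordinate).} Mimic \eqref{ConditionalGapIBP}: integrate $U\gamma_j$ by parts using the conditional upper tail. The upper-tail term vanishes at the finite upper endpoint by compactness, which yields
$$\mathbb E[U\mid\widetilde\delta_{-j},\omega]=U(\widetilde\ell_j,\cdot)+\mathbb E\Big[\sum_n\min\{n,j\}\chi_nR_j\,\Big|\,\widetilde\delta_{-j},\omega\Big].$$
Integrability follows because $U\le\widetilde V_K$ is bounded and the conditional expectation of $\chi R_j$ is bounded by $K$ times the tail integral. Taking expectations, multiplying by $a_j$, and summing over active $j$ expresses $\mathbb E[U]$, since $\sum a_j=1$. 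Substituting into the revenue formula gives \eqref{BundleValueRevenueDecomposition}. For the upper bound, drop $-\mathbb E[\sum\pi_S\omega_S]\le0$ (as $\omega_S\ge0$) and $-\mathcal B^a(U)\le0$ (by IR), and bound $\sum_n\chi_n\Psi_n^a\le\max_n\Psi_n^a$ pointwise.

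\emph{Step 3 (verification).} Under the quantity menu, the buyer who picks quantity $n$ selects the unique $S_n^\star$, which has $\omega_{S_n^\star}=0$, so the shortfall term is zero. By (i), the chosen $n$ almost surely maximizes $\Psi^a_n$, so the first term equals the bound. By (ii), $\mathcal B^a(U^P)=0$. Thus this mechanism's revenue equals $\mathbb E[\max_n\Psi_n^a]$, and the bound then gives optimality over all IC and IR mechanisms, including lotteries.

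\emph{Main obstacle.} The delicate points are the measure-theoretic ones. First, $\omega$ has no joint density, so the integration by parts must be justified with regular conditional distributions and the measurable endpoints. Second, the lower-endpoint utility is defined as a limit from inside the interval; Lipschitz continuity of $U$ along the section guarantees this limit exists. Third, the menu's $U^P$ depends on $\widetilde\delta$ alone, consistent with (ii). The buyer's choice of bundle must also be shown measurable, with ties in quantity choice occurring with probability zero or being harmless by (i).
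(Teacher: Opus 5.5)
Your proposal is correct and follows essentially the same route as the paper. Both arguments derive the envelope identity along each active conditional section, apply conditional integration by parts to obtain the per-coordinate rent identity, average with weights summing to one, and verify optimality through zero shortfall, condition (i), and condition (ii). The only cosmetic difference is that you bound utilities by $\widetilde V_K$ using free disposal, whereas the paper uses $\max_{S\subseteq\mathcal K}v_S$ so the same argument carries over without free disposal (Remark \ref{BundleFreeDisposalRemark}).
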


The shortfall term in \eqref{BundleValueRevenueDecomposition} measures the expected loss from allocating a bundle below the highest value at its quantity. It is nonnegative and vanishes under a quantity-based menu because the buyer selects a maximizing bundle. In Theorem \ref{ConditionalBundleQuantityVerification}, conditions \textup{(i)} and \textup{(ii)} ensure quantity virtual-surplus maximization and zero lower-boundary rent, respectively. 

\subsection{Independence and separate critical pricing}

\begin{assumption}[Independence of highest values and bundle shortfalls]
\label{BundleValueIndependence}
The quantity-value vector is independent of the shortfall vector:
$$
(\widetilde V_1,\ldots,\widetilde V_K)\perp\omega. 
$$
\end{assumption} 

Since the quantity-value vector and $\widetilde\delta$ determine each other, Assumption \ref{BundleValueIndependence} is equivalent to $\widetilde\delta\perp\omega$. 

To obtain separate critical values, impose the following counterpart of Assumption \ref{MightAsWelllWalkingOnTheSun}. For each $j$ and almost every $\widetilde\delta_{-j}$, the conditional support of $\widetilde\delta_j$ is a nondegenerate interval with measurable endpoints and a positive conditional density. Denote its lower endpoint by $\widetilde\ell_j(\widetilde\delta_{-j})$ and its conditional density by $\gamma_j(\cdot\mid\widetilde\delta_{-j})$, extended by zero outside its conditional support. Require its conditional inverse-hazard ratio to depend only on the coordinate itself:
$$
\frac{\int_z^\infty\gamma_j(s\mid\widetilde\delta_{-j})\,ds}{\gamma_j(z\mid\widetilde\delta_{-j})}=R_j(z)
$$
almost everywhere on that interval, where $R_j(z)$ is the conditional inverse-hazard ratio and does not depend on the conditioning state. Assume that $R_j$ admits a continuous extension to the $j$th coordinate projection of $\operatorname{supp}(\widetilde\delta)$. Under Assumption \ref{BundleValueIndependence}, the same ratio and conditional lower endpoint apply after conditioning additionally on $\omega$. 

Fix $a\in\Delta_K$. For each active coordinate $j$, choose a nonnegative critical value $c_j$ in the interior of its coordinate projection satisfying $c_j=a_jR_j(c_j)$. Thus $c_j$ is a zero of the coordinate virtual-surplus function $z-a_jR_j(z)$. Set $c_j=0$ for inactive coordinates and construct the candidate quantity prices by
$$
P_n:=\sum_{j=1}^K\min\{n,j\}c_j,\;\; n=0,\ldots,K.
$$
The price $P_n$ applies to every size-$n$ bundle.

\begin{proposition}[Optimal quantity pricing for bundle values]
\label{NonadditiveQuantityVerification}
Suppose Assumption \ref{BundleValueIndependence} and the preceding gap rent regularity hold. For the same $a$, $c$, and $P$, suppose that:

\textup{(i)} For almost every supported $\widetilde\delta$, every $n\in\argmax_{0\leq r\leq K}\{\widetilde V_r-P_r\}$, and every $m=0,\ldots,K$,
\begin{equation}\label{NonadditiveRentComparison}
\sum_{j:a_j>0}[\min\{n,j\}-\min\{m,j\}]a_j[R_j(c_j)-R_j(\widetilde\delta_j)]\geq0.
\end{equation}

\textup{(ii)} For each $j$ with $a_j>0$ and almost every $\widetilde\delta_{-j}$, all quantities satisfy $\widetilde V_n\leq P_n$ when $\widetilde\delta_j=\widetilde\ell_j(\widetilde\delta_{-j})$.

Then the quantity-based menu with prices $P_n$ maximizes expected revenue among all feasible IC and IR mechanisms, including arbitrary lotteries over bundles.
\end{proposition}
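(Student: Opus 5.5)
The plan is to reduce Proposition \ref{NonadditiveQuantityVerification} to Theorem \ref{ConditionalBundleQuantityVerification}. I will check that its conditional-section regularity holds and that the critical prices $P_n=\widetilde V_n(c)$ satisfy its conditions \textup{(i)} and \textup{(ii)}.

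\textbf{Regularity.} Assumption \ref{BundleValueIndependence} gives $\widetilde\delta\perp\omega$. Hence the conditional law of $\widetilde\delta_j$ given $(\widetilde\delta_{-j},\omega)$ coincides with its law given $\widetilde\delta_{-j}$ alone. Three consequences follow:
\begin{itemize}
\item the conditional lower endpoint is $\widetilde\ell_j(\widetilde\delta_{-j},\omega)=\widetilde\ell_j(\widetilde\delta_{-j})$;
\item the ratio is $R_j(z\mid\widetilde\delta_{-j},\omega)=R_j(z)$;
\item the section is a nondegenerate interval with positive density.
\end{itemize}
This is exactly the regularity required by Theorem \ref{ConditionalBundleQuantityVerification}. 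The quantity virtual surplus therefore reduces to $\Psi_n^a=\widetilde V_n(\widetilde\delta)-\sum_{j:a_j>0}a_j\min\{n,j\}R_j(\widetilde\delta_j)$, which does not depend on $\omega$. Continuity of $R_j$ on the projection gives the boundedness needed for the integrability in the decomposition. The integrability is justified by compact support and $0\le U\le \widetilde V_K$.

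\textbf{Condition (i).} Write $\widetilde V_n(c):=\sum_j\min\{n,j\}c_j=P_n$. Since $c_j=a_jR_j(c_j)$ for active $j$ and $c_j=0$ otherwise, the algebra behind \eqref{CriticalComparisonIdentity} carries over verbatim:
$$\Psi_n^a-\Psi_m^a=\bigl(\widetilde V_n-P_n\bigr)-\bigl(\widetilde V_m-P_m\bigr)+\sum_{j:a_j>0}[\min\{n,j\}-\min\{m,j\}]a_j[R_j(c_j)-R_j(\widetilde\delta_j)].$$
To verify it, expand $\widetilde V_n-\widetilde V_m=\sum_j(\min\{n,j\}-\min\{m,j\})\widetilde\delta_j$. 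Then substitute $\widetilde\delta_j=\phi_j+a_jR_j(\widetilde\delta_j)$ and $c_j=a_jR_j(c_j)$. Now take $n$ in the payoff argmax. The first difference is then nonnegative, and the sum is nonnegative by \eqref{NonadditiveRentComparison}. Hence $n\in\argmax\Psi^a$, which is condition \textup{(i)} of the theorem.

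\textbf{Condition (ii).} At $\widetilde\delta_j=\widetilde\ell_j(\widetilde\delta_{-j})$, hypothesis \textup{(ii)} gives $\widetilde V_n-P_n\le0$ for all $n$. Together with $P_0=0$ and $\widetilde V_0=0$, this yields $U^P=0$. Since $\widetilde\ell_j$ does not depend on $\omega$, this holds for almost every $(\widetilde\delta_{-j},\omega)$.

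\textbf{Conclusion.} Theorem \ref{ConditionalBundleQuantityVerification} then gives optimality. Its proof handles the shortfall term through self-selection of a maximizing bundle. I also need $P_n\ge0$, which holds because $c_j\ge0$.

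\textbf{Main obstacle.} The delicate step is the identification of conditional objects given $\omega$ in the regularity step. Because $\omega$ may lack a density, I will argue through regular conditional distributions. Independence gives a product measure $\mathcal L(\widetilde\delta)\otimes\mathcal L(\omega)$. Hence the conditional law of $\widetilde\delta_j$ given $(\widetilde\delta_{-j},\omega)$ is, for almost every state, the conditional law given $\widetilde\delta_{-j}$. This last claim follows by Fubini on rectangles.

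A second point to check is the meaning of "utility at an endpoint as a limit from inside". $U^P$ is continuous, being a maximum of affine functions, so the limit equals the value at $\widetilde\ell_j$.
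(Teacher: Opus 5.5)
Your proposal is correct and follows essentially the same route as the paper. In both, independence makes the conditional ratios and lower endpoints free of $\omega$, and the critical-value equations give $\Psi_n^a=(\widetilde V_n-P_n)+\sum_{j:a_j>0}a_j\min\{n,j\}[R_j(c_j)-R_j(\widetilde\delta_j)]$. Hypothesis (i) then yields the argmax inclusion, hypothesis (ii) yields zero lower-endpoint utility, and Theorem~\ref{ConditionalBundleQuantityVerification} finishes; your handling of regular conditional distributions and of continuity of $U^P$ at the endpoint supplies details the paper leaves implicit.
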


Separate critical pricing also remains available without Assumption \ref{BundleValueIndependence} under the following conditional restriction. 

{
\renewcommand{\theassumption}{5$'$}
\renewcommand{\theHassumption}{5prime}
\begin{assumption}[Conditional inverse-hazard invariance]
\label{BundleConditionalInverseHazardInvariance}
Fix $a\in\Delta_K$ and maintain the conditional-section regularity of Theorem \ref{ConditionalBundleQuantityVerification}. For each $j$ with $a_j>0$, there exists a univariate function $R_j$ such that
$$
R_j(z\mid\widetilde\delta_{-j},\omega)=R_j(z)
$$
for almost every conditioning state and almost every $z$ in its conditional support.
\end{assumption}
}
\addtocounter{assumption}{-1}

\begin{corollary}[Critical-price verification]
\label{ConditionalBundleCriticalPricing}
Maintain the setting of Theorem \ref{ConditionalBundleQuantityVerification} and Assumption \ref{BundleConditionalInverseHazardInvariance}, with continuous versions of $R_j$. For each $j$ with $a_j>0$, choose a nonnegative root $c_j=a_jR_j(c_j)$ in the interior of the $j$th coordinate projection of $\operatorname{supp}(\widetilde\delta)$, and set $c_j=0$ otherwise. If the prices $P_n:=\sum_{j=1}^K\min\{n,j\}c_j$ satisfy conditions \textup{(i)} and \textup{(ii)} of Theorem \ref{ConditionalBundleQuantityVerification}, the quantity-based menu is optimal among all feasible IC and IR mechanisms, including arbitrary lotteries over bundles.
\end{corollary}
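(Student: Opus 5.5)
The plan is to reduce Corollary \ref{ConditionalBundleCriticalPricing} to Theorem \ref{ConditionalBundleQuantityVerification}. That theorem already gives the revenue decomposition \eqref{BundleValueRevenueDecomposition}, the upper bound $\mathbb E[\max_n\Psi_n^a]$, and optimality of any quantity prices satisfying its conditions (i) and (ii). The corollary assumes (i) and (ii) directly for the critical prices $P_n=\sum_j\min\{n,j\}c_j$. So what remains is to check that the objects in the corollary are admissible inputs to the theorem, and to make explicit the structure that critical pricing gives to condition (i).

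\textbf{Step 1: the hypotheses of the theorem hold.} The setting of Theorem \ref{ConditionalBundleQuantityVerification} is maintained, so the conditional-section regularity for active coordinates is in force. The prices satisfy $P_0=0$, and $P_n\geq0$ because every $c_j\geq0$ and $\min\{n,j\}\geq0$. The posted menu charges $P_{|S|}$ for each bundle $S$. Under it, the buyer's indirect utility is $U^P(\widetilde\delta)=\max_n\{\widetilde V_n-P_n\}$, because the buyer picks the maximizing bundle $S_n^\star$, which has $\omega_{S_n^\star}=0$ almost surely. This menu is a feasible IC and IR mechanism with $U=U^P$.

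\textbf{Step 2: the decomposition at the critical prices.} Under Assumption \ref{BundleConditionalInverseHazardInvariance}, $R_j(\widetilde\delta_j\mid\widetilde\delta_{-j},\omega)=R_j(\widetilde\delta_j)$ almost surely, where we use the continuous versions. Substituting $c_j=a_jR_j(c_j)$ for active $j$ and $c_j=0$ otherwise gives, as in \eqref{CriticalComparisonIdentity},
\[
\Psi_n^a-\Psi_m^a=\bigl(\widetilde V_n-P_n\bigr)-\bigl(\widetilde V_m-P_m\bigr)+\sum_{j:a_j>0}\bigl(\min\{n,j\}-\min\{m,j\}\bigr)a_j\bigl(R_j(c_j)-R_j(\widetilde\delta_j)\bigr).
\]
This identity is not needed for the logical implication, since (i) is assumed. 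It does show that the construction is coherent: (i) holds exactly when the rent-comparison sum is nonnegative at menu maximizers, which is the form checked in Proposition \ref{NonadditiveQuantityVerification}.

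\textbf{Step 3: conclude.} Condition (ii) gives $U^P=0$ at each active conditional lower endpoint, so $\mathcal B^a(U^P)=0$. Condition (i) makes $\sum_n\chi_n\Psi_n^a=\max_n\Psi_n^a$ almost surely. The shortfall term vanishes because the buyer selects the maximizing bundle. Then \eqref{BundleValueRevenueDecomposition} shows that the menu attains $\mathbb E[\max_n\Psi_n^a]$, which is an upper bound for every feasible IC and IR mechanism, including lotteries over bundles.

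\textbf{Main obstacle: the endpoint term.} The delicate point is that $\mathcal B^a$ is defined with utility at the endpoint understood as a limit from inside the conditional interval. Condition (ii) evaluates $U^P$ at $\widetilde\ell_j$, so the two must be identified. This follows because $U^P$ is a finite maximum of affine functions of $\widetilde\delta$, hence continuous, so the interior limit equals the endpoint value. It also matters that invariance holds only almost everywhere in $z$. This is harmless: $R_j$ enters the bound only through integrals, and $c_j$ is defined through the chosen continuous version.
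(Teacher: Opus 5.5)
Your proposal is correct and follows the paper's route: the paper gives no separate proof because the corollary is immediate from Theorem \ref{ConditionalBundleQuantityVerification} once $c_j\geq0$ yields $P_0=0$ and $P_n\geq0$, and your Step 2 identity is the same rewriting used in the proof of Proposition \ref{NonadditiveQuantityVerification}. One small correction is that nonnegativity of the rent-comparison sum at payoff maximizers is sufficient for condition (i), not equivalent to it as your ``exactly when'' states, although this side remark plays no role in your argument.
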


\begin{remark}[Without free disposal]\label{BundleFreeDisposalRemark}
The revenue decomposition and verification of bundle prices in Theorem \ref{ConditionalBundleQuantityVerification} remain valid without free disposal, retaining nonnegative bundle values and all other stated conditions. Quantity values may then decrease with quantity, and their first differences may be negative. The proof of Theorem \ref{ConditionalBundleQuantityVerification} bounds utilities, payments, shortfall losses, and maximum virtual surplus by $\max_{S\subseteq\mathcal K}v_S$, which is bounded under compact support. The verified menu offers every bundle $S$ at price $P_{|S|}$. With nondecreasing quantity prices, it also permits the buyer to select up to the purchased quantity, since any smaller bundle is available directly at a weakly lower price. The critical prices satisfy this property because $P_n-P_{n-1}=\sum_{j=n}^Kc_j\geq0$.
\end{remark}

Assumption \ref{BundleValueIndependence}, together with the gap rent regularity preceding Proposition \ref{NonadditiveQuantityVerification}, implies Assumption \ref{BundleConditionalInverseHazardInvariance}. The following jointly uniform example satisfies Assumption \ref{BundleConditionalInverseHazardInvariance}, and its quantity prices satisfy conditions \textup{(i)} and \textup{(ii)} of Theorem \ref{ConditionalBundleQuantityVerification}.

\begin{example}[Jointly uniform bundle values]\label{UniformBundleFreeDisposal}
Let $K=2$. The joint density of bundle values is
$$
\widetilde f(v)=\frac{16}{63}\,\mathbf1\!\left\{\begin{gathered}
0\leq\max\left\{\max\{v_{\{1\}},v_{\{2\}}\}-\frac32,\,\frac{v_{\{1,2\}}}{2}-\frac98\right\}\leq\min\{v_{\{1\}},v_{\{2\}}\},\\
2\max\{v_{\{1\}},v_{\{2\}}\}-\frac32\leq v_{\{1,2\}}\leq\max\{v_{\{1\}},v_{\{2\}}\}+\frac32
\end{gathered}\right\}.
$$
These restrictions imply nonnegative values and free disposal. Assumption \ref{BundleConditionalInverseHazardInvariance} holds with $a=(0,1)$ and $R_2(z)=3/2-z$. The quantity prices $P_1=3/2$ and $P_2=9/4$ satisfy the verification conditions of Theorem \ref{ConditionalBundleQuantityVerification} and therefore maximize expected revenue among all feasible IC and IR mechanisms, including lotteries over bundles. The buyer purchases one object with probability $3/14$ and both objects with probability $11/14$, yielding optimal expected revenue $117/56$.
\end{example}

Restricting Example~\ref{UniformBundleFreeDisposal} to $\widetilde\delta_1\geq3/4$ and renormalizing yields a jointly uniform distribution with density $32/81$ satisfying Assumption \ref{BundleValueIndependence}. Conditional on the quantity-value vector, either singleton is the maximizer with probability $1/2$, and the other singleton's shortfall is independently uniform on $[0,3/2]$. The optimal prices remain $P_1=3/2$ and $P_2=9/4$, with purchase probabilities $1/3$ and $2/3$ and expected revenue $2$.

\section{Concluding remark}
This study derives revenue upper bounds from second differences of quantity values and determines quantity prices that attain them under explicit distributional and support conditions. These menus are optimal among all IC and IR mechanisms. Depending on the distribution, they offer quantity discounts, a restricted set of quantities, or a two-part tariff. The results apply to ex ante symmetric additive values, nonadditive bundle values without label symmetry, and the multiple-unit environments described above.

\newpage
\appendix
\section*{Appendices}
\section{Preliminary and representation results}

\begin{lemma}[Rank-gap change of variables]\label{ChangeWeNeed}
For every Borel $A\subseteq\Omega$, $G(\mathbf T A)=\int_A g(\mathbf T\delta)\,d\delta$. Consequently, $\gamma(\delta)=g(\mathbf T\delta)$ almost everywhere.
\end{lemma}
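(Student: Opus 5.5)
The plan is to use the fact that $\delta\mapsto\mathbf T\delta$ is a linear bijection of $\mathbb R^K$ with $\det\mathbf T=1$, and then apply the standard change-of-variables formula for Lebesgue integrals.

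\emph{First step: identifying the law of $\delta$.} By definition $\delta=\mathbf T^{-1}\theta^\downarrow$. Hence, for every Borel $A\subseteq\Omega$,
\[
\Pr(\delta\in A)=\Pr(\theta^\downarrow\in\mathbf TA)=G(\mathbf TA).
\]
Here $\mathbf TA$ is Borel because $\mathbf T$ is a homeomorphism. Since $G$ has density $g$ with respect to Lebesgue measure on $\Theta^\downarrow$, we get
\[
G(\mathbf TA)=\int_{\mathbf TA}g(y)\,dy .
\]

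\emph{Second step: changing variables.} Substitute $y=\mathbf T\delta$. For an invertible linear map, the change-of-variables formula for nonnegative measurable $g$ gives
\[
\int_{\mathbf TA}g(y)\,dy=\int_A g(\mathbf T\delta)\,|\det\mathbf T|\,d\delta .
\]
Because $\mathbf T$ is unit upper triangular, $|\det\mathbf T|=1$, and the claimed identity follows.

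Two small points need checking. The first is that $g$ is indeed a Lebesgue density of $\theta^\downarrow$. This holds because the sorting map restricted to the strictly ordered region is a measure-preserving bijection on each of the $K!$ cones. Ties have probability zero, and the density of $\theta^\downarrow$ sums $f$ over permutations. In any case, $g$ is taken as given in the paper. The second is that $\mathbf TA\subseteq\Theta^\downarrow$ up to null sets whenever $A\subseteq\Omega$, which is immediate from the definitions of $\Omega$ and $\Theta^\downarrow$.

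\emph{Third step: the density identity.} We now have two facts:
\[
\int_A\gamma\,d\delta=\Pr(\delta\in A)=\int_A g(\mathbf T\cdot)\,d\delta
\quad\text{for all Borel }A\subseteq\Omega,
\]
and both $\gamma$ and $g(\mathbf T\cdot)$ vanish a.e.\ outside $\Omega$. Two nonnegative integrable functions with equal integrals over all Borel sets agree Lebesgue-a.e., for instance by taking $A=\{\gamma>g(\mathbf T\cdot)\}$ and then the reverse set. This gives $\gamma(\delta)=g(\mathbf T\delta)$ almost everywhere.

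The main potential obstacle is only measure-theoretic bookkeeping. One must ensure that $g(\mathbf T\cdot)$ is Borel measurable, which holds because it is a composition with a continuous map. One must also ensure that the change-of-variables theorem is applied to a nonnegative function, so no integrability issue arises even in the unbounded-support extensions. Beyond this, the argument is routine.
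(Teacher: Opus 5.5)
Your proposal is correct and follows the paper's own argument: the linear change of variables $y=\mathbf T\delta$ with $|\det\mathbf T|=1$, followed by almost-everywhere uniqueness of densities. The paper compresses this into a single line, and your version supplies the same reasoning with the measure-theoretic bookkeeping made explicit.
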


\begin{proof}[Proof of Lemma \ref{ChangeWeNeed}]
Since $\theta^\downarrow=\mathbf T\delta$ and $\det\mathbf T=1$, the change-of-variables formula yields both claims.
\end{proof}

\begin{proof}[Proof of Lemma \ref{QRankEquivalence}]
Following \citet[Proposition 1 and Theorem 2]{BikhchandaniMishra2024}, average any feasible mechanism over object-label permutations. The requirement of permutation invariance of the support ensures that all relabeled reports are feasible. Taking the average over permutations preserves incentive compatibility, individual rationality, and nonnegative payments, while exchangeability preserves expected revenue. For the resulting symmetric mechanism, incentive compatibility against a report exchanging coordinates $i$ and $j$ implies $(\theta_i-\theta_j)(q_i(\theta)-q_j(\theta))\geq0$. Hence the allocation is rank preserving, and restriction to sorted types defines a feasible rank-allocation mechanism with the same expected revenue. Conversely, every feasible rank-allocation mechanism induces a feasible labeled mechanism with the same expected revenue. The two constructions establish equality of the revenue suprema.
\end{proof}

To characterize the buyer's choice under a quantity quota $\zeta\in[0,K]$, define the feasible allocation set $\mathcal Y(\zeta):=\{y\in[0,1]^K:\langle\mathbf 1,y\rangle\leq\zeta\}$ and the highest attainable value $V(\theta;\zeta):=\max_{y\in\mathcal Y(\zeta)}\langle y,\theta\rangle$. The following lemma characterizes the buyer's optimal choice.

\begin{lemma}[Top-rank choice under quota]\label{quantitycompress}
For any $\theta\in\Theta$ and $\zeta\in[0,K]$,
\begin{equation}\label{Vquota}
V(\theta;\zeta)=\sum_{r=1}^{\lfloor\zeta\rfloor}\theta_{(r)}+\mathbbm{1}\{\zeta\notin\mathbb Z\}(\zeta-\lfloor\zeta\rfloor)\theta_{(\lceil\zeta\rceil)}.
\end{equation}
The second term is omitted when $\zeta$ is an integer. A maximizer is $y^*(\theta;\zeta)=\Gamma(\theta)^\top y^\downarrow(\theta;\zeta)$, where
$$
y_{(r)}^\downarrow(\theta;\zeta)=
\begin{cases}
1,&r\leq\lfloor\zeta\rfloor,\\
\zeta-\lfloor\zeta\rfloor,&r=\lfloor\zeta\rfloor+1,\\
0,&r>\lfloor\zeta\rfloor+1.
\end{cases}
$$
\end{lemma}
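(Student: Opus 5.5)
This is a linear optimization over the box intersected with a half-space, so the approach is to reduce to a fractional knapsack with unit weights and solve it by a greedy exchange argument.

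\emph{Reduction to sorted coordinates.} Since $\Gamma(\theta)$ is a permutation matrix, $\langle y,\theta\rangle=\langle\Gamma(\theta)y,\theta^\downarrow\rangle$. The map $y\mapsto\Gamma(\theta)y$ is a bijection of $\mathcal Y(\zeta)$ onto itself, because both the box $[0,1]^K$ and the constraint $\langle\mathbf 1,y\rangle\le\zeta$ are permutation invariant. Hence $V(\theta;\zeta)=\max_{z\in\mathcal Y(\zeta)}\langle z,\theta^\downarrow\rangle$, and a maximizer $z^*$ in sorted coordinates yields the maximizer $y^*=\Gamma(\theta)^\top z^*$ in original labels. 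It therefore suffices to show that $z^*=y^\downarrow(\theta;\zeta)$ solves the sorted problem and that it attains the value in \eqref{Vquota}.

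\emph{Feasibility and value.} First check that $y^\downarrow\in[0,1]^K$ and that $\langle\mathbf 1,y^\downarrow\rangle=\lfloor\zeta\rfloor+(\zeta-\lfloor\zeta\rfloor)=\zeta$. When $\zeta=K$, the middle case of the definition is vacuous, since no rank $K+1$ exists. Evaluating the objective at $y^\downarrow$ gives exactly the right-hand side of \eqref{Vquota}. When $\zeta$ is an integer, $\lceil\zeta\rceil=\lfloor\zeta\rfloor$ and the fractional weight is zero, which is why the second term is omitted.

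\emph{Optimality.} Use an Abel summation argument. Write $\theta_{(r)}=\sum_{j\ge r}\delta_j$ with $\delta_j\ge0$; here $\delta_K=\theta_{(K)}\ge0$ because values are nonnegative. Then for any $z\in\mathcal Y(\zeta)$,
$$\langle z,\theta^\downarrow\rangle=\sum_{j=1}^K\delta_j\sum_{r=1}^j z_r.$$
The partial sums satisfy $\sum_{r\le j}z_r\le\min\{j,\zeta\}$, using the box constraint for the bound $j$ and the quota for the bound $\zeta$. The greedy vector $y^\downarrow$ attains $\sum_{r\le j}y^\downarrow_{(r)}=\min\{j,\zeta\}$ for every $j$. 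Since every $\delta_j\ge0$, the greedy vector therefore maximizes the weighted sum term by term.

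\emph{Main obstacle.} The only delicate point is the bookkeeping of edge cases: integer $\zeta$, $\zeta=K$, and ties among the $\theta_{(r)}$, which are handled by the fixed tie-breaking order defining $\Gamma$. Nonnegativity of $\theta_{(K)}$ must also be available; it holds because $\Theta\subseteq\mathbb R_+^K$. Without it, the quota might not bind, but the statement concerns only nonnegative types.
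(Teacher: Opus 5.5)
Your proof is correct, but it takes a different route from the paper's. The paper writes the sorted problem as a linear program and forms the Lagrangian with multipliers $\lambda$, $\mu_{(r)}$, $\eta_{(r)}$. From stationarity and complementary slackness it reads off the threshold structure of any optimum: $y_{(r)}=1$ forces $\theta_{(r)}\ge\lambda$, $y_{(r)}=0$ forces $\theta_{(r)}\le\lambda$, and interior values force $\theta_{(r)}=\lambda$. It then concludes that filling ranks in order is optimal.

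You instead give a direct primal certificate. The Abel summation $\langle z,\theta^\downarrow\rangle=\sum_j\delta_j\sum_{r\le j}z_r$ with $\delta_j\ge0$, combined with the bounds $\sum_{r\le j}z_r\le\min\{j,\zeta\}$ that $y^\downarrow$ attains with equality, proves optimality term by term. This buys several things:
\begin{itemize}
\item It avoids appealing to sufficiency of the KKT conditions, or to exhibiting multipliers for $y^\downarrow$ (for instance $\lambda=\theta_{(\lfloor\zeta\rfloor+1)}$ when $\zeta<K$ and $\lambda=0$ when $\zeta=K$).
\item It treats integer $\zeta$ and $\zeta=K$ uniformly.
\item It makes explicit that $\theta_{(K)}\ge0$ is what makes exhausting the quota optimal. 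The paper's argument leaves this implicit in the sign restriction $\lambda\ge0$.
\item It delivers $V(\theta;\zeta)=\sum_j\min\{j,\zeta\}\delta_j$, which at integer $\zeta=n$ is exactly the paper's quantity value $V_n(\delta)$.
\end{itemize}
The KKT route, for its part, describes the structure of every maximizer at once. Your argument recovers that only by examining when the termwise bounds hold with equality.
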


\begin{proof}[Proof of Lemma \ref{quantitycompress}]
Fix $\theta$ and $\zeta$. Consider the co-sorted coordinates associated with $\theta^\downarrow$. The buyer's problem becomes
\begin{equation*}
\begin{split}
&\max_{y_{(r)}}\sum_{r=1}^{K}y_{(r)}\theta_{(r)}\\
\text{s.t.}\;\; &\sum_{r=1}^{K}y_{(r)}\leq\zeta,\;\;
0\leq y_{(r)}\leq1,\;\; r=1,\dots,K.
\end{split}
\end{equation*}
Let $\lambda\geq0$ be the multiplier for the quota constraint, let $\mu_{(r)}\geq0$ be the multiplier for the upper-bound constraint $y_{(r)}\leq1$, and let $\eta_{(r)}\geq0$ be the multiplier for the lower-bound constraint $y_{(r)}\geq0$. The Lagrangian is
\begin{equation*}
\mathcal L
=
\sum_{r=1}^{K}y_{(r)}\theta_{(r)}
+
\lambda\left(\zeta-\sum_{r=1}^{K}y_{(r)}\right)
+
\sum_{r=1}^{K}\mu_{(r)}(1-y_{(r)})
+
\sum_{r=1}^{K}\eta_{(r)} y_{(r)}.
\end{equation*}
The first-order condition with respect to $y_{(r)}$ is $\theta_{(r)}-\lambda-\mu_{(r)}+\eta_{(r)}=0$ for $r=1,\dots,K$. Complementary slackness yields $\lambda(\zeta-\sum_{r=1}^K y_{(r)})=0$, $\mu_{(r)}(1-y_{(r)})=0$, and $\eta_{(r)}y_{(r)}=0$ for every $r$.
Hence, if $y_{(r)}=1$, then $\theta_{(r)}\geq\lambda$; if $y_{(r)}=0$, then $\theta_{(r)}\leq\lambda$; and if $0<y_{(r)}<1$, then $\theta_{(r)}=\lambda$. Since $\theta_{(1)}\geq\theta_{(2)}\geq\cdots\geq\theta_{(K)}$, by the KKT conditions, there is an optimal solution that allocates to higher-ranked objects before lower-ranked objects.

The specified $y^\downarrow(\theta;\zeta)$ is therefore an optimizer. Substituting this optimizer into the objective yields the value in equation~\eqref{Vquota}. 
\end{proof}

\begin{proof}[Proof of Lemma \ref{RankMenuEquivalence}]
First, fix any $x\in\mathcal X$ and define $\chi$ by the reverse differences in the lemma. Monotonicity of $x$ yields $\chi_n\geq0$ for every $n$, and telescoping yields $\sum_{n=0}^{K}\chi_n=1$ and, for every $r=1,\dots,K$, $\sum_{n=r}^{K}\chi_n=x_{(r)}.$ Thus $x=\sum_{n=0}^{K}\chi_ne_n$. Conversely, every $e_n$ belongs to $\mathcal X$, and the vectors $e_0,\dots,e_K$ are affinely independent. Hence they are exactly the extreme points and the lottery representation is unique.

I next establish the menu representation applying the taxation principle. Every admissible report-independent top-rank menu induces an IC rank-allocation mechanism because the selected item maximizes the buyer's payoff among all menu items. If the menu includes the zero-quantity lottery at zero price, the induced mechanism is also IR. For labeled deviations, the rearrangement inequality implies that, for any reported sorted type $\tau$ and any true type $\theta$, the best labeling of the rank allocation $x(\tau)$ yields value $\langle x(\tau),\theta^\downarrow\rangle=V_{\chi(\tau)}(\theta)$. Hence no labeled deviation yields more than the corresponding top-rank menu item.

Conversely, fix any feasible direct rank-allocation mechanism $(x,p)\in\mathcal M_{\mathcal X}$. For every $\tau\in\Theta^\downarrow$, let $\chi(\tau)$ be the unique top-rank lottery representation of $x(\tau)$ and form the report-independent menu $M=\{(\chi(\tau),p(\tau)):\tau\in\Theta^\downarrow\}.$ For every $\theta\in\Theta$ and $\tau\in\Theta^\downarrow$, set $\theta':=\Gamma(\tau)^\top\theta^\downarrow\in\Theta$. Incentive compatibility at $\theta'$ against the report $\tau$ yields
\begin{equation*}
\begin{aligned}
\langle x(\theta^\downarrow),\theta^\downarrow\rangle-p(\theta^\downarrow)
&\geq\langle q^x(\tau),\theta'\rangle-p(\tau)\\
&=\langle x(\tau),\theta^\downarrow\rangle-p(\tau)
=V_{\chi(\tau)}(\theta)-p(\tau).
\end{aligned}
\end{equation*}
Thus truthful selection maximizes payoff from the menu, and individual rationality, payments, and revenue are preserved. The two constructions imply $\mathrm{REV}_{\mathcal X}^{*}=\mathrm{REV}_{\mathrm{menu}}^{*}$. Lemma \ref{QRankEquivalence} then yields the stated three-way revenue equality.
\end{proof}

\section{Rank-gap upper bound and exact verification}

\begin{proof}[Proof of Proposition \ref{GapVirtualSurplusUpperBound}]
By Proposition \ref{GapRevenueDecomposition}, revenue equals expected quantity virtual surplus minus the nonnegative residual. Since $\chi(\delta)$ is a probability vector, its weighted virtual surplus is at most $\max_n\Psi_n^a(\delta)$ at every type. Integrating this inequality yields $\mathrm{REV}(M)\leq B(a)$.
\end{proof}

\begin{lemma}[Measurability of candidate virtual-surplus cells]\label{VirtualSurplusCellMeasurability}
Under Assumption \ref{MightAsWelllWalkingOnTheSun}, for every $a\in\Delta_K$, each $D_n^a$ is Borel measurable.
\end{lemma}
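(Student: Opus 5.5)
The plan is to write each $D_n^a$ as a finite Boolean combination of sets defined by inequalities between Borel functions on $\Omega$. This reduces the claim to Borel measurability of the quantity virtual surpluses $\Psi_m^a$.

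\emph{Step 1: each $\Psi_m^a$ is Borel.} Under Assumption \ref{MightAsWelllWalkingOnTheSun}, each $R_j$ has a continuous extension to the compact interval $I_j$, so $\phi_j^a(z)=z-a_jR_j(z)$ is continuous on $I_j$. Composing with the coordinate projection $\delta\mapsto\delta_j$, which is continuous, makes $\delta\mapsto\phi_j^a(\delta_j)$ continuous on $\Omega$. Hence
\[
\Psi_m^a(\delta)=\sum_{j=1}^K\min\{m,j\}\phi_j^a(\delta_j)
\]
is a finite linear combination of continuous functions and is therefore continuous, hence Borel. The set $\Omega$ is closed, being a support, so it is Borel. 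If one prefers not to rely on the continuous extension, it suffices that $R_j$ is Borel on $I_j$; then the same compositions remain Borel.

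\emph{Step 2: write the cells as finite combinations.} Let $\prec$ be the fixed tie-breaking order on $\{0,\dots,K\}$. The event $n^a(\delta)=n$ holds exactly when two conditions hold: $\Psi_n^a(\delta)\ge\Psi_m^a(\delta)$ for every $m$, and $\Psi_n^a(\delta)>\Psi_m^a(\delta)$ for every $m\prec n$. Thus
\[
D_n^a=\Omega\cap\bigcap_{m=0}^K\{\Psi_n^a-\Psi_m^a\ge0\}\cap\bigcap_{m\prec n}\{\Psi_n^a-\Psi_m^a>0\}.
\]
Each set on the right is a preimage of a closed or open half-line under a Borel function, and a finite intersection of Borel sets is Borel.

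The only point that needs care is the domain of $R_j$. It is defined only on the relative interiors of nondegenerate sections, while the assumption supplies a continuous extension to $I_j$, and I use that extension throughout. Redefining $R_j$ on a null set changes $D_n^a$ only by a $\gamma$-null set, and this does not affect how the cells are used later.
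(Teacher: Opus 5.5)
Your proof is correct and follows the paper's own argument: continuity of each $R_j$ on $I_j$ (from Assumption \ref{MightAsWelllWalkingOnTheSun}) makes every $\Psi_n^a$ continuous on $\Omega$. Under the fixed tie-breaking order, $D_n^a$ is then a finite intersection of weak and strict comparison sets between these functions, hence Borel.
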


\begin{proof}[Proof of Lemma \ref{VirtualSurplusCellMeasurability}]
By Assumption \ref{MightAsWelllWalkingOnTheSun}, each $R_j$ is continuous on $I_j$, so each $\Psi_n^a$ is continuous on $\Omega$. Under the fixed tie-breaking order, each $D_n^a$ is a finite intersection of sets defined by strict or weak comparisons between these continuous functions. Such sets are Borel measurable.
\end{proof}

\begin{lemma}[Finite implementation]\label{FiniteVirtualSurplusImplementationLemma}
Under Condition \ref{FiniteVirtualSurplusImplementation}, $M^a$ admits a selection that implements the candidate quantity rule $n^a$ almost surely. In particular, for every $n\in\mathcal N^a$,
\begin{equation*}
\delta\in D_n^a
\;\Longrightarrow\;
n\in
\argmax_{m\in\mathcal N^a}
\{V_m(\delta)-p_m^a\}.
\end{equation*}
\end{lemma}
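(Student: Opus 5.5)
The plan is to verify the implication directly from the definition of $b_{nm}^a$ as an infimum over the candidate cell, and then assemble a measurable selection. Fix $n\in\mathcal N^a$ and $\delta\in D_n^a$. For any $m\in\mathcal N^a$, the definition of $b_{nm}^a$ gives $V_n(\delta)-V_m(\delta)\geq b_{nm}^a$. Here $D_n^a$ is nonempty, so $b_{nm}^a$ is not the convention $+\infty$ and the inequality holds for real numbers. Condition~\ref{FiniteVirtualSurplusImplementation} supplies $p_n^a-p_m^a\leq b_{nm}^a$. Chaining the two inequalities gives $V_n(\delta)-p_n^a\geq V_m(\delta)-p_m^a$ for every $m\in\mathcal N^a$, which is the displayed implication. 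No continuity or regularity beyond the definitions is needed for this step.

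Next I will construct the selection. The menu $M^a$ contains only quantities in $\mathcal N^a$, so I define the selection $\nu(\delta):=n^a(\delta)$ whenever $n^a(\delta)\in\mathcal N^a$. On the remaining set I let $\nu(\delta)$ be the first maximizer of $V_m(\delta)-p_m^a$ over $m\in\mathcal N^a$, using the fixed tie-breaking order. That remaining set is $\bigcup_{n\notin\mathcal N^a}D_n^a$. Each such $D_n^a$ has $n\geq1$ and zero $\gamma$-probability by the definition of $\mathcal N^a$, and $0\in\mathcal N^a$ always, so the union is $\gamma$-null. By the first step, $\nu$ is payoff-maximizing at every type, so it is an admissible selection in the sense of Lemma~\ref{RankMenuEquivalence}. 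It agrees with $n^a$ outside a null set, so $M^a$ implements $n^a$ almost surely.

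For measurability, Lemma~\ref{VirtualSurplusCellMeasurability} makes each $D_n^a$ Borel. The sets where a given $m$ is the first maximizer of the finitely many affine functions $V_m(\delta)-p_m^a$ are also Borel. Hence $\nu$ is Borel measurable.

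Finally, IR and nonnegative payments follow because $(0,0)\in M^a$ and $p_n^a\geq0$ by Condition~\ref{FiniteVirtualSurplusImplementation}. IC holds by construction of $\nu$ as a payoff maximizer. I do not expect any step to be a real obstacle. The only care needed is the treatment of empty cells, where $\inf\emptyset=+\infty$ and the constraint is vacuous, and the null set of types whose virtual-surplus maximizer lies outside $\mathcal N^a$, which must be reassigned without affecting the almost-sure statement.
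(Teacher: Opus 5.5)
Your proposal is correct and follows essentially the same route as the paper: on $D_n^a$ you chain $V_n(\delta)-V_m(\delta)\geq b_{nm}^a\geq p_n^a-p_m^a$, then select quantity $n$ on each retained cell and the first payoff maximizer on the $\gamma$-null union of discarded cells, with measurability coming from the Borel cells. Your extra remarks go slightly beyond what the paper writes out: empty cells and $\inf\emptyset=+\infty$, individual rationality through the $(0,0)$ option, and why the reassigned set is null since $0\in\mathcal N^a$ always.
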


\begin{proof}[Proof of Lemma \ref{FiniteVirtualSurplusImplementationLemma}]
Fix $n\in\mathcal N^a$ and $\delta\in D_n^a$. For every $m\in\mathcal N^a$, we have $V_n(\delta)-V_m(\delta)\geq b_{nm}^a\geq p_n^a-p_m^a.$ Hence $V_n(\delta)-p_n^a\geq V_m(\delta)-p_m^a.$ Thus quantity $n$ is payoff maximizing on $D_n^a$. Choose quantity $n$ on each retained candidate cell and the first payoff-maximizing menu item under a fixed order on the remaining types. Since there are finitely many quantities and the candidate cells are measurable, this selection implements $n^a$ almost surely.
\end{proof}

\begin{lemma}[Existence of a minimizing weight vector]\label{ConvexityOfB}
Suppose each state-independent ratio $R_j$ is nonnegative and Borel measurable, is finite $\gamma$-almost surely, and $\mathbb E_\gamma[V_K(\delta)]<\infty$. Then $B:\Delta_K\to\mathbb R$ is finite, continuous, and convex, and $\mathcal A^*$ is nonempty and compact. These requirements hold under Assumption \ref{MightAsWelllWalkingOnTheSun} in the compact-support baseline.
\end{lemma}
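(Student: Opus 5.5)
The plan is to write $B(a)=\mathbb E_\gamma[\max_{n}\Psi_n^a(\delta)]$ and note that, for each fixed $\delta$, the map $a\mapsto\Psi_n^a(\delta)=V_n(\delta)-\sum_{j}\min\{n,j\}a_jR_j(\delta_j)$ is affine in $a$. The pointwise maximum over the finitely many $n\in\{0,\dots,K\}$ of affine functions is therefore convex and continuous in $a$ for every $\delta$ at which all $R_j(\delta_j)$ are finite; by hypothesis this holds $\gamma$-almost surely.

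\textbf{Finiteness.} I will bound the integrand uniformly in $a\in\Delta_K$:
\[
0=\Psi_0^a(\delta)\leq\max_n\Psi_n^a(\delta)\leq\max_n V_n(\delta)=V_K(\delta).
\]
The upper bound uses $a_j\geq0$ and $R_j\geq0$, so the subtracted term is nonnegative. The last equality holds because $V_n$ is nondecreasing in $n$: its increments are $\theta_{(n)}\geq0$ on $\Omega$, since $\delta\geq0$ there. Since $\mathbb E_\gamma[V_K]<\infty$, $B(a)$ is finite and lies in $[0,\mathbb E_\gamma V_K]$. Measurability of the integrand follows from Borel measurability of the $R_j$.

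\textbf{Convexity.} Convexity of $B$ follows by integrating the pointwise convexity inequality, which is legitimate since all terms are integrable.

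\textbf{Continuity.} For continuity, I take $a^k\to a$ in $\Delta_K$. The integrands converge pointwise almost surely and are dominated by the integrable function $V_K$, so dominated convergence gives $B(a^k)\to B(a)$. Alternatively, a finite convex function on a simplex is automatically upper semicontinuous, but dominated convergence is more direct.

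\textbf{Existence and compactness of minimizers.} $\Delta_K$ is compact, so the continuous $B$ attains its minimum and $\mathcal A^*$ is nonempty. As the preimage of a closed set under a continuous map, $\mathcal A^*$ is closed in the compact set $\Delta_K$, hence compact; it is also convex.

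\textbf{The baseline case.} Finally, I check that Assumption \ref{MightAsWelllWalkingOnTheSun} delivers the hypotheses when the support is compact. Each $R_j$ is nonnegative as a ratio of a tail mass to a positive density. It has a continuous extension to the compact interval $I_j$, so it is Borel measurable and bounded, hence finite. Moreover, $V_K\leq K\overline\theta$ is bounded on compact $\Omega$, so $\mathbb E_\gamma V_K<\infty$.

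\textbf{Main obstacle.} The only delicate point is the pointwise dominating bound. It requires nonnegativity of $R_j$ and of the ranked values, which in turn requires $\delta_K\geq0$ and nonnegative gaps. On the support these hold since $\theta\in\mathbb R_+^K$; I will state this explicitly.
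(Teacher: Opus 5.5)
Your proposal is correct and follows essentially the same route as the paper: the pointwise sandwich $0\leq\max_n\Psi_n^a(\delta)\leq V_K(\delta)$, convexity and continuity in $a$ of a maximum of finitely many affine functions, dominated convergence with dominating function $V_K$, and the extreme value theorem on the compact simplex $\Delta_K$. Your check of the baseline hypotheses matches the paper's reasoning in the main text: boundedness of the continuously extended $R_j$ on the compact projection $I_j$, and $V_K\leq K\overline\theta$.
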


\begin{proof}[Proof of Lemma \ref{ConvexityOfB}]
For every $a\in\Delta_K$, nonnegativity of the terms $a_jR_j(\delta_j)$ and $\Psi_0^a=0$ yield 
$$0\leq\max_{n=0,\dots,K}\Psi_n^a(\delta)\leq\max_{n=0,\dots,K}V_n(\delta)=V_K(\delta).$$
The displayed bound implies $B(a)\leq\mathbb E_\gamma[V_K]<\infty$. Almost surely, $\max_n\Psi_n^a(\delta)$ is continuous and convex in $a$, being the maximum of finitely many affine functions. Dominated convergence, with dominating function $V_K$, establishes continuity of $B$, while expectation preserves convexity. Since $B$ is continuous on the compact simplex $\Delta_K$, the Weierstrass extreme value theorem ensures that it attains the  minimum. The minimizer set is closed in $\Delta_K$ and thus compact. 
\end{proof}

\begin{proof}[Proof of Theorem \ref{EndogenousGapFlowGap}]
By Proposition \ref{GapVirtualSurplusUpperBound} and Lemma \ref{RankMenuEquivalence}, we have $\mathrm{REV}_{\mathcal Q}^{*}=\mathrm{REV}_{\mathrm{menu}}^{*}\leq B(a).$ Lemma \ref{FiniteVirtualSurplusImplementationLemma} implies that $M^a$ implements the virtual-surplus-maximizing quantity rule generated by $a$. Proposition \ref{GapRevenueDecomposition} therefore yields $\mathrm{REV}(M^a)=B(a)-\mathcal B ^a(U^a).$ Combining these relations yields the upper bound on the revenue gap, while feasibility of $M^a$ yields the lower bound. If Condition \ref{ZeroBoundaryResidual} holds, then $\mathrm{REV}(M^a)=B(a)$, so feasibility and the upper bound imply $\mathrm{REV}_{\mathcal Q}^{*}=\mathrm{REV}_{\mathrm{menu}}^{*}=\mathrm{REV}(M^a)=B(a)$.

Finally, $B(\widetilde a)\geq\mathrm{REV}(M^a)=B(a)$ for every $\widetilde a\in\Delta_K$, so $a\in\mathcal A^*$.
\end{proof}

\begin{lemma}\label{YouDontDeserveItImpliesZeroResidual}
Assumption \ref{YouDontDeserveIt} implies $\mathcal B^{a^c}(U^c)=0$. 
\end{lemma}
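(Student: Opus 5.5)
We need to show that Assumption \ref{YouDontDeserveIt} forces $U^c(\underline\delta_j(Z_j),Z_j)=0$ for every active $j$ and almost every $Z_j$. Once this holds, each summand $a_j^c\,\mathbb E_{Z_j}[U^c(\underline\delta_j(Z_j),Z_j)]$ of $\mathcal B^{a^c}(U^c)$ vanishes. The inactive directions contribute nothing because $a_j^c=0$.

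The argument is pointwise on the lower faces. Fix an active gap $j$ and a conditioning state $Z_j$ in the full-measure set where Assumption \ref{MightAsWelllWalkingOnTheSun} gives an interval section and Assumption \ref{YouDontDeserveIt} applies. Evaluate the indirect utility of the critical menu,
$$U^c(\underline\delta_j(Z_j),Z_j)=\max_{n\in\{0\}\cup\mathcal N(c)}\{V_n(\underline\delta_j(Z_j),Z_j)-P_n^c\}.$$
By \eqref{CriticalQuantityPrices}, $P_n^c=V_n(c)$. Assumption \ref{YouDontDeserveIt} then makes every term with $n\in\mathcal N(c)\setminus\{0\}$ nonpositive. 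The $n=0$ term equals $V_0-P_0^c=0$. Hence the maximum equals $0$, and the reverse inequality $U^c\ge0$ is simply the outside option (IR).

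Two technical points need care. First, the paper defines utility at an endpoint as the limit from inside the section. Here $U^c$ is a finite maximum of affine functions of $\delta$, hence continuous. So the limit from the interior of the conditional section coincides with the value obtained by plugging in $\underline\delta_j(Z_j)$, and the inequality transfers without further argument. Second, the ``almost every $Z_j$'' qualifier in Assumption \ref{YouDontDeserveIt} must match the marginal of $Z_j$ used in $\mathbb E_{Z_j}$. The paper states that almost-everywhere statements about $Z_j$ refer to its marginal distribution, so the exceptional null set does not affect the expectation.

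I expect the main (mild) obstacle to be bookkeeping. Quantities outside $\mathcal N(c)$ are excluded from $M^c$ by construction, so Assumption \ref{YouDontDeserveIt} never needs to hold for them. Integrability is not an issue either, because the integrand is identically zero almost surely. Summing over $j$ with weights $a_j^c$ then gives $\mathcal B^{a^c}(U^c)=0$.
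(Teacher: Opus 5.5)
Your proposal is correct and follows the same approach as the paper. Because $P_n^c=V_n(c)$, Assumption \ref{YouDontDeserveIt} makes every positive item of $M^c$ yield nonpositive payoff on each active conditional lower face. The outside option yields zero there, so $U^c=0$ and every weighted term of $\mathcal B^{a^c}(U^c)$ vanishes. Your remarks on continuity of $U^c$ at the endpoint and on the $Z_j$-null set are harmless refinements of the paper's brief argument.
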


\begin{proof}[Proof of Lemma \ref{YouDontDeserveItImpliesZeroResidual}]
Assumption \ref{YouDontDeserveIt} ensures that every positive item of $M^c$ yields nonpositive utility on each active conditional lower face. The outside option yields zero utility, so $U^c$ becomes zero on these faces. The definition of $\mathcal B^{a^c}$ yields the claim.
\end{proof}

\begin{proof}[Proof of Proposition \ref{ClosedFormExactOptimality}]
Fix any $c\in\mathcal C_\gamma$ satisfying Assumptions \ref{TopRankSignAlignment} and \ref{YouDontDeserveIt}. The prices in \eqref{CriticalQuantityPrices} implement the critical comparison cells. Assumption \ref{TopRankSignAlignment} ensures that a quantity maximizing $V_n(\delta)-V_n(c)$ also maximizes quantity virtual surplus almost surely, while Lemma \ref{YouDontDeserveItImpliesZeroResidual} eliminates the lower-face residual. Thus \eqref{SuoAiGeShanHai} yields $\mathrm{REV}(M^c)=B(a^c)$. Proposition \ref{GapVirtualSurplusUpperBound} and Lemma \ref{RankMenuEquivalence} establish exact optimality. Since every other weight vector also bounds the revenue of $M^c$, we have $a^c\in\mathcal A^*$.
\end{proof}

\section{Primitive conditions and application proofs}

\begin{lemma}[Product-density implication]\label{GapFactorizationSeparability}
Assumption \ref{FlyMeToTheMoon} implies Assumption \ref{MightAsWelllWalkingOnTheSun}. In particular, $R_j(z,Z_j)=R_j(z)=S_j(z)/\psi_j(z)$ for almost every $Z_j$ and every relevant $z$. Moreover, $R_j$ is nonincreasing and $R_j(z)\to0$ as $z\to\overline\delta_j^-$. 
\end{lemma}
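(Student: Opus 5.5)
The argument is a direct computation of the conditional section and its inverse-hazard ratio from the product form \eqref{PRoductDensitty}, followed by the monotonicity and endpoint properties of $S_j/\psi_j$.

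\textbf{Computing the ratio.} Fix $j$ and a conditioning state $Z_j$ for which Assumption \ref{FlyMeToTheMoon}(i) gives the section $[\underline\delta_j(Z_j),\overline\delta_j]$ with the fixed upper endpoint $\overline\delta_j$. On this section,
\[
\gamma(z,Z_j)=C\,\psi_j(z)\prod_{i\neq j}\psi_i(\delta_i).
\]
Because the indicator $\mathbf 1_\Omega$ equals one on the whole section, the factor $C\prod_{i\neq j}\psi_i(\delta_i)$ is a positive constant in $z$ and cancels in \eqref{ConditionalGapInverseHazard}. Hence
\[
R_j(z,Z_j)=\frac{\int_z^{\overline\delta_j}\psi_j(s)\,ds}{\psi_j(z)}=\frac{S_j(z)}{\psi_j(z)},
\]
which does not depend on $Z_j$. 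Positivity and continuity of $\psi_j$, together with positivity of $S_j$ on $[\ell_j,\overline\delta_j)$, make the ratio well defined on the relative interior of every nondegenerate section. The single-interval requirement is exactly part (i), so the conditions of Assumption \ref{MightAsWelllWalkingOnTheSun} are in place.

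\textbf{Monotonicity.} The function $S_j$ is positive, $C^1$ with $S_j'=-\psi_j$, and log-concave. Therefore $(\log S_j)'=-\psi_j/S_j$ is nonincreasing, so the hazard $h_j=\psi_j/S_j$ is nondecreasing. It follows that $R_j=1/h_j$ is nonincreasing and continuous on $[\ell_j,\overline\delta_j)$.

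\textbf{Limit at the upper endpoint (main obstacle).} I need $R_j(z)\to0$ as $z\to\overline\delta_j^-$, which then supplies the continuous extension to $I_j$ by setting $R_j(\overline\delta_j)=0$. The difficulty is that $\psi_j$ may vanish or blow up at $\overline\delta_j$. The reverse-Beta factors with $\alpha<1$ blow up there, so continuity of $\psi_j$ at the endpoint cannot be used.

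I would argue from log-concavity directly. Since $h_j$ is nondecreasing, it has a limit $L\in(0,\infty]$. Suppose $L<\infty$. Then $h_j\le L$ on $[\ell_j,\overline\delta_j)$, so
\[
\log S_j(z)=\log S_j(\ell_j)-\int_{\ell_j}^z h_j\ \ge\ \log S_j(\ell_j)-L(\overline\delta_j-\ell_j),
\]
and $S_j$ stays bounded away from zero. But $S_j(z)\to0$ as $z\to\overline\delta_j$ by integrability of $\psi_j$ (dominated convergence of the tail integral), and $\overline\delta_j<\infty$ by compactness. This is a contradiction, so $L=\infty$ and $R_j=1/h_j\to0$.

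Monotonicity and continuity give boundedness on the compact interval, so the extension is continuous on $I_j=[\ell_j,\overline\delta_j]$. Finally, the "almost every $Z_j$" qualifier follows because part (i) holds for almost every $Z_j$ and the cancellation argument is pointwise in $Z_j$.
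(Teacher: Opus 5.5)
Your proposal is correct and follows essentially the same route as the paper. The product form cancels the off-coordinate factors to give $S_j/\psi_j$, log-concavity of $S_j$ makes the hazard nondecreasing, and integrating the log-derivative bound yields a contradiction showing $R_j\to0$ at $\overline\delta_j$. The only difference is cosmetic: you assume $h_j$ has a finite limit, whereas the paper assumes $R_j=1/h_j$ has a positive limit; both force $S_j$ to stay bounded away from zero near $\overline\delta_j$.
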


\begin{proof}[Proof of Lemma \ref{GapFactorizationSeparability}]
Fix $j$ and condition on $Z_j$. By Assumption \ref{FlyMeToTheMoon},
\begin{equation}\label{RNoInCreaSing}
R_j(z,Z_j)
=
\frac{C\prod_{i\neq j}\psi_i(\delta_i)\int_z^{\overline\delta_j}\psi_j(s)ds}
{C\prod_{i\neq j}\psi_i(\delta_i)\psi_j(z)}
=
\frac{S_j(z)}{\psi_j(z)}
=
\frac1{h_j(z)}.
\end{equation}
Since $S_j$ is positive and continuously differentiable, log-concavity of $S_j$ implies that $(\log S_j)'=-h_j$ is nonincreasing. Hence $h_j$ is nondecreasing and $R_j=1/h_j$ is nonincreasing. Since $R_j$ is nonnegative, its left limit $L_j:=\lim_{z\to\overline\delta_j^-}R_j(z)$ exists and is finite. Suppose $L_j>0$. Then $R_j(z)\geq L_j$ and $\frac{S_j'(z)}{S_j(z)}=-\frac1{R_j(z)}\geq-\frac1{L_j}.$ Integrating from any $z_0<\overline\delta_j$ to $z$ yields $S_j(z)\geq S_j(z_0)\exp\left\{-\frac{z-z_0}{L_j}\right\}.$ Letting $z\uparrow\overline\delta_j$ yields $0=S_j(\overline\delta_j^-)\geq S_j(z_0)\exp\{-(\overline\delta_j-z_0)/L_j\}>0$, a contradiction. Hence $L_j=0$, and $R_j$ extends continuously to $\overline\delta_j$ with value zero. 
\end{proof}

\begin{remark}[Assumption \ref{FlyMeToTheMoon} in the original coordinates]\label{GapFactorizationOriginalCoordinates}
Assumption \ref{FlyMeToTheMoon} is a factorization in rank-gap coordinates, not in the original object coordinates. Under Assumption \ref{ex} and Lemma \ref{ChangeWeNeed}, for almost every $\theta$,
\begin{equation*}
f(\theta)=\frac{C}{K!}\mathbf 1_\Omega\left(\delta(\theta^\downarrow)\right)\prod_{j=1}^K\psi_j\left(\theta_{(j)}-\theta_{(j+1)}\right).
\end{equation*}
This is generally not an object-coordinate product density and does not reduce the type to a ray or a scalar taste. When all density factors are constant, the formula defines a constant density on $\Theta$, so the original object-value vector is jointly uniform on its restricted support. 
\end{remark}

\begin{lemma}[Existence of critical vectors]\label{PrimitiveCriticalVectorexistence}
Under Assumption \ref{FlyMeToTheMoon}, there exists a vector with every component positive in the critical-vector set $\mathcal C_\gamma$. Moreover, for every $i=1,\dots,K$, the critical-vector set $\mathcal C_\gamma$ contains a vector satisfying $a_i^c=1$ and $a_j^c=0$ for every $j\neq i$.
\end{lemma}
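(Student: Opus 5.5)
The plan is to reduce both claims to a one-dimensional intermediate-value argument for each coordinate, using the representation $R_j=S_j/\psi_j$ from Lemma~\ref{GapFactorizationSeparability}. Under Assumption~\ref{FlyMeToTheMoon}, that lemma gives three properties of $R_j$ on $I_j=[\ell_j,\overline\delta_j]$: it is continuous, nonincreasing, positive on $[\ell_j,\overline\delta_j)$, and tends to $0$ at $\overline\delta_j$.

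Fix a weight $a_j>0$ and set $\phi_j^a(z)=z-a_jR_j(z)$. This function is continuous and strictly increasing, since $z$ is strictly increasing and $-a_jR_j$ is nondecreasing. At the upper endpoint, $\phi_j^a(\overline\delta_j^-)=\overline\delta_j>0$, because $\overline\delta_j>\ell_j\geq0$; I will note that $\ell_j\geq0$ holds since gaps are nonnegative in the additive baseline. Hence a zero exists in $\operatorname{int}(I_j)$ as soon as
\[
\phi_j^a(\ell_j)=\ell_j-a_jR_j(\ell_j)<0.
\]
When this holds, strict monotonicity also makes the zero unique.

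For the single-active claim, take $a_i=1$. Part (iii) of Assumption~\ref{FlyMeToTheMoon} gives $\ell_i/R_i(\ell_i)=\ell_i\psi_i(\ell_i)/S_i(\ell_i)<1$, so $\phi_i(\ell_i)<0$. The intermediate value theorem then gives $c_i\in(\ell_i,\overline\delta_i)$. All other coordinates are set to zero, and the weight vector is recovered as $a^c=e_i$-type weights (one at $i$, zero elsewhere).

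For the full-active claim, I will choose weights with $a_j>\ell_j/R_j(\ell_j)$ for every $j$. Write $\epsilon:=1-\sum_j \ell_j\psi_j(\ell_j)/S_j(\ell_j)$, which is positive by part (iii). Then set
\[
a_j:=\frac{\ell_j}{R_j(\ell_j)}+\frac{\epsilon}{K}.
\]
These weights are positive and sum to one, so $a\in\Delta_K$. Each coordinate then has a unique zero $c_j\in(\ell_j,\overline\delta_j)$ with $c_j>\ell_j\geq0$, so every component is positive, and $c\in\mathcal C_\gamma$ by definition.

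The main obstacle is the endpoint at $\ell_j$. There I need $R_j(\ell_j)$ to be finite and positive and equal to the continuous extension, because part (ii) asserts positivity and continuity of $\psi_j$ and $S_j$ only on $[\ell_j,\overline\delta_j)$. That suffices: $S_j(\ell_j)>0$ and $\psi_j(\ell_j)>0$. I will also check that the zero lies in the interior of the projection $I_j$ as defined, and not only in a conditional section. This holds because the construction uses the unconditional projection endpoints, which match the fixed upper endpoints in part (i).
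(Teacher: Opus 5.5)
Your proof is correct. It differs from the paper's in which variable is fixed first.

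The paper works with the implied weight $H_j(z):=z/R_j(z)=z\psi_j(z)/S_j(z)$. For the full-active claim, it places every $c_j$ just above $\ell_j$ so that $\sum_jH_j(c_j)<1$, using continuity at $\ell_j$ and part (iii). It then raises a single coordinate until the sum equals one, using $H_j(z)\to+\infty$ as $z\to\overline\delta_j^-$, and reads off the weights as $a_j=H_j(c_j)$. The single-active claim is the same intermediate-value step applied to $H_i$ alone.

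You instead fix the weights first, spending the slack $\epsilon$ from part (iii) explicitly, and then solve $K$ decoupled scalar equations $\phi_j^a(c_j)=0$. For $z$ in the interval, $\phi_j^a(z)<0$ holds if and only if $H_j(z)<a_j$. So your endpoint condition $a_j>\ell_j/R_j(\ell_j)$ is exactly the paper's $H_j(\ell_j)<a_j$, and the two arguments are inverse parametrizations of the same relation $a_j=c_j/R_j(c_j)$.

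Your version has two advantages. It follows the definition of $\mathcal C_\gamma$ literally, as critical vectors obtained as $a$ ranges over $\Delta_K$. Through strict monotonicity of $\phi_j^a$, it also shows that every $a$ in the nonempty set $\{a\in\Delta_K:a_j>\ell_j/R_j(\ell_j)\text{ for all }j\}$ yields a unique full-active critical vector.

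The paper's version instead shows the freedom in the cutoffs: essentially any choice of $K-1$ cutoffs near their lower endpoints can be completed to a critical vector. It needs only continuity of $R_j$ and the limit $R_j\to0$ at $\overline\delta_j$, which your argument also requires. The uniqueness you add is not needed for the lemma.
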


\begin{proof}[Proof of Lemma \ref{PrimitiveCriticalVectorexistence}]
By Lemma \ref{GapFactorizationSeparability}, let
$H_j(z):=\frac{z}{R_j(z)}=\frac{z\psi_j(z)}{S_j(z)}$.
Each $H_j$ is continuous on $(\ell_j,\overline\delta_j)$, Assumption \ref{FlyMeToTheMoon}-\textup{(iii)} ensures $\sum_jH_j(\ell_j)<1$, and $H_j(z)\to+\infty$ as $z\to\overline\delta_j^-$. Choose each $c_j\in\operatorname{int}(I_j)$ sufficiently close to $\ell_j$ that $\sum_jH_j(c_j)<1$. Fixing all but one coordinate, increase the remaining coordinate toward its upper endpoint until $\sum_jH_j(c_j)=1$; existence follows from continuity and divergence at the upper endpoint. The resulting vector belongs to $\mathcal C_\gamma$ and has every component positive.  Each $c_j$ lies in the interior of its coordinate interval $I_j$, but the vector $c$ need not lie in $\operatorname{int}\Omega$.

For a fixed $i$, $H_i(\ell_i)<1$ and $H_i(z)\to+\infty$ as $z\to\overline\delta_i^-$. Hence there exists $c_i\in(\ell_i,\overline\delta_i)$ such that $H_i(c_i)=1$. Setting $c_j=0$ for every $j\neq i$ yields $a_i^c=1$ and $a_j^c=0$ for every $j\neq i$, so the resulting vector belongs to $\mathcal C_\gamma$.
\end{proof}

\begin{lemma}[Local inverse-hazard alignment]\label{ActiveLocalInverseHazardAlignmentImpliesAlignment}
For a fixed $c\in\mathcal C_\gamma$, Assumption \ref{ActiveLocalInverseHazardAlignment} implies Assumption \ref{TopRankSignAlignment} for the same $c$.
\end{lemma}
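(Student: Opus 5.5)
The plan is to verify Assumption \ref{TopRankSignAlignment} cell by cell. Fix $n\in\mathcal N(c)$ and any $m\in\{0,\dots,K\}$, and split into the cases $n=0$ and $n\geq1$.

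\textbf{Case $n=0$.} Since $\Psi_0^{a^c}=0$, the required inequality $\Psi_0^{a^c}(\delta)\geq\Psi_m^{a^c}(\delta)$ on $\mathscr C_0(c)$ is exactly part (i) of Assumption \ref{ActiveLocalInverseHazardAlignment}. Nothing further is needed here.

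\textbf{Case $n\geq1$.} Invoke the comparison identity \eqref{CriticalComparisonIdentity}. It is valid almost everywhere because Assumption \ref{MightAsWelllWalkingOnTheSun} is maintained and $c_j=a_j^cR_j(c_j)$. Explicitly, one expands
\[
\Psi_n^{a^c}-\Psi_m^{a^c}=\sum_j(\min\{n,j\}-\min\{m,j\})\bigl(\delta_j-a_j^cR_j(\delta_j)\bigr),
\]
then adds and subtracts $c_j=a_j^cR_j(c_j)$ for active $j$, noting $c_j=0$ for inactive $j$.

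The first bracket of the identity equals $[V_n(\delta)-V_n(c)]-[V_m(\delta)-V_m(c)]$. By the definition of $\mathscr C_n(c)$, this is nonnegative for every $\delta\in\mathscr C_n(c)$.

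It remains to sign the weighted rent sum, term by term over active $j$.

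\emph{Terms with $j\leq n$.} Here $\min\{n,j\}-\min\{m,j\}=j-\min\{m,j\}\geq0$. Part (ii) gives $R_j(c_j)-R_j(\delta_j)\geq0$ almost everywhere on $\mathscr C_n(c)$, and $a_j^c>0$. Each such term is therefore nonnegative.

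\emph{Terms with $j>n$.} Here the coefficient $n-\min\{m,j\}$ may have either sign. Part (iii), however, makes $R_j(c_j)-R_j(\delta_j)=0$ almost everywhere on $\mathscr C_n(c)$, so each such term vanishes.

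Collecting the cases, $\Psi_n^{a^c}(\delta)\geq\Psi_m^{a^c}(\delta)$ holds for almost every $\delta\in\mathscr C_n(c)$. Since $m$ ranges over the finitely many quantities $0,\dots,K$, the finitely many exceptional null sets can be unioned into a single null set.

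\textbf{Main obstacle.} The only delicate point is justifying \eqref{CriticalComparisonIdentity} itself, which the paper states without proof. Concretely, this means checking that inactive gaps contribute only $\delta_j=\delta_j-c_j$. I would verify it by the direct expansion above.
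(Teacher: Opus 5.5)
Your proof is correct and follows the paper's argument. Part (i) handles $\mathscr C_0(c)$, and on each positive cell the identity \eqref{CriticalComparisonIdentity}, combined with parts (ii) and (iii), signs the rent sum; the only cosmetic difference is that you sort the rent terms by $j\leq n$ versus $j>n$ (using $j-\min\{m,j\}\geq0$ for every $m$), whereas the paper splits on $m<n$ versus $m>n$.
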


\begin{proof}[Proof of Lemma \ref{ActiveLocalInverseHazardAlignmentImpliesAlignment}]
Fix $c\in\mathcal C_\gamma$ satisfying Assumption \ref{ActiveLocalInverseHazardAlignment}. Part \textup{(i)} of Assumption \ref{ActiveLocalInverseHazardAlignment} establishes virtual-surplus maximization on $\mathscr C_0(c)$. Fix $n\in\mathcal N(c)\setminus\{0\}$ and almost every $\delta\in\mathscr C_n(c)$, and apply \eqref{CriticalComparisonIdentity}. If $m<n$, each coefficient in the final sum is nonnegative. Part \textup{(ii)} ensures nonnegative inverse-hazard differences for active $j\leq n$, while part \textup{(iii)} sets these differences to zero for active $j>n$. If $m>n$, the coefficients become zero for $j\leq n$, and part \textup{(iii)} eliminates all remaining terms. In either case, the final sum is nonnegative, and $\delta\in\mathscr C_n(c)$ implies $V_n(\delta)-V_n(c)\geq V_m(\delta)-V_m(c)$. Thus $\Psi_n^{a^c}(\delta)\geq\Psi_m^{a^c}(\delta)$ for every $m$, establishing Assumption \ref{TopRankSignAlignment}.
\end{proof}

\subsection*{Verification of the distribution families}

\begin{proof}[Verification of Distribution Family \ref{GaussianGapDensityFamily} and proof of Proposition \ref{GaussianSparseExactness}]
For each density factor,
$$
\frac{S_j(z)}{\psi_j(z)}
=\int_0^{\overline\delta_j-z}
\frac{\psi_j(z+t)}{\psi_j(z)}dt
\leq\overline\delta_j-z.
$$
The inequality follows from monotonicity of $\psi_j$. Log-concavity of $\psi_j$ implies that the integrand is nonincreasing in $z$, while the integration interval strictly shortens. Thus $S_j/\psi_j$ is continuous and strictly decreasing, and becomes zero at its upper endpoint. Consequently, $\upsilon=R_K(\upsilon)$ has a unique solution in $(0,\overline\delta_K)$.

The parameter bounds imply $0\leq\ell_K<\upsilon$. The support is coordinatewise increasing within its ambient box, has nonempty path-connected interior, and is the closure of its interior. Maximizing over the other coordinates identifies $I_K=[\ell_K,\overline\delta_K]$. Every nonempty conditional section is a single interval with upper endpoint $\overline\delta_j$. The product-density calculation in \eqref{RNoInCreaSing} therefore yields $R_j(z,Z_j)=S_j(z)/\psi_j(z)$ on the relevant sections, establishing Assumption \ref{MightAsWelllWalkingOnTheSun}. Parts \textup{(i)} and \textup{(ii)} of Assumption \ref{FlyMeToTheMoon} also hold; part \textup{(iii)} need not hold for the general parameter specification.

The payoff difference between quantities $K$ and $K-1$ is $\delta_K-\upsilon$. If $\delta_K\leq\upsilon$, support membership implies
$$
(\overline\delta_1-\delta_1)
+\sum_{j=2}^{K-1}j(\overline\delta_j-\delta_j)
\leq(K-1)(\delta_K-\ell_K).
$$
The left side is at least $(K-1)(\overline\delta_{K-1}-\delta_{K-1})$, including when $K=2$. Hence
$$
\delta_{K-1}+\delta_K
\geq\overline\delta_{K-1}+\ell_K
\geq R_K(\ell_K)
\geq R_K(\delta_K).
$$
Since $a_K=1$ and the other weights are zero, this implies $\Phi_{(K-1)}^a\geq0$ and $\Phi_{(r)}^a\geq\Phi_{(K-1)}^a$ for every $r<K-1$. Quantity $K-1$ therefore dominates every smaller quantity in virtual surplus, including quantity zero. If $\delta_K\geq\upsilon$, then $\phi_K^a(\delta_K)\geq0$ and every rank virtual surplus is nonnegative, so quantity $K$ maximizes virtual surplus. Moreover, $\Psi_K^a-\Psi_{K-1}^a=\phi_K^a(\delta_K)$ has the same sign as $\delta_K-\upsilon$. This establishes Condition \ref{FiniteVirtualSurplusImplementation}.

The identity $p_{K-1}^a=\overline\delta_1+\sum_{j=2}^{K-1}j\overline\delta_j+(K-1)\ell_K$ shows that both positive-quantity payoffs are nonpositive on the ambient face $\delta_K=0$. At a positive conditional lower endpoint, the defining support inequality binds by continuity; at a zero endpoint, the same equality follows from this bound and support membership. Thus $U^a(\underline\delta_K(Z_K),Z_K)=0$ for almost every $Z_K$. Since only the terminal weight is positive, Condition \ref{ZeroBoundaryResidual} holds. Theorem \ref{EndogenousGapFlowGap} establishes exact optimality and $a\in\mathcal A^*$.

Choosing the first $K-1$ gaps sufficiently close to their upper endpoints and $\delta_K$ slightly below or above $\upsilon$ yields open sets selecting $K-1$ or $K$, respectively, with positive utility. Both quantities therefore have positive selection probability. The zero-utility boundary has measure zero, so quantity zero has selection probability zero.

For necessity of \eqref{GaussianLastGapBound}, consider $(\overline\delta_1,\dots,\overline\delta_{K-1},\ell_K)$. Quantity $K-1$ has zero payoff and strictly exceeds quantity $K$ there. If the condition fails, then $\Phi_{(K-1)}^a=\overline\delta_{K-1}+\ell_K-R_K(\ell_K)<0$. Increasing the terminal gap slightly and decreasing the other gaps by sufficiently small amounts yields an open set selecting $K-1$ with positive utility but satisfying $\Psi_{K-2}^a>\Psi_{K-1}^a$. This argument also applies to $K=2$, where $\Psi_{K-2}^a=\Psi_0^a=0$.

To compare this menu with the critical-price construction, set $c=(0,\dots,0,\upsilon)$, so $a^c=a$. When $\delta_K\leq\upsilon$, the preceding inequalities imply $\theta_{(K-1)}\geq R_K(\ell_K)>\upsilon$, so every critical rank margin through $K-1$ is positive. When $\delta_K\geq\upsilon$, every critical rank margin is nonnegative. Hence $\mathcal N(c)=\{K-1,K\}$, and the critical menu induces the same quantity choices as $M^a$, establishing Assumption \ref{TopRankSignAlignment}. Since $p_{K-1}^a\geq(K-1)R_K(\ell_K)>(K-1)\upsilon$, the common fee $\sum_{j=2}^{K-1}j\overline\delta_j+\varpi$ is positive. Thus $U^c=U^a+\sum_{j=2}^{K-1}j\overline\delta_j+\varpi$ on $\Omega$, and Assumption \ref{YouDontDeserveIt} fails at $c$. The positive-probability cell for $K-1$ contains an open set with $\delta_K<\upsilon$, where $R_K(\delta_K)>R_K(\upsilon)$, violating part \textup{(iii)} of Assumption \ref{ActiveLocalInverseHazardAlignment}.

Finally, fix any $c\in\mathcal C_\gamma$. For $j<K$, $c_j\leq a_j^c\overline\delta_j$, strictly whenever $a_j^c>0$. Likewise, $c_K\leq a_K^cR_K(\ell_K)$, strictly whenever $a_K^c>0$, because every positive $c_K$ lies above $\ell_K$ and $R_K$ is strictly decreasing. Since the weights sum to one,
$$V_{K-1}(c)<(1-a_K^c)\left(\overline\delta_1+\sum_{j=2}^{K-1}j\overline\delta_j\right)+(K-1)a_K^cR_K(\ell_K)\leq p_{K-1}^a.$$
The last inequality follows from \eqref{GaussianLastGapBound} and $p_{K-1}^a=\overline\delta_1+\sum_{j=2}^{K-1}j\overline\delta_j+(K-1)\ell_K$. The critical equation also implies $c_K\leq\upsilon$, so $V_K(c)<p_K^a$.
Consequently, $U^c(\delta)\geq U^a(\delta)+\min_{n\in\{K-1,K\}}\{p_n^a-V_n(c)\}>0$ on $\Omega$. Discarding quantities with zero-probability critical cells leaves this maximum unchanged, by continuity and $\Omega=\operatorname{cl}(\operatorname{int}\Omega)$. Since $U^a\geq0$, continuity at the conditional endpoints implies $\mathcal B^{a^c}(U^c)\geq\min_{n\in\{K-1,K\}}\{p_n^a-V_n(c)\}>0$. Thus Assumption \ref{YouDontDeserveIt} fails for every critical vector.
\end{proof}

\begin{proof}[Verification of Example \ref{GaussianCommonUpperSpecification}]
Since $R_K(z)\leq\overline\delta-z$, the cutoff satisfies $0<\upsilon\leq\overline\delta/2$. The specified parameters imply $\varpi=\overline\delta_1-(K-1)\upsilon>0$, $\varpi<\overline\delta_1$, and $\ell_K=0$. Moreover, $\overline\delta_{K-1}\geq\overline\delta\geq R_K(0)$, so \eqref{GaussianLastGapBound} holds.

Setting the first gap to zero and all other gaps to their upper endpoints results in zero payoff from quantity $K$. Setting the terminal gap to zero and all other gaps to their upper endpoints results in zero payoff from quantity $K-1$. For an intermediate gap $j$, setting $\delta_j=0$ and all other gaps to their upper endpoints implies $V_K-p_K^a=(K-j)\overline\delta-\upsilon>0$. Thus every actual projection is $I_j=[0,\overline\delta_j]$. Assumption \ref{FlyMeToTheMoon} holds, including part \textup{(iii)}, whose left side is zero. Substitution into the price formula yields the stated menu. The critical-menu comparison and boundary-rent statements follow from the preceding proof.
\end{proof}

\begin{proof}[Verification of Distribution Family \ref{SingleActiveGapDensityFamily}]
The rectangular support and the stated density factors satisfy parts \textup{(i)} and \textup{(ii)} of Assumption \ref{FlyMeToTheMoon}; part \textup{(iii)} holds because all lower endpoints are zero. Since $S_{\iota}(0)=\tau\psi_{\iota}(0)$, integration of \eqref{SingleActiveDensityInitialSegment} yields $S_{\iota}(z)=\tau\psi_{\iota}(0)e^{-z/\tau}$ and $R_{\iota}(z)=\tau$ for $0\leq z\leq\tau$.
Log-concavity of $S_{\iota}$ implies that $R_{\iota}$ is nonincreasing, so $R_{\iota}(z)\leq\tau$ for $z\geq\tau$. Thus the vector in \eqref{SingleActiveCriticalCoordinates} belongs to $\mathcal C_\gamma$ and has the stated weights. Its quantity prices satisfy $V_n(c)=\min\{n,\iota\}\tau$. Consequently, quantities $\iota,\dots,K-1$, when present, have the same price as the full bundle and are strictly dominated almost surely.

Since $R_\iota(z)\leq R_\iota(\tau)=\tau$ throughout the support, part \textup{(ii)} of Assumption \ref{ActiveLocalInverseHazardAlignment} holds. If $\delta_\iota>\tau$, the buyer strictly prefers quantity $K$ to every quantity below $\iota$. Hence nonpurchase and support-relevant quantities $n<\iota$ occur only when $\delta_\iota\leq\tau$, where $R_\iota(\delta_\iota)=\tau$ and $\Psi_n^{a^c}(\delta)=V_n(\delta)-V_n(c)$ for every $n$. This establishes parts \textup{(i)} and \textup{(iii)}.

On the active lower face $\delta_\iota=0$, \eqref{SingleActiveDensitySupportBounds} implies $V_n(0,Z_\iota)\leq n\sum_{j\ne\iota}\overline\delta_j\leq n\tau$ for $n\leq\iota$ and $V_n(0,Z_\iota)\leq\sum_{j\ne\iota}j\overline\delta_j\leq\iota\tau$ for $n\geq\iota$. Thus $V_n(0,Z_\iota)\leq V_n(c)$ for every $n$, establishing Assumption \ref{YouDontDeserveIt}.

Types sufficiently close to the origin choose quantity zero, and interior types with $\delta_{\iota}>\tau$ choose quantity $K$. For $1\leq n<\iota$, set $\delta_{\iota}=\tau-\varepsilon$, $\delta_n=2\varepsilon$, and every other gap equal to $\varepsilon^2$. For sufficiently small $\varepsilon>0$, the first $n$ real marginal payoffs are positive, those from ranks $n+1$ through $\iota$ are negative of order $\varepsilon$, and the total value of ranks $\iota+1,\dots,K$ is of order $\varepsilon^2$. Quantity $n$ is therefore the unique maximizer. These are interior types, and strict choice persists on a neighborhood of each type. Positivity of the density proves $\mathcal N(c)=\{0,1,\dots,\iota-1,K\}$, with every listed quantity selected with positive probability.

 The normalized active factor specified in Proposition \ref{RectangularSparseExactness}, with $\xi=\tau^{-1}$ and its second coordinate relabeled as $\iota$, provides such a factor for every $\overline\delta_{\iota}>\tau$. The inactive factors may be uniform on intervals satisfying \eqref{SingleActiveDensitySupportBounds}.
\end{proof}

\begin{proof}[Proof of Lemma \ref{UnboundedSingleActiveExactness}]
The product density has $R_{\iota}(z)=\tau$ on $[0,\infty)$. For $j\neq\iota$, the factorization and log-concavity argument in Lemma \ref{GapFactorizationSeparability} yields $0\leq R_j(z)\leq R_j(0)<\infty$. Moreover, $\mathbb E_\gamma[V_K(\delta)]=\iota\tau+\sum_{j\neq\iota}j\mathbb E_\gamma[\delta_j]\leq\iota\tau+\sum_{j\neq\iota}j\overline\delta_j<\infty$.
For every feasible top-rank menu, $0\leq U,p_M\leq V_K$, so utility and payments are integrable. By Fubini's theorem, for almost every $Z_\iota$, \eqref{Envelope} holds for almost every $z$. Absolute continuity and integration by parts on $[0,t]$ yield
$$\int_0^t U(z,Z_\iota)\frac{e^{-z/\tau}}{\tau}\,dz=U(0,Z_\iota)-U(t,Z_\iota)e^{-t/\tau}+\int_0^t Q_\iota(z,Z_\iota)e^{-z/\tau}\,dz.$$
The boundary term converges to zero because $0\leq U(t,Z_\iota)e^{-t/\tau}\leq(\iota t+\sum_{j\neq\iota}j\overline\delta_j)e^{-t/\tau}\to0$. Both integrals have finite limits, since $U(z,Z_\iota)\leq\iota z+\sum_{j\neq\iota}j\overline\delta_j$ and $0\leq Q_\iota\leq\iota$. Monotone convergence therefore yields \eqref{StateDependentConditionalRentIdentity} for the active gap.

For $j\neq\iota$, finite-endpoint integration by parts applies, with $0\leq Q_jR_j\leq jR_j(0)$; also $0\leq Q_\iota R_\iota\leq\iota\tau$. Lower-face utilities satisfy $0\leq U(0,Z_j)\leq\sum_{i\neq j}i\delta_i$ and have finite expectations. Integration over the conditioning states therefore establishes \eqref{ConditionalGapRentIdentity} for every $j$. Averaging yields the revenue decomposition and upper bound for all $a\in\Delta_K$, and Lemma \ref{ConvexityOfB} applies.

The rank-allocation reduction also extends to this support. Payments satisfy $0\leq p(\theta)\leq\sum_k\theta_k$ and are integrable. Averaging over the finitely many object-label permutations preserves IC, IR, and nonnegative payments, and preserves expected revenue by exchangeability. IC against coordinate transpositions ensures rank preservation. Sorting is Borel measurable, so the remaining arguments in Lemmas \ref{QRankEquivalence} and \ref{RankMenuEquivalence} apply. Thus every $a\in\Delta_K$ bounds revenue over general direct mechanisms.

For the vector in \eqref{SingleActiveCriticalCoordinates}, $\Psi_n^{a^c}(\delta)=V_n(\delta)-\min\{n,\iota\}\tau$ throughout the support. The prices in \eqref{SingleActiveQuantityPrices} therefore implement pointwise virtual-surplus maximization. Quantities $\iota,\dots,K-1$, when present, are dominated by the full bundle at the same price. The inactive support bounds yield $V_n(0,Z_{\iota})\leq\min\{n,\iota\}\tau$ for every $n$, so the candidate utility becomes zero on the only active lower face. The revenue identity therefore yields $\mathrm{REV}(M^c)=B(a^c)$, proving exact optimality. For every $a\in\Delta_K$, we have $B(a)\geq\mathrm{REV}(M^c)=B(a^c)$. Thus $a^c\in\mathcal A^*$. The interior types identified in the bounded-family verification remain available, proving the stated support-relevant quantity set and positive-probability selection.
\end{proof}

\begin{proof}[Verification of Distribution Family \ref{FullActiveGapDensityFamily}]
The rectangular support and the stated factors satisfy Assumption \ref{FlyMeToTheMoon}: the upper masses are log-concave and every lower endpoint is zero. Direct integration yields $R_j(z)=(\overline\delta-z)/\alpha$. The coordinates and weights stated in the distribution family satisfy \eqref{ZeroCrossingVS}, and
\begin{equation}\label{FullActiveVirtualScaling}
\phi_j^{a^c}(z)=\frac{1+K\alpha}{K\alpha}(z-c_j),\;\;j=1,\dots,K.
\end{equation}
The quantity sums in \eqref{QuantityVirtualSurplus} then establish Assumption \ref{TopRankSignAlignment}. On a lower face, monotonicity yields $V_n(0,Z_j)\leq\overline\delta\{\sum_{i=1}^K\min\{n,i\}-\min\{n,j\}\}$. For $n\geq1$, the shape restriction yields
$$K\alpha\left\{\sum_{i=1}^K\min\{n,i\}-\min\{n,j\}\right\}\leq1\leq\min\{n,j\},$$
which yields $V_n(0,Z_j)\leq V_n(c)$ and proves Assumption \ref{YouDontDeserveIt}. Since $c$ is interior, small variations of the adjacent real critical margins ensure that each quantity uniquely maximizes payoff on an open set. Hence $\mathcal N(c)=\{0,\dots,K\}$.
\end{proof}

\begin{proof}[Verification of Distribution Family \ref{TwoActiveGapDensityFamily}]
The endpoint matching equations identify the two positive coordinates in \eqref{TwoActiveCriticalCoordinates} as $c_1=(K\beta_1-\beta_K)/(K-1)$ and $c_K=(\beta_K-\beta_1)/(K-1)$.
From the definitions of $\beta_1,\beta_K$, the two inequalities in \eqref{TwoActiveDensitySupportBounds} imply $\ell_1<c_1<\overline\delta_1$ and
$$\ell_K+\sum_{j=2}^{K-1}\overline\delta_j<c_K<\overline\delta_K.$$
In particular, the initial segments in \eqref{TwoActiveDensityInitialSegments} lie within the density-factor intervals. Integrating \eqref{TwoActiveDensityInitialSegments} yields $R_i(z)=R_i(\ell_i)$ for $\ell_i\leq z\leq c_i$, $i\in\{1,K\}$. Hence $a_iR_i(c_i)=c_i$. Log-concavity implies that each active rank-gap virtual surplus is strictly increasing, so these are its unique zero crossings.

The expressions for $c_1$ and $c_K$ above imply that the threshold for quantity $n$ in \eqref{TwoActivePrimitiveSupport} is $V_n(c)=c_1+nc_K$. The support is coordinatewise increasing within its rectangle, its upper corner satisfies the defining inequalities strictly, and every nonempty conditional section ends at $\overline\delta_j$. The points with $\delta_1=\ell_1$ and all other coordinates at their upper endpoints, and with $\delta_K=\ell_K$ and all other coordinates at their upper endpoints, belong to $\Omega$. The support bounds also imply $\overline\delta_K-\ell_K>\sum_{j=2}^{K-1}\overline\delta_j$, so setting all intermediate coordinates to zero and the first and terminal coordinates to their upper endpoints yields a feasible point. Thus the stated intervals are the coordinate projections. The support has path-connected nonempty interior and is the closure of its interior. The product form yields \eqref{RNoInCreaSing}, and
$$\sum_{j=1}^K\frac{\ell_j}{R_j(\ell_j)}=\frac{\ell_1}{R_1(\ell_1)}+\frac{\ell_K}{R_K(\ell_K)}<a_1+a_K=1.$$
Assumption \ref{FlyMeToTheMoon} follows.

The exclusion cell has zero probability, since it is contained in the finite union of the hyperplanes $V_n(\delta)=V_n(c)$. Each active $R_i$ is constant up to $c_i$ and nonincreasing thereafter, establishing part \textup{(ii)} of Assumption \ref{ActiveLocalInverseHazardAlignment}. If $\delta_K>c_K$, then for every $1\leq n<K$,
$$\{V_K(\delta)-V_K(c)\}-\{V_n(\delta)-V_n(c)\}=\sum_{j=n+1}^{K-1}(j-n)\delta_j+(K-n)(\delta_K-c_K)>0.$$
Consequently, every positive critical type cell with $n<K$ lies in $\delta_K\leq c_K$, where $R_K$ is constant. This proves part \textup{(iii)} and hence local alignment.

On the ambient face $\delta_1=\ell_1$, quantity $K$ has zero net value at the upper corner, and every other positive quantity has nonpositive net value. On the ambient face $\delta_K=\ell_K$, quantity $1$ has zero net value at the upper corner; the displayed bound on $c_K$ ensures negative subsequent marginal net values. Monotonicity extends these inequalities over the respective faces. At an actual conditional lower endpoint above its ambient lower bound, the defining support condition binds by continuity. Thus all positive quantities have nonpositive critical net value on both active conditional lower faces, proving Assumption \ref{YouDontDeserveIt}.

For sufficiently small $\varepsilon>0$, quantity $1$ is uniquely selected near $(c_1+2\varepsilon,0,\dots,0,c_K-\varepsilon)$. Each $2\leq n<K$ is uniquely selected near the vector with $\delta_1=c_1+\varepsilon$, $\delta_n=2\varepsilon$, $\delta_K=c_K-\varepsilon$, and the other intermediate coordinates zero. Quantity $K$ is uniquely selected near $(c_1+\varepsilon,0,\dots,0,c_K+\varepsilon)$. We can increase the zero coordinates slightly to obtain interior types. Net value remains positive, and the same quantity is still strictly preferred. Hence, $\mathcal N(c)=\{1,\dots,K\}$.
\end{proof}

\subsection*{Verification of the application results}

\begin{proof}[Proof of Proposition \ref{RectangularSparseExactness}]
The stated density factors are positive, continuous, and integrable on their respective half-open intervals. For every $j\neq2$, uniformity yields $R_j(z)=\overline\delta_j-z$. Direct differentiation of $S_2$ yields
$$R_2(z)=\begin{cases}
\xi^{-1},&0\leq z\leq\xi^{-1},\\  
\dfrac{\overline\delta_2-z}{\xi(\overline\delta_2-\xi^{-1})},&\xi^{-1}<z<\overline\delta_2.
\end{cases}$$
These inverse-hazard ratios are nonincreasing, so all upper masses are log-concave. Moreover, $S_2(0)/\psi_2(0)=\xi^{-1}$ and $\psi_2(z)=\xi e^{-\xi z}$ for $0\leq z\leq\xi^{-1}$. Thus the distribution satisfies the factor restrictions of Distribution Family \ref{SingleActiveGapDensityFamily} with $\iota=2$ and $\tau=\xi^{-1}$, while \eqref{RectangularSparseBounds} yields its support restrictions.  Proposition \ref{ClosedFormExactOptimality} therefore establishes exact optimality of the stated menu.
\end{proof}

\begin{proof}[Proof of Proposition \ref{ActiveCommonCriticalScaling}]
Fix $n\in\mathcal N(c)$ and $\delta\in\mathscr C_n(c)$. For every $m\geq\underline n$, the terms with $j\leq\underline n$ cancel because $\min\{n,j\}=\min\{m,j\}=j$. Applying the assumed proportionality to the remaining terms yields
\begin{equation*}
\Psi_n^{a^c}(\delta)-\Psi_m^{a^c}(\delta)
=
\kappa\left(
\{V_n(\delta)-V_n(c)\}
-
\{V_m(\delta)-V_m(c)\}
\right)
\geq0.
\end{equation*}
If $m<\underline n$, applying this identity to $n$ and $\underline n$ together with the assumed virtual dominance yields $\Psi_n^{a^c}(\delta)\geq\Psi_{\underline n}^{a^c}(\delta)\geq\Psi_m^{a^c}(\delta)$. Thus Assumption \ref{TopRankSignAlignment} holds for this $c$. If Assumption \ref{YouDontDeserveIt} also holds for the same $c$, Proposition \ref{ClosedFormExactOptimality} establishes exact optimality.
If $\underline n=0$ and $c\in\operatorname{int}\Omega$, the triangular map from $\delta-c$ to the adjacent real critical margins $\left(\sum_{j=k}^K(\delta_j-c_j)\right)_{k=1}^K$ is invertible. Hence, near $c$, one can choose the adjacent margins positive for $k\leq n$ and negative for $k>n$, so each quantity $n=0,\dots,K$ uniquely maximizes payoff on an open set. Since $\gamma$ is positive almost everywhere on $\operatorname{int}\Omega$, every such open set has positive probability. Therefore every quantity is support relevant.
\end{proof}

\noindent\textbf{A two-part-tariff extension.}
The following extension bounds first-gap rent by the fixed fee without requiring the first density factor to have an exponential initial segment. 

\begin{proposition}[Local inverse-hazard alignment with a menu-aligned support]\label{LocalMarginalMatchingExactness}
Let $K\geq2$ and fix $\overline\delta_j>0$ for $j=1,\dots,K$. For $i\in\{1,K\}$, let $\psi_i$ be a positive, continuously differentiable, integrable density factor on $[0,\overline\delta_i)$ with nondecreasing hazard $h_i$. Suppose $c_K\in(0,\overline\delta_K)$ is the right endpoint of the maximal initial interval on which $h_K$ is constant, so that $h_K(z)=h_K(c_K)$ for $0\leq z\leq c_K$ and $h_K(z)>h_K(c_K)$ for $c_K<z<\overline\delta_K$. Set $a_K=c_Kh_K(c_K)$, $a_1=1-a_K$, and $a_j=0$ for $2\leq j<K$. Suppose $0<a_K<1$, and let $T\in(0,\overline\delta_1)$ satisfy $a_1/h_1(0)\leq T$ and
\begin{equation}\label{ParametersinTwoactives}
\overline\delta_1-T+\sum_{j=2}^{K-1}\overline\delta_j\leq c_K,\;\;\text{and}\;\;\sum_{j=2}^{K-1}j\overline\delta_j+K(\overline\delta_K-c_K)\leq T.
\end{equation}
For $2\leq j<K$, let $\psi_j$ be a positive constant density factor on $[0,\overline\delta_j]$. Let $\gamma$ have the form \eqref{PRoductDensitty} on
$$\Omega=\left\{\delta\in\prod_{j=1}^K[0,\overline\delta_j]:\max_{n=1,\dots,K}\{V_n(\delta)-T-nc_K\}\geq0\right\},$$
and let the object-value density be its exchangeable symmetrization. Then Assumption \ref{MightAsWelllWalkingOnTheSun} and Conditions \ref{FiniteVirtualSurplusImplementation} and \ref{ZeroBoundaryResidual} hold for $a$ and the prices $p_0^a=0$ and $p_n^a=T+nc_K$, $n=1,\dots,K$.
Every positive quantity is selected with positive probability. The resulting menu $M^a$ is optimal among all feasible general \textit{direct} mechanisms, and $a\in\mathcal A^*$.
\end{proposition}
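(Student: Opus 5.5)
The proof applies Theorem \ref{EndogenousGapFlowGap} directly. We verify Assumption \ref{MightAsWelllWalkingOnTheSun}, then Condition \ref{FiniteVirtualSurplusImplementation} with the stated prices, and then Condition \ref{ZeroBoundaryResidual}. Optimality, $\mathrm{REV}(M^a)=B(a)$, and $a\in\mathcal A^*$ then follow from the theorem.

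\textbf{Step 1: gap rent regularity.} The support $\Omega$ is a union of upper sets $\{V_n(\delta)\geq T+nc_K\}$ intersected with a box. Each $V_n$ is nondecreasing in every coordinate, so $\Omega$ is coordinatewise increasing within its box. Hence every nonempty conditional section is an interval ending at $\overline\delta_j$. The product calculation \eqref{RNoInCreaSing} then gives $R_j(z,Z_j)=S_j(z)/\psi_j(z)=1/h_j(z)$. This ratio is continuous, nonincreasing (since $h_j$ is nondecreasing), and vanishes at $\overline\delta_j$ by the argument of Lemma \ref{GapFactorizationSeparability}. For the active terminal gap, $R_K\equiv 1/h_K(c_K)$ on $[0,c_K]$ and $R_K<1/h_K(c_K)$ beyond. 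Therefore $\phi_K^a(z)=z-a_K/h_K(c_K)=z-c_K$ on $[0,c_K]$, and $\phi_K^a(z)>z-c_K$ for $z>c_K$. For the first gap, $a_1R_1(z)\leq a_1/h_1(0)\leq T$.

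\textbf{Step 2: implementation.} At prices $T+nc_K$, the payoff from quantity $n\geq1$ is
\[
V_n(\delta)-T-nc_K=(\delta_1-T)+\sum_{j\geq2}\min\{n,j\}\delta_j-nc_K .
\]
On $\Omega$, some positive quantity has nonnegative payoff, so nonpurchase is selected only on a null set. To compare positive quantities $n<m$, note that the payoff difference equals $\sum_{j>n}(\min\{m,j\}-n)\delta_j-(m-n)c_K$. The corresponding virtual-surplus difference is the same expression with $\delta_K$ replaced by $\phi_K^a(\delta_K)$, because the first-gap term cancels for $n,m\geq1$.

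Two cases follow. If $\delta_K\leq c_K$, then $\phi_K^a(\delta_K)=\delta_K-c_K$, and the two differences coincide (after absorbing $c_K$). If $\delta_K>c_K$, the payoff argument already shows that quantity $K$ strictly beats every smaller positive quantity, and $\phi_K^a(\delta_K)>0$ gives the same conclusion for virtual surplus.

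It remains to compare with quantity $0$, which requires $\Psi_n^a\geq0$ for the selected $n$. This holds because $\Psi_n^a(\delta)\geq V_n(\delta)-T-n c_K$ when $\delta_K\leq c_K$, using $a_1R_1\leq T$. When $\delta_K>c_K$, we have $\Psi_K^a\geq\delta_1-T+\sum_{2\le j<K} j\delta_j+K\phi_K^a(\delta_K)>$ the payoff minus $T$ shifted, which is nonnegative. I would write this as an explicit virtual-surplus-versus-payoff inequality: since $\Psi_n^a-\Psi_m^a$ equals the corresponding payoff difference whenever $\delta_K\leq c_K$, the selected cells are the payoff cells. The resulting bounds $b_{nm}^a$ then satisfy the price inequalities. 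This is the revealed-preference check of Remark \ref{FiniteRPRemark}, with nonnegativity of prices immediate.

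\textbf{Step 3: zero lower-face residual.} The active faces are those of gaps $1$ and $K$. At the conditional lower endpoint of $\delta_1$ or $\delta_K$, one of two things happens. If the endpoint is interior to the ambient interval, the support inequality binds by continuity, so $\max_n\{V_n-T-nc_K\}=0$. If the endpoint is the ambient face $0$, the inequalities \eqref{ParametersinTwoactives} give nonpositive payoffs. Specifically, at $\delta_1=0$ we bound $V_n-T-nc_K$ using the second inequality (the largest case is $n=K$ with the other gaps at their upper bounds). At $\delta_K=0$ we use the first inequality, which keeps quantity $1$ at or below zero and makes subsequent marginal net values $\sum_{j\geq n}\delta_j-c_K\leq0$. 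Thus $U^a=0$ there, and $\mathcal B^a(U^a)=0$.

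\textbf{Step 4: positive selection probabilities.} We construct interior types near $(\overline\delta_1,\varepsilon,\dots,c_K\mp\varepsilon)$, in the manner of the Family \ref{TwoActiveGapDensityFamily} verification.

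The main obstacle is Step 2. The delicate part is confirming that the virtual-surplus comparison with quantity $0$ and the region $\delta_K>c_K$ never selects a quantity the menu fails to choose. I would handle it by proving the pointwise inequality $\Psi_n^a(\delta)-\Psi_m^a(\delta)\geq (V_n-p_n^a)-(V_m-p_m^a)$ for the payoff-maximizing $n$, splitting on the sign of $\delta_K-c_K$.
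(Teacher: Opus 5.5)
Your proposal is correct and follows essentially the paper's proof. It verifies Assumption \ref{MightAsWelllWalkingOnTheSun} from the coordinatewise-increasing support and the product factorization, and uses the identity $\Psi_n^a=(V_n-p_n^a)+\{T-a_1R_1(\delta_1)\}+n\{c_K-a_KR_K(\delta_K)\}$ for $n\geq1$, with both braces nonnegative; this identity also gives the clean version of your garbled comparison with quantity $0$ when $\delta_K>c_K$. As in the paper, you split on $\delta_K\leq c_K$ versus $\delta_K>c_K$, obtain zero lower-face rent from \eqref{ParametersinTwoactives} together with the binding support inequality, and then invoke Theorem \ref{EndogenousGapFlowGap}.

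One detail to fix in Step 4: the witness points should use an interior $\delta_1\in(T,\overline\delta_1)$ and zero intermediate gaps, except $\delta_n=2\varepsilon$ when selecting $2\leq n<K$, and then perturb slightly into the interior. Putting $\varepsilon$ in every intermediate coordinate gives $\theta_{(2)}=c_K+(K-3)\varepsilon\geq c_K$, which destroys strict selection of quantity $1$ when $K\geq3$.
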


\begin{proof}[Proof of Proposition \ref{LocalMarginalMatchingExactness}]
The support is coordinatewise increasing within the ambient rectangle, so every nonempty conditional section has the fixed upper endpoint $\overline\delta_j$. The factorization argument in \eqref{RNoInCreaSing} yields $R_i=1/h_i$ for $i\in\{1,K\}$ and $R_j(z)=\overline\delta_j-z$ for the intermediate coordinates. The finite upper endpoints and the stated density conditions yield bounded continuous extensions, establishing Assumption \ref{MightAsWelllWalkingOnTheSun}.

The first-direction condition implies $a_1R_1(z)\leq T$. Also, $a_KR_K(z)=c_K$ for $z\leq c_K$ and $a_KR_K(z)<c_K$ otherwise. Hence, for $n\geq1$,
$$\Psi_n^a(\delta)=\{V_n(\delta)-p_n^a\}+\{T-a_1R_1(\delta_1)\}+n\{c_K-a_KR_K(\delta_K)\}.$$
If $\delta_K\leq c_K$, the last term becomes zero and the middle term is nonnegative and common to all positive quantities. The support ensures a nonnegative maximum positive-quantity payoff, so every payoff-maximizing positive quantity also maximizes virtual surplus. If $\delta_K>c_K$, quantity $K$ strictly dominates every smaller positive quantity in both comparisons, and its virtual surplus is strictly positive. Thus the stated prices implement the virtual-surplus maximizers.

For sufficiently small $\varepsilon>0$, quantity $1$ is uniquely selected near $(T+2\varepsilon,0,\dots,0,c_K-\varepsilon)$. Each $2\leq n<K$ is uniquely selected near the vector with $\delta_1=T+\varepsilon$, $\delta_n=2\varepsilon$, $\delta_K=c_K-\varepsilon$, and all other intermediate coordinates zero. Quantity $K$ is uniquely selected near $(T+\varepsilon,0,\dots,0,c_K+\varepsilon)$. Small positive perturbations of the zero coordinates preserve these strict comparisons and yield interior types with positive payoffs. Hence every positive quantity has positive probability.

By \eqref{ParametersinTwoactives}, all positive-quantity payoffs are nonpositive on each of the ambient faces $\delta_1=0$ and $\delta_K=0$. If an actual conditional lower endpoint is positive, continuity and the defining support inequality imply that the maximum positive-quantity payoff is exactly zero there. Thus $U^a=0$ on both active conditional lower faces. Theorem \ref{EndogenousGapFlowGap} yields exact optimality and $a\in\mathcal A^*$.
\end{proof}

The first factor may, for example, be constant, in which case its rent condition reduces to $a_1\overline\delta_1\leq T$. The fixed fee is also recoverable from the lower endpoint $\ell_1$ of $I_1$:
$$T=\min_{\delta\in\Omega}\max_{n=1,\dots,K}\{V_n(\delta)-nc_K\}=\ell_1+\sum_{j=2}^{K-1}j\overline\delta_j+K(\overline\delta_K-c_K).$$
The first equality follows from the support definition, with equality at $(T,0,\dots,0,c_K)$. For the second, the smallest feasible first gap occurs when all other coordinates are at their upper endpoints, where quantity $K$ is the maximizing positive quantity. Neither equality requires $T=a_1R_1(T)$.

\section{Proofs for the extensions}

\begin{proof}[Proof of Theorem \ref{ConditionalBundleQuantityVerification}]
Fix a feasible IC and IR mechanism and an index $j$ with $a_j>0$. Conditional on $(\widetilde\delta_{-j},\omega)$, each menu payoff is affine in $\widetilde\delta_j$, with slope in $[0,j]$. IC represents truthful utility as the supremum of these payoffs. Hence utility has a finite convex Lipschitz extension along the conditional section and is absolutely continuous there. At every differentiability point, the selected payoff has the same derivative as utility. The positive conditional density therefore establishes \eqref{BundleEnvelope}. 
Conditional integration yields
$$
\begin{aligned}
\mathbb E[U\mid\widetilde\delta_{-j},\omega]
&=U\bigl(\widetilde\ell_j(\widetilde\delta_{-j},\omega),\widetilde\delta_{-j},\omega\bigr)\\
&\;+\mathbb E\!\left[R_j(\widetilde\delta_j\mid\widetilde\delta_{-j},\omega)U_{\widetilde\delta_j}\mid\widetilde\delta_{-j},\omega\right].
\end{aligned}
$$
The conditional mean of $R_j$ equals the integral of the conditional survival function and is bounded by the length of the conditional section. Compact support bounds these lengths uniformly. Together with $0\leq U_{\widetilde\delta_j}\leq j$, this establishes integrability of the rent terms. Nonnegative bundle values and payments, together with IR, imply $0\leq U,p\leq\max_{S\subseteq\mathcal K}v_S$ and $0\leq\sum_{S\ne\varnothing}\pi_S\omega_S\leq\max_{S\subseteq\mathcal K}v_S$. Conditional endpoint utilities satisfy the same bound by continuity.

Integrating the $j$th rent identity, multiplying by $a_j$, and summing over the active directions yields
$$
\mathbb E[U]=\mathcal B^a(U)+\mathbb E\!\left[\sum_{j:a_j>0}a_jR_j(\widetilde\delta_j\mid\widetilde\delta_{-j},\omega)\sum_{n=0}^K\min\{n,j\}\chi_n\right].
$$
Substituting this expression and $\sum_S\pi_Sv_S=\sum_n\chi_n\widetilde V_n-\sum_{S\ne\varnothing}\pi_S\omega_S$ into expected revenue proves \eqref{BundleValueRevenueDecomposition}. Both subtracted terms are nonnegative. Since $\chi$ is a probability vector, $\sum_n\chi_n\Psi_n^a\leq\max_n\Psi_n^a$. Moreover, $0\leq\max_n\Psi_n^a\leq\max_{S\subseteq\mathcal K}v_S$ because $\Psi_0^a=0$ and $\Psi_n^a\leq\widetilde V_n\leq\max_{S\subseteq\mathcal K}v_S$ for every $n$. This proves the revenue bound. 

Under the quantity-based menu, every selected bundle has zero shortfall. Condition \textup{(i)} of Theorem \ref{ConditionalBundleQuantityVerification} ensures that the selected quantity maximizes $\Psi_n^a$, and condition \textup{(ii)} ensures $\mathcal B^a(U^P)=0$. Offering every bundle $S$ at price $P_{|S|}$, together with nonallocation at zero payment, defines an IC and IR mechanism that attains the bound.
\end{proof}

\begin{proof}[Proof of Proposition \ref{NonadditiveQuantityVerification}]
The quantity-value vector and $\widetilde\delta$ determine each other through the difference definitions and \eqref{QuanitityValueEspression}, so Assumption \ref{BundleValueIndependence} implies $\widetilde\delta\perp\omega$. Conditioning additionally on $\omega$ therefore leaves the conditional distribution of $\widetilde\delta_j$ given $\widetilde\delta_{-j}$ unchanged. The conditional inverse-hazard ratios in Theorem \ref{ConditionalBundleQuantityVerification} consequently equal $R_j(\widetilde\delta_j)$.

The critical-value equations imply
$$
\Psi_n^a=(\widetilde V_n-P_n)+\sum_{j:a_j>0}a_j\min\{n,j\}[R_j(c_j)-R_j(\widetilde\delta_j)].
$$
Condition \textup{(i)} of Proposition \ref{NonadditiveQuantityVerification} therefore ensures that every payoff-maximizing quantity also maximizes $\Psi_n^a$, and condition \textup{(ii)} ensures zero utility at every active conditional lower endpoint. Theorem \ref{ConditionalBundleQuantityVerification} then establishes optimality.
\end{proof}

\begin{proof}[Verification of Example \ref{UniformBundleFreeDisposal}]
On each singleton-ranking region, the transformation to second-difference coordinates and the nonzero singleton shortfall has unit Jacobian. The support implies $\widetilde\delta_1\in[-3/4,3/2]$ and $\widetilde\delta_2\in[\min\{3/2-\widetilde\delta_1,(9/4-\widetilde\delta_1)/2\},3/2]$; the nonzero shortfall lies between zero and $\min\{3/2,9/8+\widetilde\delta_1/2\}$. By the change-of-variables formula and Fubini's theorem, the total volume is $63/16$. The region $\widetilde\delta_2<3/4$, where one object is preferred to two, has volume $27/32$. Normalizing by the total volume establishes the stated density and purchase probabilities. If the highest singleton value is less than $3/2$, the purchase restriction implies $v_{\{1,2\}}\geq9/4$; otherwise the second restriction implies $v_{\{1,2\}}\geq2\widetilde V_1-3/2\geq\widetilde V_1$. Hence free disposal holds.

Conditional on $\widetilde\delta_1$ and all shortfalls, $R_2(z)=3/2-z$. For $a=(0,1)$, the payoff and virtual-surplus differences between quantities two and one are $\widetilde\delta_2-3/4$ and $2\widetilde\delta_2-3/2$. At the conditional lower endpoint, utility is zero and the selected positive quantity's virtual surplus is $3/2-\widetilde\delta_1\geq0$. Both positive quantities' virtual surpluses increase with $\widetilde\delta_2$, so Theorem \ref{ConditionalBundleQuantityVerification} applies. The prices and probabilities yield revenue $117/56$.
\end{proof}

\begin{proof}[Proof of Proposition \ref{ConditionalGapFlowBound}]
By the law of iterated expectations, integrating over $Z_j$ in \eqref{StateDependentConditionalRentIdentity} yields the corresponding unconditional rent identity. Multiplying the $j$th identity by $a_j$ and summing over $j$ yields
$$\mathbb E_\gamma[U(\delta)]=\mathbb E_\gamma\left[\sum_{j=1}^K a_jQ_j(\delta)R_j(\delta_j,Z_j)\right]+\mathcal B^a(U).$$
Substitution into \eqref{RankGapExpectedRevenue}, together with \eqref{CumulativeRankAllocation}, yields
\begin{equation}\label{StateDependentRevenueIdentity}
\mathrm{REV}(M)=\mathbb E_\gamma\left[\sum_{n=0}^K\chi_n(\delta)\Psi_n^{a,\mathrm{sd}}(\delta)\right]-\mathcal B^a(U).
\end{equation}
Since $\chi$ is a probability vector and $\mathcal B^a(U)\geq0$, pointwise maximization establishes the bound.
\end{proof}

\begin{proof}[Proof of Proposition \ref{StateDependentExactOptimality}]
Condition \ref{StateDependentFiniteImplementation} implements the candidate quantity rule by the same argument as in Lemma \ref{FiniteVirtualSurplusImplementationLemma}. Hence \eqref{StateDependentRevenueIdentity} yields $\mathrm{REV}(M^{\mathrm{sd}})=B^{\mathrm{sd}}(a^{\mathrm{sd}})-\mathcal B^{a^{\mathrm{sd}}}(U^{\mathrm{sd}})$. Condition \ref{StateDependentZeroBoundary} eliminates the residual. Proposition \ref{ConditionalGapFlowBound} and Lemma \ref{RankMenuEquivalence} therefore imply the stated revenue equalities. Every other weight vector provides an upper bound on the revenue of this same feasible menu, so $a^{\mathrm{sd}}$ minimizes $B^{\mathrm{sd}}$ over $\Delta_K$.
\end{proof}

\begin{proof}[Proof of Remark \ref{MultipleUnitsRevenueEquivalence}]
By the taxation principle, represent any feasible multiple-unit mechanism as a menu $M=\{(x^\nu,p^\nu):\nu\in\mathcal I\}$, where $x^\nu_{\mathcal K_i}$ consists of the nonincreasing unit-allocation probabilities for product $i$. In the original environment, offer 
$$
M^{\mathrm{lab}}
:=
\left\{
(\rho x^\nu,p^\nu):
\nu\in\mathcal I,\;
\rho\in\Sigma_{\mathcal K},\;
\rho(\mathcal K_i)=\mathcal K_i\ \forall i
\right\}.
$$
The rearrangement inequality yields
$$
\max_{\substack{\rho\in\Sigma_{\mathcal K}\\
\rho(\mathcal K_i)=\mathcal K_i\ \forall i}}
\langle\theta,\rho x^\nu\rangle
=
\sum_{i=1}^N
\left\langle
\theta_{\mathcal K_i}^{\downarrow},
x^\nu_{\mathcal K_i}
\right\rangle,
$$
where $\theta_{\mathcal K_i}^{\downarrow}$ denotes the within-product ordering. Retain the multiple-unit mechanism's menu selection and resolve ties among maximizing within-group permutations by a fixed rule. The resulting labeled mechanism is IC and IR and preserves selected payments and expected revenue. Optimal multiple-unit revenue is therefore at most $\mathrm{REV}_{\mathcal Q}^{*}$. 

Conversely, the globally highest $n$ marginal values form a feasible product-quantity allocation, with ties resolved in favor of preceding units. Its value is $V_n(\delta)$. For each top-rank quantity lottery, offer at the same price every feasible assignment of each possible realized quantity to products. The buyer can select an assignment attaining $V_n(\delta)$ at every realized quantity, so the highest expected value of the option is $\sum_{n=0}^K\chi_nV_n(\delta)$ in both environments. This report-independent menu therefore preserves menu comparisons and selected payments. Lemma \ref{RankMenuEquivalence} establishes the reverse revenue inequality, and Theorem \ref{EndogenousGapFlowGap} establishes the optimality claim for $M^a$.
\end{proof}

\bibliographystyle{te}
\bibliography{Bibliography}

\end{document}